\title{A Polynomial Time Algorithm for Deciding Branching Bisimilarity on Totally Normed $\mathrm{BPA}$
}
\author{Chaodong He}
\institute{BASICS, Department of Computer Science, Shanghai Jiao Tong University}
\date{January 7, 2014}
\begin{document}

\maketitle

\begin{abstract}
Strong bisimilarity on normed BPA is polynomial-time decidable, while weak bisimilarity on totally normed BPA is $\mathrm{NP}$-hard.  It is natural to ask where the computational complexity of branching bisimilarity on totally normed BPA lies.  This paper confirms that this problem is polynomial-time decidable.  To our knowledge, in the presence of silent
transitions, this is the first bisimilarity checking algorithm on infinite state systems which runs in polynomial time.  This result spots an instance in which branching
bisimilarity and weak bisimilarity are both decidable but lie in different complexity classes (unless $\mathrm{NP}=\mathrm{P}$), which is not known before.

The algorithm takes the partition refinement approach and the final implementation can be thought of as a generalization of the previous algorithm of Czerwi\'{n}ski and Lasota~\cite{DBLP:conf/fsttcs/CzerwinskiL10,CzerwinskiPhD}.  However, unexpectedly, the correctness of the algorithm cannot be directly generalized from previous works, and the correctness  proof turns out to be subtle.  The proof depends on the existence of a carefully defined refinement operation fitted for our algorithm and the proposal of elaborately developed techniques, which are quite different from previous works.
\end{abstract}

\section{Introduction}

Basic process algebra (BPA)~\cite{Baeten:1991:PA:103272} is a fundamental model of infinite state systems, with its famous counterpart in the theory of formal languages: context free grammars in Greibach normal forms, which generate the entire context free languages.
In 1987, Baeten, Bergstra and Klop~\cite{BaetenBergstraKlop1987,DBLP:journals/jacm/BaetenBK93} proved a
surprising result that strong bisimilarity on normed BPA is decidable.  This result is in sharp contrast to the classical fact that language equivalence is undecidable for context free grammar~\cite{Hopcroft:1990:IAT:574901}.
After this remarkable discovery, decidability and complexity issues of bisimilarity checking on infinite state systems have been intensively investigated. See~\cite{DBLP:conf/concur/JancarM99,Burkart00verificationon,DBLP:journals/iandc/MollerSS04,Srba2004,Kucera2006}
for a number of surveys.

As regards to the strong bisimilarity checking on normed BPA,
H\"uttel and Stirling~\cite{DBLP:conf/lics/HuttelS91} improved
the result of Baeten, Bergstra and Klop using a more simplified proof by relating
the strong bisimilarity of two normed BPA processes to the existence of a successful
tableau system.  Later, Huynh and Tian~\cite{DBLP:journals/tcs/HuynhT94} showed
that the problem is in $\Sigma_{2}^{\mathrm{P}}$, the second level of the polynomial hierarchy.
Before long, another significant discovery was made by Hirshfeld,
Jerrum and Moller~\cite{DBLP:journals/tcs/HirshfeldJM96} who showed that the problem can even be decided in  polynomial
time, with the complexity $\mathcal{O}(N^{13})$.
The running time was later improved~\cite{DBLP:conf/mfcs/LasotaR06,DBLP:conf/fsttcs/CzerwinskiL10}. All these algorithms take the approach of partition refinement, relying on the unique decomposition property and some
efficient way of equality checking on compressed long strings.
It deserves special mention that Czerwi\'{n}ski and Lasota~\cite{DBLP:conf/fsttcs/CzerwinskiL10} create a different refinement scheme. This refinement scheme was previously used in developing an polynomial-time algorithm of checking strong bisimilarity on normed basic parallel processes (normed $\mathrm{BPP}$)~\cite{DBLP:journals/mscs/HirshfeldJM96}. In this way  Czerwi\'{n}ski and Lasota improve the running time to $\mathcal{O}(N^5)$.  Hitherto, the best algorithm was reported in~\cite{CzerwinskiPhD}, whose running time is  $\mathcal{O}(N^4\mathrm{polylog}(N))$.

In the presence of silent actions the picture is less clear.  Even the decidability for weak bisimilarity is still open.  A remarkable discovery is made by Fu~\cite{DBLP:conf/icalp/Fu13} recently that branching bisimilarity~\cite{GlabbeekW96}, a standard refined alternative of weak bisimilarity, is decidable on normed BPA. Very recently, Czerwi\'{n}ski and Jan\v{c}ar confirm this problem to be in $\mathrm{NEXPTIME}$~\cite{DBLP:journals/corr/CzerwinskiJ14}.  The current best lowerbound for weak bisimilarity  is the $\mathrm{EXPTIME}$-hardness established by Mayr~\cite{Mayr2005}, whose proof can be slightly modified to show
the $\mathrm{EXPTIME}$-hardness for branching bisimilarity as well.

In retrospect one cannot help thinking that more attention should
have been paid to the branching bisimilarity. Going back to the original motivation
to equivalence checking, one would agree that a specification $spec$ normally
contains no silent actions because silent actions are about how-to-do. It follows
that $spec$ is weakly bisimilar to an implementation $impl$ if and only if $spec$ is branching
bisimilar to $impl$ (Theorem~5.8.18 in~\cite{Baeten:1991:PA:103272}).  In addition, in majority of practical examples, the branching
bisimilarity and the weak bisimilarity coincide.  What these observations tell us is that as far as verification is
concerned the branching bisimilarity ought to play a bigger role than the weak
bisimilarity, especially in the situations where branching bismilarity is easily decided.

One major difficulty of checking weak or branching bisimilarity on normed $\mathrm{BPA}$ stems from the lack of nice structural properties such as unique decomposition property.  By forcing the final action of every process to be observable, we have an important subset of normed $\mathrm{BPA}$, called totally normed $\mathrm{BPA}$, in which unique decomposition property still holds for branching bisimilarity.  The bisimilarity checking on totally normed $\mathrm{BPA}$ also has a long history.
In 1991, H\"uttel~\cite{DBLP:conf/cav/Huttel91} repeated the tableau construction developed in~\cite{DBLP:conf/lics/HuttelS91} for branching
bisimilarity on totally normed BPA. Although H\"uttel's construction is not sound for weak bisimilarity, the relevant decidability can also be established~\cite{DBLP:journals/entcs/Hirshfeld96}.  For the lower bound, $\mathrm{NP}$-hardness is established by
St\v{r}\'{i}brn\'{a}~\cite{DBLP:journals/entcs/Stribrna98} for weak bisimilarity via a reduction from the knapsack problem.
By inspecting St\v{r}\'{i}brn\'{a}'s proof, we are aware that the $\mathrm{NP}$-hardness still holds for any other bisimilarity, such as delay bisimilarity,  $\eta$-bisimilarity, and even  quasi-branching bisimilarity~\cite{GlabbeekW96}, except for  branching bisimilarity.
The requirement of branching bisimilarity that change-of-state silent actions must be explicitly bisimulated makes it impossible to realize nondeterminism by designing some gadgets via a bisimulation game.   These crucial observations inspire us to rethink the possibility of designing more efficient algorithm for the problem of checking branching bisimilarity on totally normed $\mathrm{BPA}$.

The paper provides a polynomial time algorithm for checking branching bisimilarity on totally normed $\mathrm{BPA}$.  Therefore an instance is spotted that branching bisimilarity and weak bisimilarity are both decidable but lie in different complexity classes.

For brevity, in the rest of this paper, `branching bisimilarity' will usually be referred to as `bisimilarity'.   We avoid using the term `strong bisimilarity', since the strong bisimilarity can be interpreted as the bisimilarity for `realtime' processes. A realtime process is a process which can perform no silent action.

The algorithm developed in this paper takes a similar partition refinement approach and the framework adopted in~\cite{DBLP:conf/fsttcs/CzerwinskiL10,CzerwinskiPhD}, which was designed to decide bisimilarity for realtime normed $\mathrm{BPA}$. This algorithm is called CL algorithm in this paper.  The final efficient implementation of our algorithm is a generalized version of CL algorithm in the sense that, for realtime systems, our algorithm and CL algorithm are essentially the same.

Our algorithm heavily relies on the technique of dynamic programming, which makes our implementation has the same computational complexity as CL algorithm.   Although our algorithm seems very similar to the previous one,  the technical details, including the definition of expansion and refinement operation, the theoretical development of its correctness are quite difficult than the previous CL algorithm.

Without doubt, the consecutive silent transitions in the definition of branching bisimilarity cause severe problems in two aspects: the correctness and the efficiency. Note that the totally normedness guarantees that the number of consecutive silent actions are bounded by the number of constants. It is not hard to use this observation, together with the game theoretical view of branching bisimilarity, to design an algorithm which runs in polynomial space.   However, the consecutive silent actions, which cause nondeterminism, did make checking branching bisimulation property take exponential time if the naive way was taken.  The only way to overcome this difficulty is a proper usage of the technique of dynamic programming.  When consecutive silent actions are eliminated by means of dynamic programming, we have a severe problem: why is the resulting algorithm still correct?

In the situation of CL algorithm for realtime normed $\mathrm{BPA}$, there is a pre-defined refinement operation, $\mathsf{Ref}(\equiv)$. In that situation, we had a canonical definition of expansion and a canonical definition of relative decreasing bisimilarity (in our terminology). The final refinement operation $\mathsf{Ref}(\equiv)$ was defined as the decreasing bisimilarity wrt.~the expansion of $\equiv$.  The refined equivalence relation was then constructed by a greedy algorithm, in each step of which two memberships were efficiently tested.  Therefore, the correctness of CL algorithm was comparatively obvious.

Unfortunately it is unlikely, if not impossible, to generate the above proof structure for CL algorithm to our algorithm, because there is no clear way to define the expansion relation like that in CL algorithm. Note that the expansion relation should be both correct and efficient. We had several aborted attempts before finally we decided to take some other ways.

The correctness of CL algorithm depends on a clearly defined refinement operation which relies on two steps of operations: the expansion operation and the relative decreasing bisimilarity.  Our crucial insight is that, there is no need to separate these two steps of operations.    The main technical line is briefly outlined below.  It takes several stages:
\begin{itemize}
\item
 At first, for realtime systems, we define  the refinement operations by a way of combining the two steps of operation which was taken in CL algorithm into a cohesive whole.  In this way, we have noticed that we defines exactly the same refinement operation as that in CL algorithm.

\item
Then,
the refinement operation defined in the above way  is smoothly generated for the style of branching bisimilarity. In this stage, our attention is centred on the property of the refined relation. The efficiency is never cared about.  We prove that the refinement operation preserves  congruence and the unique decomposition property.

\item
Then a characterization theorem is established for the refined congruence.  In this characterization, the consecutive silent actions are completely eliminated. Thus the problem of efficiency is mainly solved. Using this characterization, the correctness proof for realtime systems can be obtained.  But for systems with silent actions, it is not enough.

\item
Finally, the proof is finished by developing a simpler characterization  which corresponds to our algorithm directly. In this stage,  a special property of branching bisimilarity for processes in prime decomposition turns out to be quite useful.
\end{itemize}

The rest of the paper is organized as follows. Section~\ref{sec:Preliminaries} lays down the preliminaries. Section~\ref{sec:finite_representations} focuses on the unique decomposition property for branching bisimilarity on totally normed BPA. Then we describe our algorithm in Section~\ref{sec:naive-algorithm}. The suitable definition of refinement steps are discussed in Section~\ref{sec:refinement_steps},  and the correctness proof are provided in Section~\ref{sec:correctness}.  Finally, Section~\ref{sec:remark} gives additional remarks.


\section{Preliminaries}\label{sec:Preliminaries}
\subsubsection{Basic Process Algebra}
A {\em basic process algebra} ($\mathrm{BPA}$) system is a triple $(\mathbf{C}, \mathcal{A}, \Delta)$, where $\mathbf{C} = \{X_1, \ldots, X_n\}$ is a finite set of process constants, $\mathcal{A}$ is a finite set of actions, and $\Delta$ is a finite set of transition rules.
The {\em processes}, ranged over by $\alpha,\beta,\gamma,\delta$,  are generated by the following grammar:
\[
\alpha \  \Coloneqq  \  \epsilon  \  \mid \  X \  \mid \  \alpha_1 \cdot \alpha_2.
\]
The syntactic equality is denoted by $=$.
We assume that the sequential composition $\alpha_1\cdot\alpha_2$ is associative up to $=$ and $\epsilon \cdot \alpha = \alpha \cdot \epsilon = \alpha$. Sometimes $\alpha \cdot \beta$ is shortened as $\alpha\beta$.  The set of processes is exactly $\mathbf{C}^{*}$, the strings over $\mathbf{C}$.
There can be a special symbol $\tau$ in $\mathcal{A}$ for silent transition.  Typically, $\ell$ is used to denote actions, while $a$ are used to denote visible (i.e. non-silent) actions.
The transition rules in $\Delta$ are of the form $X \stackrel{\ell}{\longrightarrow} \alpha$.
The following labelled transition rules define the operational semantics of the processes.
\[
\begin{array}{c}
    \cfrac{X\stackrel{\ell}{\longrightarrow}P\in\Delta}{X\stackrel{\ell}{\longrightarrow}\alpha} \
                    \qquad \cfrac{ \alpha\stackrel{\ell}{\longrightarrow}\alpha'}{\alpha\cdot\beta \stackrel{\ell}{\longrightarrow} \alpha' \cdot \beta}
 \end{array}
\]
The operational semantics is structural, meaning that $\alpha \cdot\beta\stackrel{\ell}{\longrightarrow}\alpha'\cdot\beta$ whenever $\alpha\stackrel{\ell}{\longrightarrow}\alpha'$.
We write $\Longrightarrow $ for the reflexive transitive closure of $\stackrel{\tau}{\longrightarrow}$, and $\stackrel{\widehat{\ell}}{\Longrightarrow}$ for $\Longrightarrow\stackrel{\ell}{\longrightarrow}\Longrightarrow$ if $\ell\ne\tau$ and for $\Longrightarrow$ otherwise.

A process $\alpha$ is {\em normed} if $\alpha \stackrel{\ell_1}{\longrightarrow} \dots  \stackrel{\ell_n}{\longrightarrow} \epsilon$ for some $\ell_1, \dots, \ell_n$. A process $\alpha$ is {\em totally normed} if it is normed, and moreover, $\ell_n \neq \tau$ whenever $\alpha  \stackrel{\ell_1}{\longrightarrow}  \dots  \stackrel{\ell_n}{\longrightarrow} \epsilon$.
A $\mathrm{BPA}$ definition  $(\mathbf{C}, \mathcal{A}, \Delta)$ is (totally) normed if all processes defined in it are (totally) normed. We write (t)(n)BPA for the (totally)(normed) basic process algebra model. In other words, a $\mathrm{tnBPA}$ system is a $\mathrm{nBPA}$ system in which rules of the form $X  \stackrel{\tau}{\longrightarrow} \epsilon$ are forbidden.

We call a BPA system {\em realtime} if $\tau \not\in \mathcal{A}$. That is to say, a realtime system can not perform silent actions.
Clearly, realtime totally normed BPA is exactly realtime normed BPA.

\subsubsection{Bisimulations and Bisimilarities}
In the presence of silent actions two well known process equalities are the branching bisimilarity~\cite{GlabbeekW96} and the weak bisimilarity~\cite{Milner1989}.

\begin{definition}\label{def:beq}
Let $\mathcal{R}$ be a relation on processes. $\mathcal{R}$ is a {\em branching bisimulation},  if the following hold whenever $\alpha \mathcal{R} \beta$:
\begin{enumerate}
\item
If $\alpha \stackrel{\ell}{\longrightarrow} \alpha'$, then either
 \begin{enumerate}
  \item
  $\ell = \tau$ and $\alpha'\mathcal{R}\beta$; or

 \item
    $\beta \Longrightarrow \beta''\stackrel{\ell}{\longrightarrow} \beta'$ and $\alpha'\mathcal{R} \beta'$ and $\alpha\mathcal{R}\beta''$ for some $\beta',\beta''$.
 \end{enumerate}

\item
If  $\beta\stackrel{\ell}{\longrightarrow}\beta'$, then either
 \begin{enumerate}
 \item
    $\ell=\tau$ and
    $\alpha\mathcal{R}\beta'$; or

 \item
 $\alpha\Longrightarrow \alpha''\stackrel{\ell}{\longrightarrow}
    \alpha'$ and $\alpha'\mathcal{R} \beta'$ and $\alpha'' \mathcal{R} \beta$ for some $\alpha',\alpha''$.
    \end{enumerate}
\end{enumerate}
The {\em branching bisimilarity}  $\simeq$ is the largest branching bisimulation.
\end{definition}

\begin{definition}
A relation $\mathcal{R}$ is a {\em weak bisimulation} if the following are valid:
\begin{enumerate}
\item
Whenever $\alpha \mathcal{R}\beta$ and $\alpha \stackrel{\ell}{\longrightarrow} \alpha'$, then $\beta\stackrel{\widehat{\ell}}{\Longrightarrow}\beta'$ and $\alpha'\mathcal{R}\beta'$ for some $\beta'$.
\item
Whenever $\alpha\mathcal{R}\beta$ and $\beta \stackrel{\ell}{\longrightarrow} \beta'$, then $\alpha\stackrel{\widehat{\ell}}{\Longrightarrow}\alpha'$ and $\alpha'\mathcal{R} \beta'$ for some $\alpha'$.
\end{enumerate}
The {\em weak bisimilarity} $\approx$ is the largest weak bisimulation.
\end{definition}

Both $\simeq$ and $\approx$ are congruence relations for (t)nBPA. We remark that transitivity of $\simeq$ is not straightforward according to Definition~\ref{def:beq}, because the branching bisimulation $\mathcal{R}$ defined in Definition~\ref{def:beq} need not be transitive~\cite{DBLP:journals/ipl/Basten96}.  To solve this problem, van Glabbeek and Weijland~\cite{GlabbeekW96} introduce a slightly different notion called {\em semi-branching bisimulation}.

\begin{definition}\label{def:semi-beq}
Let $\mathcal{R}$ be a relation on processes. $\mathcal{R}$ is a {\em semi-branching bisimulation}  if the following hold whenever $\alpha \mathcal{R} \beta$:
\begin{enumerate}
\item
If $\alpha \stackrel{\ell}{\longrightarrow} \alpha'$, then either
 \begin{enumerate}
  \item
  $\ell = \tau$ and $\beta \Longrightarrow \beta'$ for some $\beta'$ such that $\alpha\mathcal{R}\beta'$ and $\alpha'\mathcal{R}\beta'$; or

 \item
    $\beta \Longrightarrow \beta''\stackrel{\ell}{\longrightarrow} \beta'$ and $\alpha'\mathcal{R} \beta'$ and $\alpha\mathcal{R}\beta''$ for some $\beta',\beta''$.
 \end{enumerate}

\item
If  $\beta\stackrel{\ell}{\longrightarrow}\beta'$, then either
 \begin{enumerate}
 \item
    $\ell=\tau$ and
     $\alpha \Longrightarrow \alpha'$ for some $\alpha'$ such that $\alpha'\mathcal{R}\beta$ and $\alpha'\mathcal{R}\beta'$; or

 \item
 $\alpha\Longrightarrow \alpha''\stackrel{\ell}{\longrightarrow}
    \alpha'$ and $\alpha'\mathcal{R} \beta'$ and $\alpha'' \mathcal{R} \beta$ for some $\alpha',\alpha''$.
 \end{enumerate}
\end{enumerate}
\end{definition}

Then it is easy to establish the following facts:
\begin{enumerate}
\item
A branching bisimulation is a semi-branching bisimulation.

\item
A semi-branching bisimulation is transitive.

\item
The largest semi-branching bisimulation is an equivalence.

\item
The largest semi-branching bisimulation is a branching bisimulation.
\end{enumerate}

Now the largest semi-branching bisimulation is the same as $\simeq$, the largest branching bisimulation.

If the involved system is realtime,  then the branching bisimilarity and the weak bisimilarity are coincident. They are called the {\em strong bisimilarity}  and are denoted by $\sim$ in literature.
In this paper, branching bisimilarity is often abbreviated as {\em bisimilarity}. If the system is realtime, we also use the term bisimilarity to indicate strong bisimilarity.  However, we tend to use the term `branching bisimilarity' in the situation of discussing on its relationship with weak bisimilarity.

The following lemma, first noticed by van Glabbeek and Weijland~\cite{GlabbeekW96}, plays a fundamental role in the study of bisimilarity.
\begin{lemma}\label{computation-lemma}
If $\alpha \Longrightarrow \alpha'\Longrightarrow  \alpha'' \simeq \alpha$ then $\alpha'\simeq \alpha$.
\end{lemma}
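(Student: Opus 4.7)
The plan is to exhibit a semi-branching bisimulation $\mathcal{R}$ containing the pair $(\alpha',\alpha)$; since $\simeq$ coincides with the largest semi-branching bisimulation (as recalled just after Definition~\ref{def:semi-beq}), this gives $\alpha'\simeq\alpha$. Guided by the shape of the hypothesis, the natural candidate is
\[
\mathcal{R} \;=\; \simeq \ \cup \ \bigl\{(\gamma,\beta),(\beta,\gamma)\ :\ \exists \beta''.\ \beta \Longrightarrow \gamma \Longrightarrow \beta'' \text{ and } \beta'' \simeq \beta\bigr\}.
\]
The pair $(\alpha',\alpha)$ lies in $\mathcal{R}$ directly from the assumption, and $\mathcal{R}$ is symmetric and reflexive by construction. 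Everything then reduces to verifying the two transfer conditions of a semi-branching bisimulation for a representative pair $(\gamma,\beta)\in\mathcal{R}$ witnessed by $\beta \Longrightarrow \gamma \Longrightarrow \beta''$ with $\beta''\simeq\beta$.

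I would first handle a challenge $\beta\stackrel{\ell}{\longrightarrow}\beta_1$ from the right component. Here the strategy is to route through $\beta''$: by $\beta\simeq\beta''$ the semi-branching property of $\simeq$ furnishes either a stuttering response $\beta'' \Longrightarrow \beta''_0$ with $\beta\simeq\beta''_0$ and $\beta_1\simeq\beta''_0$ (when $\ell=\tau$), or a proper response $\beta'' \Longrightarrow \beta''_2 \stackrel{\ell}{\longrightarrow}\beta''_0$ with $\beta''_0\simeq\beta_1$ and $\beta''_2\simeq\beta$. Prepending $\gamma \Longrightarrow \beta''$ yields the corresponding response from $\gamma$, and all intermediate states are related to the appropriate $\beta$ or $\beta_1$ through $\simeq\subseteq\mathcal{R}$.

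The delicate direction is a challenge $\gamma\stackrel{\ell}{\longrightarrow}\gamma_1$ from the left. When the transition extends an existing $\tau$-path toward $\beta''$ (that is, $\ell=\tau$ and $\gamma_1\Longrightarrow\beta''$), the trivial stuttering response from $\beta$ works: take intermediate state $\beta$ itself, observe $\gamma\mathcal{R}\beta$ is given, and verify $(\gamma_1,\beta)\in\mathcal{R}$ using the same witness $\beta''$ against the extended prefix $\beta\Longrightarrow\gamma\stackrel{\tau}{\longrightarrow}\gamma_1$. When the transition instead branches off the distinguished path, one must still exhibit a response and keep the resulting pair inside $\mathcal{R}$; this is the main obstacle of the proof. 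I would address it by an auxiliary induction on the length of the remaining segment $\gamma\Longrightarrow\beta''$, peeling off one $\tau$-step of the path at a time and invoking the semi-branching transfer between $\beta$ and $\beta''$ to convert the rogue transition of $\gamma$ into a matching move of $\beta$ (possibly via stuttering) with intermediate states still related by $\simeq$. The crucial feature being exploited is exactly the stuttering clause~(1a) of semi-branching bisimulation, which is what makes silent transitions along a certified $\tau$-loop absorbable without changing the equivalence class.
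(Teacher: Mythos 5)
First, a remark on context: the paper itself gives no proof of this lemma --- it is quoted from van Glabbeek and Weijland --- so your attempt has to be judged on its own terms. Your setup is the standard one: the relation $\mathcal{R}$ you write down is exactly the right candidate, and your treatment of a challenge $\beta \stackrel{\ell}{\longrightarrow} \beta_1$ from the right component (route through the witness $\beta''$ via $\gamma \Longrightarrow \beta''$ and $\beta'' \simeq \beta$, falling back on the semi-branching stuttering clause when $\ell=\tau$) is correct.

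The gap sits exactly where you flag ``the main obstacle''. For a challenge $\gamma \stackrel{\ell}{\longrightarrow} \gamma_1$ that leaves the distinguished path, the argument you sketch does not go through: an induction on the length of the remaining segment $\gamma \Longrightarrow \beta''$ gives you, as hypothesis, information about how transitions of states \emph{closer to} $\beta''$ are matched, which says nothing about the transitions of $\gamma$ itself; and ``the semi-branching transfer between $\beta$ and $\beta''$'' only speaks about transitions performed by $\beta$ or by $\beta''$, so it cannot be invoked on a transition of the intermediate state $\gamma$ --- there is no mechanism for ``converting the rogue transition of $\gamma$ into a matching move of $\beta$'' along these lines. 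The irony is that the case you labour over is the easy one and needs no induction at all: since $\beta \Longrightarrow \gamma$ is part of the hypothesis, $\beta$ answers $\gamma \stackrel{\ell}{\longrightarrow} \gamma_1$ simply by performing $\beta \Longrightarrow \gamma \stackrel{\ell}{\longrightarrow} \gamma_1$. Clause~1(b) of Definition~\ref{def:beq} (or of Definition~\ref{def:semi-beq}) only obliges you to relate the state immediately before the matched action, and the state immediately after it, to $\gamma$ and $\gamma_1$ respectively; here these states are $\gamma$ and $\gamma_1$ themselves, so reflexivity of $\mathcal{R}$ (which contains $\simeq$) closes the case, and the intermediate states of $\beta \Longrightarrow \gamma$ carry no proof obligation whatsoever. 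With that one-line replacement your proof is complete; as written, its central case is unresolved.
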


Let $\approxeq$ be a process equivalence.
A silent action $\alpha\stackrel{\tau}{\longrightarrow}\alpha'$ is {\em state-preserving} with regards to $\approxeq$ if $\alpha'\approxeq \alpha$;
it is {\em change-of-state} with regards to $\approxeq$ if $\alpha'\not\approxeq \alpha$.
Branching bisimilarity strictly refines weak bisimilarity in the sense that only state-preserving silent actions can be ignored; a change-of-state must be explicitly bisimulated.
Suppose that $\alpha \simeq \beta$ and $\alpha \stackrel{\ell}{\longrightarrow} \alpha'$ is matched by the transition sequence $\beta \stackrel{\tau}{\longrightarrow} \dots \stackrel{\tau}{\longrightarrow} \beta^{i} \stackrel{\tau}{\longrightarrow} \dots \stackrel{\tau}{\longrightarrow}\beta'' \stackrel{\ell}{\longrightarrow} \beta'$.
By definition one has $\alpha \simeq \beta''$.
It follows from Lemma~\ref{computation-lemma} that $\alpha \simeq \beta^{i}$, meaning that all silent actions in $\beta\Longrightarrow \beta''$ are necessarily state-preserving.  This property fails for the weak bisimilarity as the following example demonstrates.
\begin{example}\label{example}
Consider the $\mathrm{tnBPA}$ system whose rules are defined by
\[
\{X \stackrel{b}{\longrightarrow} \epsilon, \ X \stackrel{\tau}{\longrightarrow} X', \ X'\stackrel{a}{\longrightarrow} \epsilon, \ X \stackrel{a}{\longrightarrow} \epsilon;\ Y \stackrel{b}{\longrightarrow} \epsilon, \ Y\stackrel{\tau}{\longrightarrow} Y', \ Y' \stackrel{a}{\longrightarrow} \epsilon\}.
\]
One has $X \approx Y$.
However $ X \not\simeq Y$ since $Y\not\simeq Y'$.
\end{example}

\subsubsection{Norm}
Given an tnBPA system $(\mathbf{C}, \mathcal{A}, \Delta)$.
We relate a natural number $\mathtt{norm}(X)$, the {\em norm} of $X$, to every constant $X$, defined as the least $k$ such that $X \Longrightarrow \stackrel{a_1}{\longrightarrow} \Longrightarrow \dots  \Longrightarrow\stackrel{a_k}{\longrightarrow} \epsilon$.  Silent actions contribute zero to norm.   $\mathtt{norm}$ is extended to processes by taking $\mathtt{norm}(\epsilon) = 0$ and $\mathtt{norm}(X \cdot \alpha) = \mathtt{norm}(X) + \mathtt{norm}(\alpha)$.

\begin{lemma}\label{lem:normal_one}
In a $\mathrm{tnBPA}$ system, $\mathtt{norm}(\alpha) = 0$ if and only if $\alpha = \epsilon$.
\end{lemma}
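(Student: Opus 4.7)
The reverse direction is immediate from the definition: $\mathtt{norm}(\epsilon) = 0$ by stipulation. So the real content is the forward direction, which I would prove contrapositively: if $\alpha \neq \epsilon$ then $\mathtt{norm}(\alpha) \geq 1$.

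The plan is first to establish the key single-constant case: for every $X \in \mathbf{C}$, $\mathtt{norm}(X) \geq 1$. Suppose for contradiction that $\mathtt{norm}(X) = 0$. Unwinding the definition, the least $k$ such that $X \Longrightarrow \stackrel{a_1}{\longrightarrow} \Longrightarrow \cdots \Longrightarrow \stackrel{a_k}{\longrightarrow} \epsilon$ is zero, so there exists a witness $X \Longrightarrow \epsilon$ consisting only of $\tau$-transitions. Since $X$ is a constant (hence $X \neq \epsilon$ syntactically), this sequence must contain at least one transition, and every transition in it is labelled $\tau$; in particular the final transition into $\epsilon$ is $\tau$. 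But this contradicts the totally normed assumption, which forbids a $\tau$-transition as the last step of any reduction to $\epsilon$. Hence $\mathtt{norm}(X) \geq 1$.

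With this in hand, the general case follows from the additivity clause $\mathtt{norm}(X \cdot \alpha) = \mathtt{norm}(X) + \mathtt{norm}(\alpha)$ in the definition. Writing an arbitrary non-empty $\alpha$ as $X_{i_1} X_{i_2} \cdots X_{i_k}$ with $k \geq 1$ and iterating additivity, one obtains
\[
\mathtt{norm}(\alpha) \;=\; \sum_{j=1}^{k} \mathtt{norm}(X_{i_j}) \;\geq\; k \;\geq\; 1,
\]
so $\mathtt{norm}(\alpha) \neq 0$, as required.

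There is no real obstacle here; the only point that needs a moment's care is the very first step, namely being explicit that the $\tau$-closure $\Longrightarrow$ from a non-empty process $X$ to $\epsilon$ cannot be the empty chain and therefore has a last $\tau$-transition into $\epsilon$, which is precisely what total normedness rules out. Everything after that is a direct application of the additive extension of $\mathtt{norm}$ to sequential compositions.
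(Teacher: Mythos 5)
Your proof is correct. The paper states this lemma without proof, treating it as immediate from the definitions, and your argument is exactly the natural unwinding it has in mind: a constant of norm $0$ would reach $\epsilon$ by a nonempty chain of $\tau$-transitions whose last step violates total normedness, and additivity of $\mathtt{norm}$ then handles arbitrary nonempty strings.
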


A transition $\alpha \stackrel{\ell}{\longrightarrow} \alpha'$ is {\em decreasing}, denoted by $\alpha \stackrel{\ell}{\longrightarrow}_{\mathrm{dec}} \alpha'$ if either
$\ell \neq \tau$ and $\mathtt{norm}(\alpha) = \mathtt{norm}(\alpha') + 1$, or
$\ell = \tau$ and $\mathtt{norm}(\alpha) = \mathtt{norm}(\alpha')$.  The notion of
decreasing transitions formalizes the intuition that a transition can be extended to a path which witnesses the norm of $\alpha$.

\subsubsection{Standard Input}

For technical convenience, we require the input $\mathrm{tnBPA}$ system $(\mathbf{C}, \mathcal{A}, \Delta)$ to be {\em standard}, which have the following two additional properties:
\begin{enumerate}
\item
The constants in $\mathbf{C} = \{ X_i\}_{i=1}^{n} $ are ordered by non-decreasing norm, that is:
\[
\mathtt{norm}(X_1) \leq \mathtt{norm}(X_2) \leq \ldots \leq \mathtt{norm}(X_n).
\]

\item
 Let $\mathbf{C}_i$ be the set $\{X_1, X_2, \ldots, X_i\}$ for $i=0,1,\ldots, n$. In particular, $\mathbf{C}_0 = \emptyset$ and $\mathbf{C}_n = \mathbf{C}$. Assume $X_i \stackrel{\ell}{\longrightarrow}_{\mathrm{dec}} \alpha$, we need the property $\alpha \in \mathbf{C}_{i-1}^{*}$.   This property does not hold in general because of the existence of loops like $X_i \Longrightarrow X_j \Longrightarrow X_i$.  In this case we have $X_i \simeq X_j$ by Lemma~\ref{computation-lemma}, and we can transform the system by contracting $X_i$ and $X_j$ into one constant (removing $X_j$ and substituting all occurrences of $X_j$ in $\Delta$ by $X_i$) and eliminating the loop rules.   All loops can be eliminated in this way.  (By totally normedness, $X \stackrel{\tau}{\Longrightarrow}_{\mathrm{dec}} X \cdot Y$ is impossible.)  Afterwards, we specify a partial order ${\preceq} \in \mathbf{C} \times \mathbf{C}$ such that $X \prec X'$ if and only if either $\mathtt{norm}(X) < \mathtt{norm}(X')$ or $X' \Longrightarrow_{\mathrm{dec}} X$. Then the order of constants are chosen to be any total order which extends $\prec$. These works can be done by computing the `dependency graph' and then calling an algorithm for topological sort.
\end{enumerate}

The size of a $\mathrm{tnBPA}$ system $(\mathbf{C}, \mathcal{A}, \Delta)$ is denoted by $|\Delta|$. A procedure is said to be {\em efficient} if it runs in polynomial time.  The above discussion confirms that any $\mathrm{tnBPA}$ system can be efficiently transformed to a standard one with no size growing.

\begin{lemma}\label{lem:normal_two}
For every $\mathrm{tnBPA}$ system $(\{X_1, X_2, \ldots. X_n\}, \mathcal{A}, \Delta)$, there is a standard $\mathrm{tnBPA}$ system $(\{X_1', X_2',\ldots,X_m'\}, \mathcal{A}, \Delta')$ computable in at most $\mathcal{O}(|\Delta|^2)$ time, in which $m \leq n$ and $|\Delta'| \leq  |\Delta|$.
\end{lemma}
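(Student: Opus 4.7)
The plan is to follow the constructive recipe sketched in the paragraph immediately preceding the lemma, and verify in turn soundness (branching bisimilarity is preserved), standardness (properties (1) and (2)), and the quantitative bounds. Structurally the procedure has three stages: compute norms, quotient out $\tau$-cycles between constants, and topologically sort the surviving constants by a dependency order refining norm.

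First I would compute $\mathtt{norm}(X_i)$ for every $X_i \in \mathbf{C}$ by a standard fixed-point saturation on $\Delta$, which fits comfortably in $O(|\Delta|^2)$. Knowing the norms I can identify the decreasing transitions and build the directed graph $G$ on $\mathbf{C}$ whose edges are the rules $X \stackrel{\tau}{\longrightarrow} Y$ with $Y$ a single constant. Total normedness, which forbids $Y \stackrel{\tau}{\longrightarrow} \epsilon$, also prevents any silent detour $X \stackrel{\tau}{\longrightarrow} \alpha \Longrightarrow Y$ through a multi-constant $\alpha$ from ever collapsing back to a single constant, so on the set of single constants $\Longrightarrow$ is exactly the reachability relation of $G$. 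I would then contract every strongly connected component of $G$ to a chosen representative, substitute textually in $\Delta$, and delete duplicate rules and self-loops $X \stackrel{\tau}{\longrightarrow} X$. Soundness of the collapse follows from Lemma~\ref{computation-lemma}: for any two constants $X,Y$ in a common SCC we have $X \Longrightarrow Y \Longrightarrow X$, and instantiating the lemma with $\alpha = \alpha'' = X$, $\alpha' = Y$ yields $X \simeq Y$. Since rewriting and deletion only shrink the rule multiset, $m \leq n$ and $|\Delta'| \leq |\Delta|$.

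On the quotient system I would define $\prec$ exactly as in the excerpt and verify it is a strict partial order. Antisymmetry can only fail through $X' \Longrightarrow_{\mathrm{dec}} X \Longrightarrow_{\mathrm{dec}} X'$ with $X \neq X'$, but this resurrects a $\tau$-cycle which was just eliminated; transitivity is a four-case check combining transitivity of $<$ on norms with the transitivity and norm-preservation of $\Longrightarrow_{\mathrm{dec}}$. A topological sort of $\prec$, linear in $|\Delta|$, yields the enumeration $X_1', \ldots, X_m'$; clause (1) is immediate since $\prec$ refines strict inequality of norm. For clause (2), given $X_i' \stackrel{\ell}{\longrightarrow}_{\mathrm{dec}} \alpha$ and a constant $Y$ in $\alpha$, the case $\ell \neq \tau$ gives $\mathtt{norm}(Y) < \mathtt{norm}(X_i')$ directly from the decreasing equation and the positivity of constant norms (Lemma~\ref{lem:normal_one}). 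When $\ell = \tau$ either $|\alpha| \geq 2$ and positivity again forces $\mathtt{norm}(Y) < \mathtt{norm}(X_i')$, or $\alpha = Y$ and $X_i' \Longrightarrow_{\mathrm{dec}} Y$; in every case $Y \prec X_i'$, so $Y \in \mathbf{C}_{i-1}'$.

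The expected main obstacle is the second stage: justifying that $\Longrightarrow$ on single constants really is captured by the ``small'' graph $G$, so that a single SCC contraction suffices and no further $\tau$-loops survive to spoil antisymmetry of $\prec$. The remaining budget analysis, dominated by norm computation at $O(|\Delta|^2)$ with SCC computation, substitution, and topological sort all linear in $|\Delta|$, is then routine.
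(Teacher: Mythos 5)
Your proposal is correct and follows essentially the same route as the paper, which does not give a separate proof of this lemma but relies on the construction sketched in the preceding ``Standard Input'' paragraph (norm computation, contraction of silent loops justified by Lemma~\ref{computation-lemma}, and topological sorting of the order $\prec$). Your formalization of the two points the paper leaves implicit --- that total normedness confines $\Longrightarrow$ between single constants to the single-constant $\tau$-rule graph, and that $\prec$ is a strict partial order after the contraction --- is accurate and fills in exactly what the paper glosses over.
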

From now on, the input $\mathrm{tnBPA}$ system is supposed to be standard, and is fixed as $(\mathbf{C}, \mathcal{A}, \Delta)$ where $\mathbf{C} = \{X_1, X_2, \ldots. X_n\}$.  We will invariantly use $n$ to denote the size of $\mathbf{C}$, and $N$ to denote the size of the related $\mathrm{tnBPA}$ system.

The problem is formally defined as follows:
\begin {center}\small
 \begin{tabular}{|rp{9.5cm}|}\hline
        Problem:   \quad  &  \textsc{Branching Bisimilarity on tnBPA} \\
        Instance:  \quad  &  A standard tnBPA system $(\mathbf{C} = \{X_i\}_{i=1}^{n}, \mathcal{A}, \Delta)$, and $\alpha, \beta \in \mathbf{C}^{*}$.  \\
        Question:  \quad  &  $\alpha \simeq \beta$? \\  \hline
    \end{tabular}
\end {center}

We restate the important property for standard systems as the following lemma.
\begin{lemma}\label{lem:decreasing_transition}
Assume $X_i \stackrel{\ell}{\longrightarrow}_{\mathrm{dec}} \alpha$, we have $\alpha \in \mathbf{C}_{i-1}^{*}$.
\end{lemma}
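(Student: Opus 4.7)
The lemma is essentially a restatement of the standardness conditions, so the plan is to unpack the two disjuncts in the definition of the partial order $\prec$ (used to build the linear ordering of constants) and dispatch the two cases $\ell \neq \tau$ and $\ell = \tau$ separately. Let $X_j$ be an arbitrary constant occurring in $\alpha$; our goal is to show $j < i$, which by construction of the enumeration reduces to showing $X_j \prec X_i$.

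Case $\ell \neq \tau$ is immediate from norms. Since the transition is decreasing, $\mathtt{norm}(\alpha) = \mathtt{norm}(X_i) - 1$. Because norm is additive over sequential composition and nonnegative on every factor, every constant $X_j$ appearing in $\alpha$ satisfies
\[
\mathtt{norm}(X_j) \leq \mathtt{norm}(\alpha) = \mathtt{norm}(X_i) - 1 < \mathtt{norm}(X_i),
\]
so $X_j \prec X_i$ via the first disjunct, hence $j < i$.

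Case $\ell = \tau$ is the only place where a little care is needed. Now $\mathtt{norm}(\alpha) = \mathtt{norm}(X_i)$, which only gives $\mathtt{norm}(X_j) \leq \mathtt{norm}(X_i)$. If the inequality is strict, the previous argument still applies. Otherwise $\mathtt{norm}(X_j) = \mathtt{norm}(X_i)$; writing $\alpha = \gamma_1 X_j \gamma_2$, additivity of norm forces $\mathtt{norm}(\gamma_1) = \mathtt{norm}(\gamma_2) = 0$, and by Lemma~\ref{lem:normal_one} this means $\gamma_1 = \gamma_2 = \epsilon$, i.e.\ $\alpha = X_j$. Thus $X_i \stackrel{\tau}{\longrightarrow}_{\mathrm{dec}} X_j$, so $X_i \Longrightarrow_{\mathrm{dec}} X_j$, and $X_j \prec X_i$ by the second disjunct. (Note $X_i = X_j$ cannot happen since loops have been contracted during the standardization step.)

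No step should be an obstacle; the only nontrivial observation is the degenerate subcase of $\ell = \tau$ where $\alpha$ collapses to a single constant, which is handled cleanly by Lemma~\ref{lem:normal_one} together with the loop-elimination pre-processing that justifies the existence of the linear extension of $\prec$.
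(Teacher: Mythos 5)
Your proof is correct and follows essentially the same route as the paper: the paper states this lemma as a restatement of the standardness conditions, with the justification embedded in the construction of the ordering extending $\prec$ (norm comparison for the non-silent/strictly-smaller-norm case, the relation $X_i \Longrightarrow_{\mathrm{dec}} X_j$ after loop elimination for the equal-norm silent case), and you simply unpack and verify that construction. The one step deserving care, the degenerate subcase where $\ell = \tau$ and $\alpha$ collapses to a single constant of equal norm, is handled correctly via Lemma~\ref{lem:normal_one} and the loop-elimination preprocessing.
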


\subsubsection{Other Conventions}

We will always use notation $\equiv$ to denote an equivalence/congruence relation on $\mathbf{C}^{*}$.  An equivalence/congruence relation $ \equiv$ is {\em norm-preserving} if  $\mathtt{norm}(\alpha) = \mathtt{norm}(\alpha')$ whenever $\alpha \equiv \alpha'$. In this paper, all the equivalence/congruence relations are supposed to be norm-preserving. This fact is not always explicitly stated.

\section{Finite Representations}\label{sec:finite_representations}

In this section, we propose  a convenient way of representing bisimilarity and the approximating congruences \footnote{The proofs in this section is a generalization of the corresponding work for realtime normed BPA, say~\cite{DBLP:journals/tcs/HirshfeldJM96}. The readers familiar with these former works can only skim this part.}.

From the algebraic view, the set of processes of $\mathrm{tnBPA}$ is exactly the free monoid generated by $\mathbf{C}$.  The question is how to represent a congruence relation on $\mathbf{C}^{*}$.
We will show that the bisimilarity $\simeq$ is a very special congruence. Not only is it finitely generated, but it enjoys a highly structured property called {\em unique decomposition property}.

\subsection{Unique Decomposition Property of $\simeq$}

Unique decomposition property plays a central role in all the algorithms for bisimilarity checking on realtime $\mathrm{nBPA}$. This important property also holds for bisimilarity on $\mathrm{tnBPA}$.

Recall that a congruence $\equiv$ is {\em norm-preserving} if $\mathtt{norm}(\alpha) = \mathtt{norm}(\beta)$ whenever $\alpha \equiv \beta$. The following lemma is a direct consequence of Definition~\ref{def:beq}.
\begin{lemma}\label{lem:norm-preserving}
$\simeq$ is a norm-preserving congruence.
\end{lemma}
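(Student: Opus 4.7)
The plan is to establish the two assertions---that $\simeq$ is norm-preserving and that it is a congruence---separately, both by the classical arguments for branching bisimulation specialized to the totally normed setting.

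For norm-preservation the key preliminary fact is that $\alpha \simeq \epsilon$ forces $\alpha = \epsilon$; this is where total normedness enters essentially. Suppose $\alpha \simeq \epsilon$ but $\alpha \neq \epsilon$. Since $\epsilon$ has no outgoing transitions, clause~1 of Definition~\ref{def:beq} applied to any transition $\alpha \stackrel{\ell}{\longrightarrow} \alpha'$ rules out sub-case~(b) entirely, leaving $\ell = \tau$ and $\alpha' \simeq \epsilon$. Iterating this observation, every reachable descendant of $\alpha$ admits only silent transitions and is itself bisimilar to $\epsilon$. But $\alpha$ is totally normed, so some path from $\alpha$ to $\epsilon$ must conclude with a visible action performed by such a descendant---a contradiction.

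With that fact in hand, norm-preservation follows by induction on $\mathtt{norm}(\alpha)$. The base case is handled by Lemma~\ref{lem:normal_one} together with the preliminary fact. For the inductive step, I pick a transition sequence $\alpha \Longrightarrow \alpha^{*} \stackrel{a}{\longrightarrow} \alpha^{**}$ with $a \neq \tau$ realising $\mathtt{norm}(\alpha) = \mathtt{norm}(\alpha^{**}) + 1$. A routine stutter-style argument from clause~1 transports the silent prefix onto $\beta$, yielding $\beta \Longrightarrow \beta^{*} \stackrel{a}{\longrightarrow} \beta^{**}$ with $\alpha^{**} \simeq \beta^{**}$. The induction hypothesis then gives $\mathtt{norm}(\alpha^{**}) = \mathtt{norm}(\beta^{**})$, hence $\mathtt{norm}(\beta) \leq \mathtt{norm}(\beta^{**}) + 1 = \mathtt{norm}(\alpha)$; symmetry closes the argument.

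For the congruence property the standard move is to verify that
\[
\mathcal{R} = \{(\gamma\alpha\delta, \gamma\beta\delta) : \alpha \simeq \beta, \; \gamma, \delta \in \mathbf{C}^{*}\} \cup \simeq
\]
is a branching bisimulation. The structural rule $\alpha\beta \stackrel{\ell}{\longrightarrow} \alpha'\beta$ means that, provided the prefix $\gamma$ is non-empty, its transitions are matched identically on the other side, and transitions originating inside $\alpha$ are matched by $\beta$ through the hypothesis $\alpha \simeq \beta$, with the resulting pairs remaining in $\mathcal{R}$. The delicate corner is when a prefix (be it the outer $\gamma$ or the residual of $\alpha$) reaches $\epsilon$ and control passes to the next segment; here the preliminary fact $\alpha \simeq \epsilon \Rightarrow \alpha = \epsilon$ ensures that the two sides hand over to the same suffix simultaneously, so the analysis collapses to an already-known $\simeq$-pair. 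This last point is the main obstacle of the argument and is exactly where total normedness is indispensable.
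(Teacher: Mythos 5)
Your proof is correct: the paper itself offers no argument for this lemma (it is asserted as ``a direct consequence of Definition~\ref{def:beq}''), and your elaboration is the standard one, with the two genuinely load-bearing points---that $\alpha\simeq\epsilon$ forces $\alpha=\epsilon$ by total normedness, and that this fact synchronizes the hand-over between segments in the congruence relation---correctly identified and justified. Nothing further is needed.
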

Let ${\equiv} \subseteq \mathbf{C}^{*} \times \mathbf{C}^{*}$ be an arbitrary norm-preserving congruence.
Intuitively, a constant process $X_i$ is a composite if $X_i \equiv \alpha\beta$ for some $\alpha, \beta \neq \epsilon$. In this case we also have $\mathtt{norm}(\alpha), \mathtt{norm}(\beta) < \mathtt{norm}(X_i)$ from Lemma~\ref{lem:normal_one}.
For technical convenience we will define $X_i$ to be a {\em composite} modulo $\equiv$ if $X_i \equiv \alpha$ for some $\alpha \in \mathbf{C}_{i-1}^{*}$. Otherwise, $X_i$ is called a {\em prime} modulo $\equiv$.

Let $\mathbf{P} \subseteq \mathbf{C}$ be the set of primes modulo $\equiv$.  By Lemma~\ref{lem:norm-preserving} and the well-foundedness of natural numbers, every $X \in \mathbf{C}$ has a {\em prime decomposition} $\alpha \in \mathbf{P}^{*}$ such that $X_i \equiv \alpha$.   We say that $\equiv$ has unique decomposition property, or simply $\equiv$ is {\em decompositional} if every process has exactly one prime decomposition.

It is the time to establish the unique decomposition property of $\simeq$. The following Lemma~\ref{lem:right-cancellation} and Theorem~\ref{thm:unique-decomposition} is standard, as is in the case of bisimilarity for realtime $\mathrm{nBPA}$~\cite{DBLP:journals/tcs/HirshfeldJM96}.
The {\em right cancellation} property is established first.

\begin{lemma}[Right Cancellation]\label{lem:right-cancellation}
$\alpha\gamma \simeq \beta\gamma$ entails $\alpha \simeq \beta$.
\end{lemma}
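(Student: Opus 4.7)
I would prove right cancellation by exhibiting the obvious candidate relation and showing it is a (semi-)branching bisimulation. Let
\[
\mathcal{R} \;=\; \{\,(\alpha,\beta)\in\mathbf{C}^{*}\times\mathbf{C}^{*}\mid \exists\gamma.\ \alpha\gamma\simeq\beta\gamma\,\}.
\]
The goal is to show $\mathcal{R}\subseteq{\simeq}$, from which the lemma is immediate. To avoid the well-known subtleties around transitivity of a branching bisimulation, I would show $\mathcal{R}$ is a \emph{semi-branching} bisimulation in the sense of Definition~\ref{def:semi-beq}; since the largest semi-branching bisimulation coincides with $\simeq$, this suffices.

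\textbf{Degenerate cases.} Suppose $(\alpha,\beta)\in\mathcal{R}$ with $\alpha=\epsilon$. Then $\gamma\simeq\beta\gamma$, so by Lemma~\ref{lem:norm-preserving} we have $\mathtt{norm}(\beta)=0$, hence $\beta=\epsilon$ by Lemma~\ref{lem:normal_one}. Symmetrically for $\beta=\epsilon$. Thus the pair $(\epsilon,\epsilon)$ trivially satisfies the bisimulation clauses, and we may henceforth assume $\alpha,\beta\neq\epsilon$.

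\textbf{Main case: lifting transitions and pushing them back.} Suppose $\alpha\stackrel{\ell}{\longrightarrow}\alpha'$. Then $\alpha\gamma\stackrel{\ell}{\longrightarrow}\alpha'\gamma$. Since $\alpha\gamma\simeq\beta\gamma$, one of the two clauses of Definition~\ref{def:beq} applies. In clause~(a) we get $\ell=\tau$ with $\alpha'\gamma\simeq\beta\gamma$, so $(\alpha',\beta)\in\mathcal{R}$ and the semi-branching clause 1(a) is witnessed by $\beta\Longrightarrow\beta$ itself. In clause~(b) we have
\[
\beta\gamma\;\Longrightarrow\;\delta\;\stackrel{\ell}{\longrightarrow}\;\delta'
\]
with $\alpha\gamma\simeq\delta$ and $\alpha'\gamma\simeq\delta'$. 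The crucial step is to show that this entire matching computation stays ``inside'' $\beta$, i.e.\ that $\delta=\beta_{1}\gamma$ and $\delta'=\beta_{2}\gamma$ for some $\beta\Longrightarrow\beta_{1}\stackrel{\ell}{\longrightarrow}\beta_{2}$. Once this is established, $(\alpha,\beta_{1}),(\alpha',\beta_{2})\in\mathcal{R}$, witnessing semi-branching clause~1(b). The symmetric clause (transitions of $\beta$) is entirely analogous.

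\textbf{The key structural observation.} That the matching computation cannot ``dip into'' $\gamma$ is the one place where totally normedness is indispensable. Because the BPA operational rules only let the leftmost nonempty factor fire, every step $\beta\gamma\stackrel{\ell'}{\longrightarrow}\cdot$ with $\beta\neq\epsilon$ is of the form $\beta\stackrel{\ell'}{\longrightarrow}\beta^{\sharp}$ together with the contextual rule, producing $\beta^{\sharp}\gamma$. Moreover, in $\mathrm{tnBPA}$ the rules $X\stackrel{\tau}{\longrightarrow}\epsilon$ are forbidden, so no silent step can erase a constant; consequently every silent descendant of the nonempty process $\beta$ is itself nonempty. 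An easy induction on the length of $\beta\gamma\Longrightarrow\delta$ then yields $\delta=\beta_{1}\gamma$ with $\beta\Longrightarrow\beta_{1}$ and $\beta_{1}\neq\epsilon$; and since $\beta_{1}\neq\epsilon$, the final $\stackrel{\ell}{\longrightarrow}$ step must also come from $\beta_{1}$ rather than $\gamma$, giving the desired $\beta_{1}\stackrel{\ell}{\longrightarrow}\beta_{2}$ with $\delta'=\beta_{2}\gamma$.

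\textbf{Anticipated obstacle.} The only real content is the ``no dipping into $\gamma$'' observation above; everything else is routine manipulation of the definition. It is precisely this point that would fail outside of totally normed BPA, since a silent rule $X\stackrel{\tau}{\longrightarrow}\epsilon$ would allow $\beta$ to evaporate during $\Longrightarrow$ and thereby force the matching transition to fire from $\gamma$, breaking the argument. With this observation in hand, $\mathcal{R}$ is verified to be a semi-branching bisimulation, so $\mathcal{R}\subseteq{\simeq}$ and $\alpha\simeq\beta$ follows.
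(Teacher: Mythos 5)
Your proposal is correct and takes essentially the same route as the paper, which proves the lemma in one line by asserting that $\{(\alpha,\beta):\alpha\gamma\simeq\beta\gamma\mbox{ for some }\gamma\}$ is a bisimulation. Your expansion supplies exactly the details that one-liner leaves implicit -- in particular the observation that, since $X\stackrel{\tau}{\longrightarrow}\epsilon$ rules are forbidden in $\mathrm{tnBPA}$, the matching computation cannot dip into $\gamma$ -- and your use of the semi-branching formulation is a harmless technical convenience.
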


\begin{proof}
$\{(\alpha, \beta): \alpha\gamma \simeq \beta\gamma \mbox{ for some }\gamma\}$
is a bisimulation.  \qed
\end{proof}

\begin{theorem}[Unique Decomposition Property of $\simeq$]\label{thm:unique-decomposition} $\simeq$ is decompositional.
Let $X_{i_1} \ldots X_{i_p}$ and  $X_{j_1} \ldots X_{j_q}$ be two irreducible decompositions such that $X_{i_1} \ldots X_{i_p} \simeq X_{j_1} \ldots X_{j_q}$. Then,  $p=q$ and $X_{i_t} \simeq X_{j_t}$ for every $1 \leq t \leq p$.
\end{theorem}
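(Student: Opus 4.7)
The plan is to prove the theorem in two stages, corresponding to the assertions that every process admits a prime decomposition (decompositionality of $\simeq$) and that any two prime decompositions of bisimilar processes agree pairwise.

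For existence of a prime decomposition, I would use strong induction on $\mathtt{norm}(X_i)$ for single constants $X_i \in \mathbf{C}$. If $X_i$ is prime modulo $\simeq$ then $X_i$ itself is a prime decomposition; otherwise $X_i \simeq \alpha$ for some $\alpha \in \mathbf{C}_{i-1}^{*}$, and by Lemma~\ref{lem:norm-preserving} together with Lemma~\ref{lem:normal_one} every constant appearing in $\alpha$ has strictly smaller norm than $X_i$, so each admits a prime decomposition by the induction hypothesis, and concatenating these yields a prime decomposition of $X_i$. For a generic process $\gamma = Y_1 \cdots Y_r$, the concatenation of the prime decompositions of $Y_1, \ldots, Y_r$ is a prime decomposition of $\gamma$.

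For uniqueness I would induct on the common total norm $n$ of the two bisimilar prime sequences. The base $n=0$ is immediate, because every prime $X_i$ satisfies $\mathtt{norm}(X_i)>0$ --- otherwise $X_i \simeq \epsilon$ with $\epsilon \in \mathbf{C}_{i-1}^{*}$ would force $X_i$ to be composite --- so both sides are $\epsilon$ and $p=q=0$. In the inductive step, the crucial sub-claim is that the leading primes agree, i.e.\ $X_{i_1} \simeq X_{j_1}$. Granted the sub-claim, congruence of $\simeq$ yields $X_{i_1}(X_{j_2} \cdots X_{j_q}) \simeq X_{j_1}(X_{j_2} \cdots X_{j_q}) \simeq X_{i_1}(X_{i_2} \cdots X_{i_p})$; a short left-cancellation argument --- obtained in the tnBPA setting by running a decreasing norm-witnessing sequence of $X_{i_1}$ whose last action is necessarily visible (by total normedness), simulating it via branching bisimulation, and using norm-preservation of $\simeq$ to force the simulating copy to also exhaust its leading $X_{i_1}$ --- strips $X_{i_1}$ off both sides and produces $X_{i_2} \cdots X_{i_p} \simeq X_{j_2} \cdots X_{j_q}$ at strictly smaller norm, to which the induction hypothesis applies.

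The main obstacle is the sub-claim $X_{i_1} \simeq X_{j_1}$. My plan is to assume WLOG $\mathtt{norm}(X_{i_1}) \leq \mathtt{norm}(X_{j_1})$ and fix a decreasing norm-witnessing sequence $X_{i_1} \stackrel{\ell_1}{\longrightarrow}_{\mathrm{dec}} \cdots \stackrel{\ell_k}{\longrightarrow}_{\mathrm{dec}} \epsilon$, whose last action is visible by total normedness. Running this sequence inside $X_{i_1} X_{i_2} \cdots X_{i_p}$ and simulating it via branching bisimulation on $X_{j_1} X_{j_2} \cdots X_{j_q}$, I would track norms carefully: visible left actions are matched one-for-one by visible right actions, and the intermediate silent transitions in the matching are, by Lemma~\ref{computation-lemma}, bisimilar to the current left state and hence norm-preserving by Lemma~\ref{lem:norm-preserving}. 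This bookkeeping shows that after the simulation the right side sits at a state of the form $\eta \cdot X_{j_2} \cdots X_{j_q}$, with $X_{j_1}$ partially consumed to a head $\eta$ satisfying $\mathtt{norm}(\eta) = \mathtt{norm}(X_{j_1}) - \mathtt{norm}(X_{i_1})$, and with $X_{i_2} \cdots X_{i_p} \simeq \eta \cdot X_{j_2} \cdots X_{j_q}$. If $\mathtt{norm}(X_{j_1}) > \mathtt{norm}(X_{i_1})$, then $\eta \neq \epsilon$; decomposing $\eta$ into primes and applying the induction hypothesis at strictly smaller norm, combined with the primality of $X_{j_1}$, forces a contradiction. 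Hence $\mathtt{norm}(X_{i_1}) = \mathtt{norm}(X_{j_1})$, and the same simulation then directly delivers $X_{i_1} \simeq X_{j_1}$.
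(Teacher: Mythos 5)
Your overall architecture (existence by induction on norm, uniqueness by induction on total norm, head-first matching via a full norm-witnessing run of $X_{i_1}$ with norm bookkeeping) is a legitimate alternative to the paper's argument, which instead takes a minimal-norm counterexample, fires a \emph{single} decreasing visible action $X_{i_1}\ldots X_{i_p} \Longrightarrow_{\mathrm{dec}}\stackrel{a}{\longrightarrow}_{\mathrm{dec}}\gamma X_{i_2}\ldots X_{i_p}$, applies the induction hypothesis to conclude that the \emph{last} primes agree, and strips from the right using Lemma~\ref{lem:right-cancellation}. Your norm bookkeeping is sound: since every intermediate state in a branching-bisimulation matching is, by Lemma~\ref{computation-lemma} and Lemma~\ref{lem:norm-preserving}, norm-equal to the corresponding left state, the right-hand run indeed cannot descend into $X_{j_2}\cdots X_{j_q}$ before the final visible action, so the terminal state really is $\eta\cdot X_{j_2}\cdots X_{j_q}$ with $\mathtt{norm}(\eta)=\mathtt{norm}(X_{j_1})-\mathtt{norm}(X_{i_1})$. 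The derived left cancellation is also fine by the same reasoning.

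The genuine gap is at both punchlines of your sub-claim, and it is the same gap twice: you never invoke (or reprove) right cancellation, yet both conclusions need it. In the equal-norm case, your simulation delivers $X_{i_2}\cdots X_{i_p}\simeq X_{j_2}\cdots X_{j_q}$; it does \emph{not} ``directly deliver'' $X_{i_1}\simeq X_{j_1}$, because it only exhibits a match for one distinguished norm-witnessing trace of $X_{i_1}$, whereas bisimilarity of $X_{i_1}$ and $X_{j_1}$ requires matching all transitions of both constants, including the norm-increasing ones, which your run never exercises. To get $X_{i_1}\simeq X_{j_1}$ from $X_{i_1}\alpha\simeq X_{j_1}\beta$ and $\alpha\simeq\beta$ you must right-cancel the common tail. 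Likewise, in the case $\mathtt{norm}(X_{j_1})>\mathtt{norm}(X_{i_1})$, the induction hypothesis applied to $X_{i_2}\cdots X_{i_p}\simeq\eta X_{j_2}\cdots X_{j_q}$ only relates the two tails and never mentions $X_{j_1}$; the contradiction with primality of $X_{j_1}$ only appears after you combine this with the original equation by congruence and right-cancel $X_{j_2}\cdots X_{j_q}$ to obtain $X_{j_1}\simeq X_{i_1}\eta$ (every constant of $X_{i_1}\eta$ then has norm, hence index, below that of $X_{j_1}$, so $X_{j_1}$ would be composite). Both holes are filled by citing Lemma~\ref{lem:right-cancellation}, which is proved independently just before the theorem; but once that lemma is on the table, your full-trace simulation is doing more work than necessary, since the paper's single decreasing step already suffices.
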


\begin{proof}
Assume on the contrary that $X_{i_1} \ldots X_{i_p}$ and  $X_{j_1} \ldots X_{j_q}$ be two different irreducible decompositions with the least norm such that
\[
X_{i_1} \ldots X_{i_p} \simeq X_{j_1} \ldots X_{j_q} .
\]
Suppose that
\begin{equation}\label{eqn:proof_UDP1}
X_{i_1} \ldots X_{i_p} \Longrightarrow_{\mathrm{dec}} \stackrel{a}{\longrightarrow}_{\mathrm{dec}} \gamma X_{i_2} \ldots X_{i_p}.
\end{equation}
These actions must be bisimulated (matched) by
\begin{equation}\label{eqn:proof_UDP2}
X_{j_1} \ldots X_{j_q} \Longrightarrow_{\mathrm{dec}} \stackrel{a}{\longrightarrow}_{\mathrm{dec}}  \delta X_{j_2} \ldots X_{j_q}
\end{equation}
for some $\delta$ such that $\gamma X_{i_2} \ldots X_{i_p} \simeq \delta X_{j_2} \ldots X_{j_q}$.
Since the norm of $\gamma X_{i_2} \ldots X_{i_p}$ and $\delta X_{j_2} \ldots X_{j_q}$ is strictly decremented, we have $X_{i_p} \simeq X_{j_q}$ from the induction hypothesis.  Now by right cancellation lemma, $X_{i_1} \ldots X_{i_{p-1}} \simeq X_{j_1} \ldots X_{j_{q-1}}$. This contradicts with the minimum norm assumption.  \qed
\end{proof}
On the other direction,  right or left cancellation property is an implication of unique decomposition property.
\begin{lemma}\label{lemma:udp_to_cancellation}
Let $\equiv$ be decompositional. Then $\alpha\gamma \equiv \beta\gamma$ (or $\gamma\alpha \equiv \gamma\beta$) implies $\alpha \equiv \beta$.
\end{lemma}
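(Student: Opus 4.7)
The plan is to exploit the unique decomposition hypothesis in order to reduce cancellation in $(\mathbf{C}^*, \equiv)$ to syntactic cancellation in the free monoid $\mathbf{P}^*$, where $\mathbf{P}$ is the set of primes modulo $\equiv$.

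First I would extend the prime-decomposition map from constants to arbitrary processes. For $\sigma = X_{i_1}\cdots X_{i_k}\in \mathbf{C}^*$, let $\sigma^{*} \in \mathbf{P}^{*}$ be the concatenation $X_{i_1}^{*}\cdots X_{i_k}^{*}$ of the (unique) prime decompositions of its constituent constants; since $\equiv$ is a congruence we have $\sigma\equiv\sigma^{*}$, and the assumption that $\equiv$ is decompositional makes $\sigma^{*}$ the only element of $\mathbf{P}^{*}$ equivalent to $\sigma$ (given any other $\pi\in \mathbf{P}^{*}$ with $\sigma\equiv\pi$, the constant process class $\sigma$ would have two prime decompositions, contradicting decomposability).

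Next I would apply this to the hypothesis $\alpha\gamma\equiv\beta\gamma$. By congruence,
\[
\alpha^{*}\gamma^{*}\ \equiv\ \alpha\gamma\ \equiv\ \beta\gamma\ \equiv\ \beta^{*}\gamma^{*},
\]
and both $\alpha^{*}\gamma^{*}$ and $\beta^{*}\gamma^{*}$ lie in $\mathbf{P}^{*}$. Thus they are two prime decompositions of the same $\equiv$-class, so unique decomposition forces the syntactic identity $\alpha^{*}\gamma^{*} = \beta^{*}\gamma^{*}$ in $\mathbf{C}^{*}$. Because $\mathbf{C}^{*}$ is a free monoid, right cancellation in the free monoid yields $\alpha^{*} = \beta^{*}$, and therefore $\alpha \equiv \alpha^{*} = \beta^{*} \equiv \beta$. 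The case $\gamma\alpha\equiv\gamma\beta$ is completely symmetric, using left cancellation in the free monoid.

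I do not expect a real obstacle here; the entire argument hinges on carefully separating syntactic equality $=$ from the congruence $\equiv$ when invoking the unique decomposition property, and on checking that the decompositional hypothesis really yields uniqueness for arbitrary processes and not merely for constants. Once the extension of prime decomposition to $\mathbf{C}^{*}$ is in place, the rest is immediate from freeness of the monoid.
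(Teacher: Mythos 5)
Your argument is correct and is essentially the route the paper itself relies on: the paper leaves this lemma unproved, but its $\mathtt{dcmp}$ machinery (Lemma~\ref{lem:dcmp}) is exactly your map $\sigma\mapsto\sigma^{*}$, reducing $\equiv$ to syntactic equality of decompositions and then cancelling in the free monoid $\mathbf{P}^{*}$. Your cautionary point is also already settled by the paper's definitions, since ``decompositional'' is defined as uniqueness of the prime decomposition for every process, not merely for constants.
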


\begin{remark}
The proof of Lemma~\ref{lem:right-cancellation} and Theorem~\ref{thm:unique-decomposition} is standard~\cite{BaetenBergstraKlop1987,DBLP:journals/tcs/HirshfeldJM96}.
Although the proof is fairly straightforward, it heavily depends  on {\em branching} bisimilarity and {\em totally} normedness. For example in the above proof when actions coming from $X_{i_1}$ in (\ref{eqn:proof_UDP1}) are matched by the actions in (\ref{eqn:proof_UDP2}),  the crucial point is that $X_{j_2}$ is never used.  This cannot be proved in the case of weak bisimilarity, or in the case without totally normedness.  We will have the following two counterexamples if branching bisimilarity is replaced by {\em weak} bisimilarity, or if the condition of totally normedness is abandoned.
\end{remark}

\begin{example}\label{example:weak_bisimilarity_decom}
This counterexample is borrowed from~\cite{DBLP:conf/cav/Huttel91}. Consider the tnBPA system $(\{X, Y, B, A\}, \{a\}, \Delta)$, with
\[
\Delta = \{ X \stackrel{a}{\longrightarrow} Y,  Y \stackrel{a}{\longrightarrow} \epsilon, Y \stackrel{\tau}{\longrightarrow} X,  A \stackrel{a}{\longrightarrow} \epsilon, A \stackrel{\tau}{\longrightarrow} B, B \stackrel{a}{\longrightarrow} \epsilon \}.
\]
Clearly, $A Y \approx B Y $ but $A \not\approx B$. Right cancellation property does not hold, neither does the unique decomposition property hold.
\end{example}

\begin{example}
Consider the nBPA system $(\{X\}, \{a\}, \Delta)$, with
\[
\Delta = \{ X \stackrel{a}{\longrightarrow} X, X \stackrel{\tau}{\longrightarrow} \epsilon\}.
\]
Clearly, $X \simeq XX \simeq XXX \simeq \ldots$.  Unique decomposition property fails in this example merely because the existence of {\em idempotent} processes.
\end{example}

\subsection{Decomposition Bases}\label{subset:decomposition_base}

A decompositional congruence over $\mathbf{C}^{*}$ can be represented by a decomposition base.  A {\em decomposition base} $\mathcal{B}$ is a pair $(\mathbf{P}, \mathbf{E})$, in which $\mathbf{P} \subseteq \mathbf{C}$ specifies the set of primes, and $\mathbf{E}$ is a finite set of equations of the form $X = \alpha_X$ for every $X \in \mathbf{C} - \mathbf{P}$ and $\alpha_X \in \mathbf{P}^{*}$.  The equation $X = \alpha_X$ realizes the fact that every composite $X$ is equal to a string of primes $\alpha_X$ which is the {\em prime decomposition} of $X$.  The congruence relation generated by $\mathcal{B}$ is denoted by $\stackrel{\mathcal{B}}{\equiv}$.

The {\em prime decomposition} of a process $\alpha$ with regard to $\mathcal{B}$ is denoted by $\mathtt{dcmp}_{\mathcal{B}}(\alpha)$.
Formally, we set $\mathtt{dcmp}_{\mathcal{B}}(X) = X$ when $X \in \mathbf{P}$, and $\mathtt{dcmp}_{\mathcal{B}}(X) = \alpha_X$ wherever the equation $X = \alpha_X$ is in $\mathbf{E}$.
The domain of $\mathtt{dcmp}_{\mathcal{B}}$ is extended to $\mathbf{C}^{*}$ naturally by setting $\mathtt{dcmp}_{\mathcal{B}}(\epsilon) = \epsilon$ and  $\mathtt{dcmp}_{\mathcal{B}}(\alpha\cdot \beta) = \mathtt{dcmp}_{\mathcal{B}}(\alpha) \cdot \mathtt{dcmp}_{\mathcal{B}}(\beta)$.

The following lemma makes checking $\alpha \stackrel{\mathcal{B}}{\equiv} \beta$ fairly easy by only computing the prime decompositions of $\alpha$ and $\beta$.

\begin{lemma}\label{lem:dcmp}
$\alpha \stackrel{\mathcal{B}}{\equiv} \beta$ if and only if $\mathtt{dcmp}_{\mathcal{B}}(\alpha) = \mathtt{dcmp}_{\mathcal{B}}(\beta)$.
\end{lemma}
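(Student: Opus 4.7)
My plan is to treat $\mathtt{dcmp}_{\mathcal{B}}$ as a monoid homomorphism $\mathbf{C}^{*} \to \mathbf{P}^{*}$ and exploit the universal property of $\stackrel{\mathcal{B}}{\equiv}$ as the smallest congruence containing the equations in $\mathbf{E}$.

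First, I would record the key observation that $\mathtt{dcmp}_{\mathcal{B}}$ acts as the identity on $\mathbf{P}^{*}$: for every $X \in \mathbf{P}$ we have $\mathtt{dcmp}_{\mathcal{B}}(X) = X$ by definition, and hence by the defining clause $\mathtt{dcmp}_{\mathcal{B}}(\gamma\delta) = \mathtt{dcmp}_{\mathcal{B}}(\gamma) \cdot \mathtt{dcmp}_{\mathcal{B}}(\delta)$ one gets $\mathtt{dcmp}_{\mathcal{B}}(\alpha_X) = \alpha_X$ for every $\alpha_X \in \mathbf{P}^{*}$. In particular $\mathtt{dcmp}_{\mathcal{B}}(X) = \alpha_X = \mathtt{dcmp}_{\mathcal{B}}(\alpha_X)$ for each equation $X = \alpha_X \in \mathbf{E}$; this is the bridge between the syntactic equations and the semantic map.

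For the ``$\Rightarrow$'' direction, define the kernel relation $\alpha \equiv' \beta \iff \mathtt{dcmp}_{\mathcal{B}}(\alpha) = \mathtt{dcmp}_{\mathcal{B}}(\beta)$. The very definition of $\mathtt{dcmp}_{\mathcal{B}}$ as a homomorphism makes $\equiv'$ a congruence on $\mathbf{C}^{*}$, and by the observation above it contains every defining equation $X = \alpha_X$ of $\mathbf{E}$. Since $\stackrel{\mathcal{B}}{\equiv}$ is by definition the smallest congruence with this property, $\stackrel{\mathcal{B}}{\equiv} \; \subseteq \; \equiv'$, giving one direction.

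For the ``$\Leftarrow$'' direction, it suffices to establish the auxiliary claim $\alpha \stackrel{\mathcal{B}}{\equiv} \mathtt{dcmp}_{\mathcal{B}}(\alpha)$ for every $\alpha \in \mathbf{C}^{*}$, since then
\[
\alpha \stackrel{\mathcal{B}}{\equiv} \mathtt{dcmp}_{\mathcal{B}}(\alpha) = \mathtt{dcmp}_{\mathcal{B}}(\beta) \stackrel{\mathcal{B}}{\equiv} \beta
\]
follows immediately. The claim is proved by structural induction on $\alpha$: the base case $\alpha = \epsilon$ is trivial; for $\alpha = X$ with $X \in \mathbf{P}$ the equality $\mathtt{dcmp}_{\mathcal{B}}(X) = X$ makes it tautological, while for $X \notin \mathbf{P}$ we use the equation $X = \alpha_X \in \mathbf{E}$ directly; and the concatenation case is immediate from the congruence property of $\stackrel{\mathcal{B}}{\equiv}$ together with the homomorphism definition of $\mathtt{dcmp}_{\mathcal{B}}$.

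I do not anticipate a genuine obstacle here; the only point deserving care is the idempotency remark $\mathtt{dcmp}_{\mathcal{B}}(\alpha_X) = \alpha_X$, which is what allows the kernel congruence $\equiv'$ to absorb the generating equations of $\mathbf{E}$ and which is also what lets the induction in the ``$\Leftarrow$'' direction terminate after a single unfolding. Everything else is routine monoid-homomorphism bookkeeping.
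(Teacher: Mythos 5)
Your proof is correct: the paper states Lemma~\ref{lem:dcmp} without proof, and your argument (the kernel of the homomorphism $\mathtt{dcmp}_{\mathcal{B}}$ is a congruence containing every equation of $\mathbf{E}$ because $\mathtt{dcmp}_{\mathcal{B}}$ fixes $\mathbf{P}^{*}$ pointwise, hence contains $\stackrel{\mathcal{B}}{\equiv}$; conversely $\alpha \stackrel{\mathcal{B}}{\equiv} \mathtt{dcmp}_{\mathcal{B}}(\alpha)$ by induction) is exactly the standard argument the author leaves implicit. No gaps; the one point that genuinely needs the hypothesis $\alpha_X \in \mathbf{P}^{*}$ from the definition of a decomposition base is the idempotency observation, and you identified it correctly.
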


In the rest of the paper, every congruence $\mathcal{B} = (\mathbf{P}, \mathbf{E})$ generated by a decomposition base $\mathcal{B}$ is assumed to be norm-preserving.  Thus we must have $\mathtt{norm}(X) = \mathtt{norm}(\alpha_X)$ if the equation $X = \alpha_X$ is in $\mathbf{E}$.

The following lemma formalizes the important observation that prime constants  do not have state-preserving silent actions.
\begin{lemma}\label{lem:tau_prime}
Let $\mathcal{B}=(\mathbf{P}, \mathbf{E})$ be a decomposition base, and $X_i \in \mathbf{P}$. Assume $X_i \stackrel{\ell}{\longrightarrow}_{\mathrm{dec}} \alpha$, we have $X_i \not  \stackrel{\mathcal{B}}{\equiv} \alpha$.
\end{lemma}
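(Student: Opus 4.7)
The plan is to reduce this statement to two earlier results: Lemma~\ref{lem:decreasing_transition} (which pins down where decreasing transitions land in standard systems) and the definition of \emph{composite/prime modulo $\equiv$} from Section~\ref{subset:decomposition_base}. The proof should go by contradiction, assuming $X_i \stackrel{\mathcal{B}}{\equiv} \alpha$ and deriving a violation of $X_i \in \mathbf{P}$.

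First I would apply Lemma~\ref{lem:decreasing_transition} to the hypothesis $X_i \stackrel{\ell}{\longrightarrow}_{\mathrm{dec}} \alpha$. Since the input system is standard, this directly yields $\alpha \in \mathbf{C}_{i-1}^{*}$; in other words, $\alpha$ is a word in constants of strictly smaller index than $X_i$. Notice that this single observation handles $\ell$ uniformly: we do not need to distinguish the silent case from the visible case at this stage.

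Then I would assume for contradiction that $X_i \stackrel{\mathcal{B}}{\equiv} \alpha$. By definition, a constant $X_i$ is a composite modulo $\stackrel{\mathcal{B}}{\equiv}$ precisely when $X_i \stackrel{\mathcal{B}}{\equiv} \alpha'$ for some $\alpha' \in \mathbf{C}_{i-1}^{*}$, so the combination of the previous step and the assumption exhibits $X_i$ as a composite. This contradicts $X_i \in \mathbf{P}$, completing the argument.

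I do not expect any real obstacle here; the lemma is essentially a bookkeeping consequence of the standard-input convention plus the definition of primeness. The only thing worth emphasizing in the write-up is the conceptual content highlighted by the surrounding prose: in the nontrivial case $\ell = \tau$, norm-preservation of $\stackrel{\mathcal{B}}{\equiv}$ no longer suffices to rule out $X_i \stackrel{\mathcal{B}}{\equiv} \alpha$, and it is precisely standardness (which forbids loops such as $X_i \Longrightarrow X_j \Longrightarrow X_i$ and forces $\alpha \in \mathbf{C}_{i-1}^{*}$) that does the work of ensuring primes carry no state-preserving silent transitions.
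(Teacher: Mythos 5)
Your proof is correct, and it takes a genuinely more uniform route than the paper's. The paper also begins by invoking Lemma~\ref{lem:decreasing_transition} to place $\alpha$ in $\mathbf{C}_{i-1}^{*}$, but then splits on $\ell$: for visible $\ell$ it appeals to norm-preservation of $\stackrel{\mathcal{B}}{\equiv}$ (since $\mathtt{norm}(X_i)=\mathtt{norm}(\alpha)+1$), and for $\ell=\tau$ it argues via Lemma~\ref{lem:dcmp} that $X_i \stackrel{\mathcal{B}}{\equiv} \alpha$ together with $X_i$ being prime would force $X_i = \mathtt{dcmp}_{\mathcal{B}}(X_i) = \mathtt{dcmp}_{\mathcal{B}}(\alpha)$, which is impossible for $\alpha \in \mathbf{C}_{i-1}^{*}$. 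You instead bypass the case split entirely by observing that $X_i \stackrel{\mathcal{B}}{\equiv} \alpha$ with $\alpha \in \mathbf{C}_{i-1}^{*}$ is, verbatim, the paper's definition of $X_i$ being a \emph{composite} modulo $\stackrel{\mathcal{B}}{\equiv}$ --- an inference the paper itself uses in the proof of Lemma~\ref{lem:PP}. What your version buys is brevity and the observation that the visible and silent cases need no separate treatment once Lemma~\ref{lem:decreasing_transition} is in hand; what the paper's version buys is that it makes explicit, via $\mathtt{dcmp}_{\mathcal{B}}$, \emph{why} primality of $X_i$ clashes with such an equivalence. One small caveat applies equally to both arguments: they tacitly assume that membership in the component $\mathbf{P}$ of a decomposition base coincides with being a prime modulo $\stackrel{\mathcal{B}}{\equiv}$ (equivalently, that every equation $X_j = \alpha_{X_j}$ in $\mathbf{E}$ has $\alpha_{X_j} \in \mathbf{C}_{j-1}^{*}$); this is the paper's implicit well-formedness convention for bases, so you are not worse off for relying on it.
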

\begin{proof}
According to Lemma~\ref{lem:decreasing_transition}, $\alpha \in \mathbf{C}_{i-1}$.

\begin{enumerate}
\item
If $\ell =\tau$ and $\mathtt{norm}(X_i) = \mathtt{norm}(\alpha)$.  In this case, if we have $X_i  \stackrel{\mathcal{B}}{\equiv} \alpha$, then according to the fact that $X_i$ being prime and Lemma~\ref{lem:dcmp},  $X_i = \mathtt{dcmp}_{\mathcal{B}}(X_i) = \mathtt{dcmp}_{\mathcal{B}}(\alpha)$. This is a contradiction.

\item
If $\ell \neq \tau$ and $\mathtt{norm}(X_i) = \mathtt{norm}(\alpha) + 1$, we cannot have $X_i  \stackrel{\mathcal{B}}{\equiv} \alpha$ because $\stackrel{\mathcal{B}}{\equiv}$ is norm preserving.  \qed
\end{enumerate}

\end{proof}
The above property can be lifted from constants to processes, regarding Lemma~\ref{lemma:udp_to_cancellation}.
\begin{lemma}\label{lem:tau_prime_string}
Let $\mathcal{B}=(\mathbf{P}, \mathbf{E})$ be a decomposition base, and $\alpha \in \mathbf{P}^{*}$. Assume $\alpha \stackrel{\ell}{\longrightarrow}_{\mathrm{dec}} \gamma$, we have $\alpha \not  \stackrel{\mathcal{B}}{\equiv} \gamma$.
\end{lemma}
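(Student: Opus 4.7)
The plan is to reduce the problem to the single-constant case handled by Lemma~\ref{lem:tau_prime} via a cancellation argument. Because $\alpha \stackrel{\ell}{\longrightarrow}_{\mathrm{dec}} \gamma$ presupposes that $\alpha \neq \epsilon$, and $\alpha \in \mathbf{P}^*$, I would write $\alpha = X_{i_1}\beta$ with $X_{i_1} \in \mathbf{P}$ and $\beta \in \mathbf{P}^*$. By the structural operational rules, only the leftmost constant can fire, so the transition is of the form $\gamma = \gamma'\beta$ with $X_{i_1} \stackrel{\ell}{\longrightarrow} \gamma'$.

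Next I would verify that this induced transition on $X_{i_1}$ is itself decreasing. From $\mathtt{norm}(\alpha) = \mathtt{norm}(X_{i_1}) + \mathtt{norm}(\beta)$ and $\mathtt{norm}(\gamma) = \mathtt{norm}(\gamma') + \mathtt{norm}(\beta)$, the decreasing condition on the transition $\alpha \stackrel{\ell}{\longrightarrow}_{\mathrm{dec}} \gamma$ transfers cleanly to $X_{i_1} \stackrel{\ell}{\longrightarrow}_{\mathrm{dec}} \gamma'$. Applying Lemma~\ref{lem:tau_prime} to the prime $X_{i_1}$ therefore gives $X_{i_1} \not\stackrel{\mathcal{B}}{\equiv} \gamma'$.

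To finish, I would suppose for contradiction that $\alpha \stackrel{\mathcal{B}}{\equiv} \gamma$, i.e.\ $X_{i_1}\beta \stackrel{\mathcal{B}}{\equiv} \gamma'\beta$. The quickest route is via Lemma~\ref{lem:dcmp}: since $\alpha, \beta \in \mathbf{P}^*$, we have $\mathtt{dcmp}_{\mathcal{B}}(X_{i_1}\beta) = X_{i_1}\beta$ and $\mathtt{dcmp}_{\mathcal{B}}(\gamma'\beta) = \mathtt{dcmp}_{\mathcal{B}}(\gamma')\beta$, so the assumption forces $X_{i_1} = \mathtt{dcmp}_{\mathcal{B}}(\gamma')$, that is, $X_{i_1} \stackrel{\mathcal{B}}{\equiv} \gamma'$, contradicting the previous step. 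Alternatively one may invoke Lemma~\ref{lemma:udp_to_cancellation} to cancel $\beta$ on the right directly, since $\stackrel{\mathcal{B}}{\equiv}$ is decompositional by construction.

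The argument is short, and the only delicate point is that the decreasing condition on the composite transition $\alpha \stackrel{\ell}{\longrightarrow}_{\mathrm{dec}} \gamma$ really does project down to a decreasing transition on $X_{i_1}$, which is what allows Lemma~\ref{lem:tau_prime} to kick in; this is almost automatic but is the only piece that uses more than the unique decomposition machinery, so it deserves to be spelled out explicitly rather than waved through.
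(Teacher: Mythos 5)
Your proof is correct and follows exactly the route the paper intends: the paper gives no explicit proof but says the property is "lifted from constants to processes, regarding Lemma~\ref{lemma:udp_to_cancellation}", which is precisely your reduction of $\alpha = X_{i_1}\beta$ to the single prime $X_{i_1}$ via right cancellation (or, equivalently, via the string-level cancellation hidden in Lemma~\ref{lem:dcmp}) followed by an appeal to Lemma~\ref{lem:tau_prime}. Your explicit check that the decreasing condition projects from $\alpha \stackrel{\ell}{\longrightarrow}_{\mathrm{dec}} \gamma$ down to $X_{i_1} \stackrel{\ell}{\longrightarrow}_{\mathrm{dec}} \gamma'$ is a detail the paper leaves implicit, and it is right to spell it out.
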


\begin{remark}
Algebraically, a decomposition base $\mathcal{B}$ can be understood as a {\em finite presentation} of a monoid. In fact, $\mathcal{B}$ specifies the quotient monoid $\mathbf{C}^{*} / \stackrel{\mathcal{B}}{\equiv}$. Moreover,  the unique decomposition property says that the quotient monoid $\mathbf{C}^{*} / \stackrel{\mathcal{B}}{\equiv}$ is a free monoid.
From computational point of view, $\mathcal{B}$ is a {\em string rewriting system}. Rewriting rules are exact the equations in $\mathbf{E}$ from left to right.  Strings in {\em normal forms} are exact $\mathbf{P}^{*}$, the free monoid generated by $\mathbf{P}$.  All composites can be reduced to its prime decompositions. Any $\alpha \in \mathbf{C}^{*}$ has a normal form.  Church-Rosser property is guaranteed by the unique decomposition property, which makes checking $\alpha \stackrel{\mathcal{B}}{\equiv} \beta$ fairly easy by merely rewriting $\alpha$ and $\beta$ to their normal forms.
\end{remark}

\section{Description of the Algorithm}\label{sec:naive-algorithm}
This section serves as the description of our algorithm.
The algorithm takes the partition refinement approach.  It is a generalized version of the one in~\cite{DBLP:conf/fsttcs/CzerwinskiL10}, which we call CL algorithm. However, unlike the original CL algorithm, the correctness of our algorithm is not obvious and is much more difficult to prove.  This is the reason why we describe the algorithm  before we prove its correctness.  During the description, we also show some properties and requirements which make the algorithm work. A few properties are not proved until Section~\ref{sec:refinement_steps}.

\subsection{Partition Refinements with Decomposition Bases}
In order to decide whether $\alpha \simeq \beta$, we start with
an initial congruence relation $\equiv_0$,
and iteratively refine it. The refinement operation will be denoted by $\mathsf{Ref}$. By taking $\equiv_{i+1} = \mathsf{Ref}(\equiv_i)$, we have a sequence of congruence relations
\[
{\equiv_0},  {\equiv_1} , {\equiv_2} ,  \ldots
\]
which satisfy
\[
{\equiv_0} \supseteq  {\equiv_1} \supseteq {\equiv_2} \supseteq  \ldots.
\]
The correctness of the refinement operation adopted in this paper depends on the following requirements:
\begin{enumerate}
\item
${\simeq} \subseteq {\equiv_0}$.

\item
$\mathsf{Ref}({\simeq}) = {\simeq}$.

\item
If ${\simeq} \subsetneq {\equiv}$, then $ {\simeq} \subseteq \mathsf{Ref}({\equiv}) \subsetneq {\equiv}$.
\end{enumerate}
Once the sequence becomes stable, say ${\equiv_i} =  {\equiv_{i+1}}$, we have ${\simeq} = {\equiv_i}$.

\begin{remark}
The refinement operation taken in this paper leads to a monotonic sequence $\{\equiv_i\}_{i \in \omega}$.  Namely,
\[
{\equiv_0} \supseteq  {\equiv_1} \supseteq {\equiv_2} \supseteq  \ldots.
\]
This property is not necessary in a general framework of refinement. One alternative is to replace the third requirement above by the following two:
\begin{enumerate}
\item[3'.]
$\mathsf{Ref}$ is monotone. $\mathsf{Ref}({\equiv}) \subseteq \mathsf{Ref}({\equiv}')$ whenever ${\equiv} \subseteq {\equiv}'$.

\item[4'.]
If ${\simeq} \subsetneq {\equiv}$, then $\mathsf{Ref}({\equiv}) \neq {\equiv}$.
\end{enumerate}
\end{remark}

In the algorithm,  the congruences $\simeq$ and $\equiv_i$'s are all represented by decomposition bases. That is, all the intermediate $\equiv_i$ must be decompositional congruences. In the following, we will develop an implementation of the refinement steps in polynomial time.

On the whole, the algorithm is an iteration:
\begin{center}
\small
 \begin{tabular}{|p{12cm}|}\hline\vspace{-2ex}
\begin{itemize}
\item[1.]
Compute the initial base $\mathcal{B}_{\mathrm{init}}$ and set $\mathcal{B} = \mathcal{B}_{\mathrm{init}}$.

\item[2.]
Compute the base $\mathcal{B}'$ from $\mathcal{B}$.

\item[3.]
If  $\mathcal{B}'$ equals $\mathcal{B}$ then halt and return $\mathcal{B}$.

\item[4.]
Assign new base $\mathcal{B}'$ to $\mathcal{B}$ and go to step 2. \vspace{-1ex}
\end{itemize} \\
        \hline
\end{tabular}
\end{center}

Apparently, the algorithm relies on the base $\mathcal{B}_{\mathrm{init}}$ of the initial congruence $\equiv_0$ and the refinement step, computing $\mathcal{B}'$ from $\mathcal{B}$.

\subsection{Outline of the Algorithm}\label{subsec:naive-algorithm}

The framework of the algorithm is described as Fig.~\ref{Efficient_Algorithm}.

\subsubsection{Initial Congruence}

The base $\mathcal{B}_{\mathrm{init}} = (\mathbf{P}_{\mathrm{init}}, \mathbf{E}_{\mathrm{init}})$ of the initial congruence $\equiv_0$ is set as:
\begin{itemize}
\item
$\mathbf{P}_{\mathrm{init}} = X_1$,

\item
$\mathbf{E}_{\mathrm{init}}$ contains $X_i = \underbrace{X_1 \cdot X_1 \cdot \ldots \cdot X_1}_{\mathtt{norm}(X_i)\textrm{ times}}$ for every $i > 1$.
\end{itemize}

For  $\equiv_0$, we have the following properties.
\begin{lemma}
$\alpha \equiv_0 \beta$   if and only if $\mathtt{norm}(\alpha) = \mathtt{norm}(\beta)$.
\end{lemma}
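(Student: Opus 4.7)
The plan is to invoke Lemma~\ref{lem:dcmp}, which reduces the question $\alpha \equiv_0 \beta$ to the equality of the prime decompositions $\mathtt{dcmp}_{\mathcal{B}_{\mathrm{init}}}(\alpha)$ and $\mathtt{dcmp}_{\mathcal{B}_{\mathrm{init}}}(\beta)$. Because $\mathbf{P}_{\mathrm{init}} = \{X_1\}$, every such decomposition is a power of $X_1$. So the key computation is to show that for any $\alpha \in \mathbf{C}^{*}$,
\[
\mathtt{dcmp}_{\mathcal{B}_{\mathrm{init}}}(\alpha) \;=\; \underbrace{X_1 \cdots X_1}_{\mathtt{norm}(\alpha)\text{ times}}.
\]
Once this is established, the claim follows immediately.

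To prove this identity I would first argue that $\mathtt{norm}(X_1) = 1$. By totally normedness, the last transition of any chain $X_1 \Longrightarrow \cdots \longrightarrow \epsilon$ must be visible, hence of the form $X_j \stackrel{a}{\longrightarrow} \epsilon$ with $\mathtt{norm}(X_j) = 1$ (using Lemma~\ref{lem:normal_one} to rule out the trailing tail being nonempty). Since $X_1$ is chosen as a constant of minimum norm in the standard ordering, $\mathtt{norm}(X_1) = 1$. Thus the identity $\mathtt{dcmp}_{\mathcal{B}_{\mathrm{init}}}(X_i) = X_1^{\mathtt{norm}(X_i)}$ holds for every $i$: for $i > 1$ it is exactly an equation of $\mathbf{E}_{\mathrm{init}}$, while for $i = 1$ it reads $X_1 = X_1^{1}$. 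Extending $\mathtt{dcmp}_{\mathcal{B}_{\mathrm{init}}}$ homomorphically over concatenation, and using additivity of $\mathtt{norm}$, we get for $\alpha = X_{i_1} \cdots X_{i_k}$:
\[
\mathtt{dcmp}_{\mathcal{B}_{\mathrm{init}}}(\alpha) \;=\; X_1^{\mathtt{norm}(X_{i_1})} \cdots X_1^{\mathtt{norm}(X_{i_k})} \;=\; X_1^{\mathtt{norm}(\alpha)},
\]
with the edge case $\alpha = \epsilon$ giving the empty string, matching $\mathtt{norm}(\epsilon) = 0$.

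Finally, combining the two sides: $\mathtt{dcmp}_{\mathcal{B}_{\mathrm{init}}}(\alpha) = \mathtt{dcmp}_{\mathcal{B}_{\mathrm{init}}}(\beta)$ iff $X_1^{\mathtt{norm}(\alpha)} = X_1^{\mathtt{norm}(\beta)}$ iff $\mathtt{norm}(\alpha) = \mathtt{norm}(\beta)$, and Lemma~\ref{lem:dcmp} closes the proof. There is no serious obstacle here; the only subtle point is the preliminary verification that $\mathtt{norm}(X_1) = 1$, which needs the totally normed hypothesis (without it, one might have $X_1 \stackrel{\tau}{\longrightarrow} \epsilon$ and the claim would fail to be well-posed as a norm-preserving congruence).
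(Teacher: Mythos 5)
Your proof is correct and is exactly the argument the paper leaves implicit: the lemma is stated without proof as an immediate consequence of the definition of $\mathcal{B}_{\mathrm{init}}$, and your computation via Lemma~\ref{lem:dcmp} showing $\mathtt{dcmp}_{\mathcal{B}_{\mathrm{init}}}(\alpha) = X_1^{\mathtt{norm}(\alpha)}$ is the intended route. Your preliminary check that $\mathtt{norm}(X_1)=1$ (needed for the base to be norm-preserving and for the $i=1$ case of the decomposition identity) is a worthwhile detail the paper glosses over.
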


\begin{lemma}
\begin{enumerate}
\item
${\equiv_0} \supseteq {\simeq}$.

\item
$\equiv_0$ is a norm-preserving and decompositional congruence.
\end{enumerate}
\end{lemma}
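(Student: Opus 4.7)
The plan is to leverage the preceding lemma, which already pins down $\equiv_0$ as equivalence of norms, and reduce everything to elementary facts about $\mathtt{norm}$. Throughout, the key preliminary observation I would record explicitly is $\mathtt{norm}(X_1)=1$: in a $\mathrm{tnBPA}$ system every constant is normed and $\tau$-rules into $\epsilon$ are forbidden, so some constant carries a rule $Y \stackrel{a}{\longrightarrow} \epsilon$, giving $\mathtt{norm}(Y)=1$; by the non-decreasing norm ordering of a standard system then $\mathtt{norm}(X_1)=1$. This is what makes the initial base actually norm-preserving, since $\mathtt{norm}(X_1^{\mathtt{norm}(X_i)}) = \mathtt{norm}(X_i)\cdot\mathtt{norm}(X_1) = \mathtt{norm}(X_i)$.

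For part (1), I would simply combine Lemma~\ref{lem:norm-preserving} with the preceding lemma: if $\alpha \simeq \beta$ then $\mathtt{norm}(\alpha) = \mathtt{norm}(\beta)$, whence $\alpha \equiv_0 \beta$.

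For part (2), I would address the three clauses in turn. Norm-preservation of $\equiv_0$ is immediate from the preceding lemma. To see $\equiv_0$ is a congruence, suppose $\alpha \equiv_0 \alpha'$ and $\beta \equiv_0 \beta'$; then $\mathtt{norm}(\alpha\beta) = \mathtt{norm}(\alpha) + \mathtt{norm}(\beta) = \mathtt{norm}(\alpha') + \mathtt{norm}(\beta') = \mathtt{norm}(\alpha'\beta')$, which again via the preceding lemma gives $\alpha\beta \equiv_0 \alpha'\beta'$. For the decompositional property, I would first check that $X_1$ is indeed prime modulo $\equiv_0$: since $\mathbf{C}_0^{*} = \{\epsilon\}$, being composite would mean $X_1 \equiv_0 \epsilon$, i.e.\ $\mathtt{norm}(X_1) = 0$, contradicting Lemma~\ref{lem:normal_one}. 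Then for any $\alpha \in \mathbf{C}^{*}$, Lemma~\ref{lem:dcmp} (together with the explicit definition of $\mathtt{dcmp}_{\mathcal{B}_{\mathrm{init}}}$ on each $X_i$) gives $\mathtt{dcmp}_{\mathcal{B}_{\mathrm{init}}}(\alpha) = X_1^{\mathtt{norm}(\alpha)}$, and any purported prime decomposition $X_1^k$ of $\alpha$ must satisfy $k = k\cdot \mathtt{norm}(X_1) = \mathtt{norm}(\alpha)$, giving uniqueness.

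There is no real obstacle; the statement is essentially a bookkeeping consequence of the preceding norm-characterization lemma. The only subtlety worth flagging explicitly in the write-up is the fact that $\mathtt{norm}(X_1)=1$ in a standard $\mathrm{tnBPA}$, since without it the decomposition equations in $\mathbf{E}_{\mathrm{init}}$ would fail to be norm-preserving and the whole construction would be ill-defined.
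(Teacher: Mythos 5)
Your proposal is correct, and the paper itself states this lemma without any proof, so your write-up simply supplies the routine verification one would expect: part (1) from Lemma~\ref{lem:norm-preserving} plus the norm characterization of $\equiv_0$, and part (2) from additivity of $\mathtt{norm}$ and the observation that $X_1$ is the unique prime with $\mathtt{dcmp}_{\mathcal{B}_{\mathrm{init}}}(\alpha)=X_1^{\mathtt{norm}(\alpha)}$. Your explicit remark that $\mathtt{norm}(X_1)=1$ in a standard $\mathrm{tnBPA}$ system (via some rule $Y\stackrel{a}{\longrightarrow}\epsilon$ and the non-decreasing ordering) is a worthwhile addition the paper leaves implicit, since it is what makes $\mathbf{E}_{\mathrm{init}}$ norm-preserving at all.
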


\subsubsection{Properties of Refinement Steps}
In order to understand the framework of the algorithm,
We need to investigate the relationship between  $\mathcal{B}' = (\mathbf{P}', \mathbf{E}') $ and  $\mathcal{B}  = (\mathbf{P}, \mathbf{E})$ in step~2.  Later from the algorithm, we will confirm that ${\stackrel{\mathcal{B}'}{\equiv}} \subseteq {\stackrel{\mathcal{B}}{\equiv}}$. Under this condition,  we have the following key observation.
\begin{lemma}\label{lem:PP}
Let $\mathcal{B}  = (\mathbf{P}, \mathbf{E})$  and $\mathcal{B}' = (\mathbf{P}', \mathbf{E}') $  be two decomposition bases.
\begin{enumerate}
\item
If  ${\stackrel{\mathcal{B}'}{\equiv}} \subseteq {\stackrel{\mathcal{B}}{\equiv}}$, then  $\mathbf{P} \subseteq \mathbf{P}'$.

\item
If $\mathbf{P}' = \mathbf{P}$, then $\mathcal{B}' = \mathcal{B}$.
\end{enumerate}
\end{lemma}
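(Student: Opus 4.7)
The plan is to lean on Lemma~\ref{lem:dcmp} together with one small auxiliary claim about the shape of prime decompositions, both of which are driven by an induction on the index of constants. The auxiliary claim I will use is: for any decomposition base $\mathcal{B}=(\mathbf{P},\mathbf{E})$ and any $X_j \in \mathbf{C}$, $\mathtt{dcmp}_{\mathcal{B}}(X_j) \in \mathbf{C}_{j}^{*}$, and moreover $\mathtt{dcmp}_{\mathcal{B}}(X_j) \in \mathbf{C}_{j-1}^{*}$ whenever $X_j$ is composite modulo $\mathcal{B}$. I would prove this by induction on $j$: the prime case is immediate since $\mathtt{dcmp}_{\mathcal{B}}(X_j)=X_j$, and for the composite case the very definition of composite gives some $\alpha\in\mathbf{C}_{j-1}^{*}$ with $X_j \stackrel{\mathcal{B}}{\equiv}\alpha$, whence $\mathtt{dcmp}_{\mathcal{B}}(X_j)=\mathtt{dcmp}_{\mathcal{B}}(\alpha)$ is a concatenation of strings $\mathtt{dcmp}_{\mathcal{B}}(Y)$ for symbols $Y$ of index strictly less than $j$, each of which by the inductive hypothesis lies in $\mathbf{C}_{j-1}^{*}$.

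For part~(1) I argue by contradiction. Suppose $X_i\in\mathbf{P}\setminus\mathbf{P}'$. Since $X_i$ is composite modulo $\mathcal{B}'$, pick $\alpha\in\mathbf{C}_{i-1}^{*}$ with $X_i\stackrel{\mathcal{B}'}{\equiv}\alpha$. The hypothesis ${\stackrel{\mathcal{B}'}{\equiv}}\subseteq{\stackrel{\mathcal{B}}{\equiv}}$ then gives $X_i\stackrel{\mathcal{B}}{\equiv}\alpha$, so by Lemma~\ref{lem:dcmp} one has $X_i=\mathtt{dcmp}_{\mathcal{B}}(X_i)=\mathtt{dcmp}_{\mathcal{B}}(\alpha)$. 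Applying the auxiliary claim symbol-by-symbol to $\alpha\in\mathbf{C}_{i-1}^{*}$ forces $\mathtt{dcmp}_{\mathcal{B}}(\alpha)\in\mathbf{C}_{i-1}^{*}$, and such a string cannot equal the single symbol $X_i$, whose index is $i$. For part~(2), assume $\mathbf{P}'=\mathbf{P}$ and fix any composite $X\in\mathbf{C}-\mathbf{P}$; let $X=\alpha_X$ be its equation in $\mathbf{E}$ and $X=\alpha'_X$ its equation in $\mathbf{E}'$, so $\alpha_X,\alpha'_X\in\mathbf{P}^{*}$. From $X\stackrel{\mathcal{B}'}{\equiv}\alpha'_X$ and the inclusion hypothesis, $X\stackrel{\mathcal{B}}{\equiv}\alpha'_X$, and Lemma~\ref{lem:dcmp} combined with the fact that every symbol of $\alpha'_X$ lies in $\mathbf{P}$ and is therefore its own $\mathtt{dcmp}_{\mathcal{B}}$-image yields $\alpha_X=\mathtt{dcmp}_{\mathcal{B}}(X)=\mathtt{dcmp}_{\mathcal{B}}(\alpha'_X)=\alpha'_X$. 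Hence the equation sets coincide.

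The only slightly nontrivial step is isolating the auxiliary claim that $\mathtt{dcmp}_{\mathcal{B}}$ cannot raise the index of a constant (and strictly lowers it on composites). That is where the standard-input structure of the system is genuinely used; once it is in hand, both parts of the lemma reduce to mechanical applications of Lemma~\ref{lem:dcmp}.
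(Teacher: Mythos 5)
Your proof is correct and follows essentially the same route as the paper's: part (1) transfers the witnessing decomposition $X_i\stackrel{\mathcal{B}'}{\equiv}\alpha$ with $\alpha\in\mathbf{C}_{i-1}^{*}$ across the inclusion of congruences to contradict primality of $X_i$ modulo $\mathcal{B}$, and part (2) compares the two prime decompositions of a composite via Lemma~\ref{lem:dcmp} and the fact that strings over $\mathbf{P}^{*}$ are their own normal forms. Your auxiliary claim that $\mathtt{dcmp}_{\mathcal{B}}$ does not raise indices merely makes explicit a fact the paper's proof takes for granted (that the right-hand sides of equations for $X_i$ lie in $\mathbf{C}_{i-1}^{*}$), so it is a harmless, and indeed clarifying, addition.
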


\begin{proof}
\begin{enumerate}
\item
Suppose $X_i \not\in \mathbf{P}'$, we show $X_i \not\in \mathbf{P}$.  Since $X_i \not\in \mathbf{P}'$, there is an equation ${X_i = \alpha}$ in  $\mathbf{E}'$ for some $\alpha \in \mathbf{C}_{i-1}^{*}$, which means $X_i \stackrel{\mathcal{B}'}{\equiv} \alpha$. Because  ${\stackrel{\mathcal{B}'}{\equiv}}  \subseteq {\stackrel{\mathcal{B}}{\equiv}}$, we have $X_i \stackrel{\mathcal{B}}{\equiv} \alpha$, which means that $X_i$ is not a prime modulo $ \stackrel{\mathcal{B}}{\equiv}$. That is, $X_i \not\in \mathbf{P}$.

\item
Suppose that $\mathcal{B}' \subsetneq \mathcal{B}$. Then there is some $X_i$ such that $\mathtt{dcmp}_{\mathcal{B}}(X_i) \neq \mathtt{dcmp}_{\mathcal{B}'}(X_i)$.  We have $ X_i \stackrel{\mathcal{B}}{\equiv} \mathtt{dcmp}_{\mathcal{B}}(X_i) $ and $ X_i \stackrel{\mathcal{B}'}{\equiv} \mathtt{dcmp}_{\mathcal{B}'}(X_i) $  Since  ${\stackrel{\mathcal{B}'}{\equiv}}  \subseteq {\stackrel{\mathcal{B}}{\equiv}}$, we have $X_i \stackrel{\mathcal{B}}{\equiv} \mathtt{dcmp}_{\mathcal{B}'}(X_i)$, thus
 $\mathtt{dcmp}_{\mathcal{B}}(X_i)  \stackrel{\mathcal{B}}{\equiv} \mathtt{dcmp}_{\mathcal{B}'}(X_i)$.
Since
$\mathtt{dcmp}_{\mathcal{B}}(X_i)$ and $\mathtt{dcmp}_{\mathcal{B}'}(X_i)$ are both in $\mathbf{P}^{*}$, we have  $\mathtt{dcmp}_{\mathcal{B}}(X_i)  = \mathtt{dcmp}_{\mathcal{B}'}(X_i)$, a contradiction. \qed
\end{enumerate}

\end{proof}

According to Lemma~\ref{lem:PP}, we call constants in $\mathbf{P}$ {\em old primes} and constants in $\mathbf{P}' \setminus \mathbf{P}$
{\em new primes}.  During the iterative procedure of refinement, once a constant becomes prime, it is a prime thereafter.
If at certain step of iteration there is no new prime to add, the algorithm terminates.  Thus  we have the following property.
\begin{proposition}
There can be at most $n$ steps of iteration in the algorithm.
\end{proposition}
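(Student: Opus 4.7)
The plan is to use Lemma~\ref{lem:PP} to turn the cardinality $|\mathbf{P}|$ into a bounded monotone progress measure. Granting the invariant ${\stackrel{\mathcal{B}'}{\equiv}} \subseteq {\stackrel{\mathcal{B}}{\equiv}}$ that the paragraph preceding Lemma~\ref{lem:PP} promises to establish later, part~(1) of that lemma yields $\mathbf{P} \subseteq \mathbf{P}'$ at every iteration, and part~(2) says that the only way to have $\mathbf{P}' = \mathbf{P}$ is $\mathcal{B}' = \mathcal{B}$, which triggers Step~3 and halts the algorithm. In short, at every iteration either the algorithm halts in Step~3 or $|\mathbf{P}|$ strictly grows.

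Next I would pin down the endpoints. The initial base $\mathcal{B}_{\mathrm{init}}$ has $\mathbf{P}_{\mathrm{init}} = \{X_1\}$, so $|\mathbf{P}| = 1$ before any iteration, while $\mathbf{P} \subseteq \mathbf{C}$ forces $|\mathbf{P}| \le n$ throughout. After at most $n-1$ non-halting iterations we have $|\mathbf{P}| = n$; in the subsequent iteration the inclusion $\mathbf{P} \subseteq \mathbf{P}' \subseteq \mathbf{C}$ leaves no room for growth, so $\mathbf{P}' = \mathbf{P}$ and Lemma~\ref{lem:PP}(2) forces $\mathcal{B}' = \mathcal{B}$, causing termination in Step~3. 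Adding this final terminating iteration to the $n-1$ productive ones gives the claimed bound of at most $n$.

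There is no substantive obstacle here: the entire argument is just monotonicity of $|\mathbf{P}|$ plus the pigeonhole bound $|\mathbf{P}| \le |\mathbf{C}| = n$. The only non-trivial ingredient is the inclusion ${\stackrel{\mathcal{B}'}{\equiv}} \subseteq {\stackrel{\mathcal{B}}{\equiv}}$, which is an invariant of the specific refinement operation rather than something to be verified at this point; the present proposition takes it as given, consistent with the paper's announcement that this invariant will be confirmed when the refinement step is spelled out.
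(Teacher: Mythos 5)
Your proposal is correct and follows essentially the same route as the paper: it invokes Lemma~\ref{lem:PP} to show that primes can only be added, never removed, and bounds the number of iterations by the pigeonhole argument $|\mathbf{P}|\le|\mathbf{C}|=n$, with termination triggered exactly when no new prime appears. The paper's own justification is the informal paragraph preceding the proposition ("once a constant becomes prime, it is a prime thereafter"), and your accounting of the endpoints ($|\mathbf{P}_{\mathrm{init}}|=1$, at most $n-1$ productive iterations plus one final halting iteration) merely makes that count explicit.
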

This confirms the termination of the algorithm and provides an implementation of the step 3 by checking if there are new primes. The remaining thing is to study the
implementation of step~2.

\begin{figure}[tbp]

\begin{center}
 \begin{tabular}{|p{12cm}|}\hline\vspace{0ex}

\textbf{\normalsize  Framework of the algorithm:}

\begin{enumerate}
\item
Initialize $\mathcal{B} = (\mathbf{P}, \mathbf{E})$;

\item
$\mathbf{P}' \coloneqq \mathbf{P}$;

\item
\textbf{repeat}

\item
\qquad
$\mathbf{P} \coloneqq \mathbf{P}'$;
$\mathbf{E} \coloneqq \mathbf{E}'$;
$\mathbf{E}' \coloneqq \emptyset$;

\item
\qquad \textbf{for} each $X_i \in  \mathbf{C}\setminus \mathbf{P}$ \textbf{do}

\item
\qquad\qquad
$s  \coloneqq \mathtt{dcmp}_{(\mathbf{P}', \mathbf{E}')}(\alpha_i)$;

\item
\qquad\qquad $flag \coloneqq \mathbf{true}$;

\item
\qquad\qquad $k \coloneqq \mathtt{lpfindex}_{(\mathbf{P}, \mathbf{E})}(X_i)$;

\item
\qquad\qquad  \textbf{for} each $X_j \in \{ \mathtt{lpf}_{(\mathbf{P}, \mathbf{E})}(X_i) \} \cup \{X_{k+1}, \ldots, X_{i-1}\} \cap (\mathbf{P}' \setminus \mathbf{P})$ \textbf{do}

\item
\qquad\qquad\qquad
\textbf{if} $\mathtt{lpftest}_{(\mathbf{P}', \mathbf{E}')}(X_i, X_j)$
\textbf{then}

\item
\qquad\qquad\qquad\qquad
$\mathbf{E}'   \coloneqq \mathbf{E}' \cup \{ X_i = X_j  \cdot \mathtt{sffx}(\mathtt{norm}(X_i) - \mathtt{norm}(X_j); s) \}$;

\item
\qquad\qquad\qquad\qquad
$flag  \coloneqq \mathbf{false}$;

\item
\qquad\qquad\qquad
\textbf{end if}

\item
\qquad\qquad
\textbf{end for}

\item
\qquad\qquad
\textbf{if} $flag$ \textbf{then}

\item
\qquad\qquad\qquad
$\mathbf{P}' \coloneqq \mathbf{P}'\cup \{X_i\}$;

\item
\qquad\qquad
\textbf{end if}

\item
\qquad
\textbf{end for}

\item
\textbf{until} $\mathbf{P} = \mathbf{P}'$ \vspace{-1ex}
\end{enumerate}\\
        \hline
    \end{tabular}\vspace{-1ex}
\end{center}
\caption{Framework of Efficient Algorithm}\label{Efficient_Algorithm}
\end{figure}

\subsubsection{Computing $\mathcal{B}'$  from $\mathcal{B}$}

Computation of $\mathcal{B}'$ proceeds as follows. First we assign $\mathbf{P}' = \mathbf{P}$ and $\mathbf{E}' = \emptyset$. Then we add
appropriate constants to $\mathbf{P}'$ and appropriate equations to $\mathbf{E}'$. For every $i = 2, \ldots, n$ with $X_i \in \mathbf{C} \setminus \mathbf{P}$, we check whether there exists $\delta \in  (\mathbf{P}' \cap  \mathbf{C}_{i-1})^{*}$ such that $X_i  \stackrel{\mathcal{B}'}{\equiv} \delta$. If not, we
add $X_i$ to $\mathbf{P}'$, otherwise we add the appropriate equation $X_i = \alpha$ to $\mathbf{E}'$. We emphasize that at the time $X_i$ is treated, we have already known whether $X_j \in \mathbf{P}'$ and $\mathtt{dcmp}_{\mathcal{B}'} (X_j)$ for every $j<i$.

The efficient computation of  $\mathcal{B}'$  from $\mathcal{B}$ relies on the following three aspects:
\begin{enumerate}
\item
The candidates $\delta$ for testing $X_i  \stackrel{\mathcal{B}'}{\equiv} \delta$ must be `small'.

\item
We need an correct and efficient way of deciding whether $(X_i, \delta)$ can be put into $\mathbf{E}'$, i.e.~$X_i \stackrel{\mathcal{B}'}{\equiv} \delta$.

\item
We need an efficient representation and manipulation on strings.
\end{enumerate}

The representation and operations on long strings can be implemented in a systematic way  and will be discussed shortly in Section~\ref{subsec:longstring}.    For the moment, we suppose that all the operations on strings  appears in the algorithm are polynomial time computable.

\subsection{Small Set of Candidates}\label{subsec:candidates}

Now we confirm that, for every $X_i$, there is a small number of $\delta$'s which are required to determine whether $X_i \stackrel{\mathcal{B}'}{\equiv} \delta$.  In the case of realtime $\mathrm{nBPA}$, this is a significant discovery in CL algorithm, for it greatly reduces the expense of the algorithm.  The same way is taken here, but the rationality will be confirmed later.

Let $\mathcal{B} = (\mathbf{P}, \mathbf{E})$ be a decomposition base.  We say that prime constant $X_j \in \mathbf{P}$ is the
{\em leftmost prime factor} of $X_i$ wrt.~$\mathcal{B}$, denoted by $\mathtt{lpf}_{\mathcal{B}}(X_i) = X_j$, if $\mathtt{dcmp}_{\mathcal{B}}(X_i) = X_j\cdot \gamma$ for some $\gamma$. Clearly, $\mathtt{lpf}_{\mathcal{B}}(X_i)$ is unique.

Now fix one decreasing  transition rule $X_i \stackrel{\ell_i}{\longrightarrow}_{\mathrm{dec}} \alpha_i$ ($\ell_i = \tau$ is allowed. ) for every $X_i \in \mathbf{C}$.   We use $\mathtt{sffx}(h; \alpha)$ to denote the suffix of string $\alpha$ with norm $h$. Note that $\mathtt{sffx}(h; \alpha)$ is undefined unless $\alpha$ has such a suffix with norm $h$.
\begin{proposition}\label{prop:lpf1}
Let $\mathcal{B}$ be a decomposition base such that $\stackrel{\mathcal{B}}{\equiv}$ is a decreasing branching bisimulation (Definition~\ref{def:dec_bisimulation_expansion}, Section~\ref{subsec:Expansion_in_general}).  If $\mathtt{lpf}_{\mathcal{B}}(X_i) = X_j$, then
\[
X_i \stackrel{\mathcal{B}}{\equiv}  X_j \cdot \mathtt{sffx}(\mathtt{norm}(X_i)-\mathtt{norm}(X_j); \mathtt{dcmp}_{\mathcal{B}} (\alpha_i)).
\]
\end{proposition}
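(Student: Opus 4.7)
The plan is to start from $\mathtt{lpf}_{\mathcal{B}}(X_i) = X_j$, which lets me write $\mathtt{dcmp}_{\mathcal{B}}(X_i) = X_j \cdot \gamma$ for some $\gamma \in \mathbf{P}^{*}$ with $\mathtt{norm}(\gamma) = \mathtt{norm}(X_i) - \mathtt{norm}(X_j)$; by Lemma~\ref{lem:dcmp} this already gives $X_i \stackrel{\mathcal{B}}{\equiv} X_j \cdot \gamma$. The whole proof then reduces to showing
\[
\gamma \;=\; \mathtt{sffx}\bigl(\mathtt{norm}(X_i) - \mathtt{norm}(X_j);\; \mathtt{dcmp}_{\mathcal{B}}(\alpha_i)\bigr).
\]

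To prove this, I would use that $\stackrel{\mathcal{B}}{\equiv}$ is a decreasing branching bisimulation and match the fixed decreasing transition $X_i \stackrel{\ell_i}{\longrightarrow}_{\mathrm{dec}} \alpha_i$ from the $X_j \cdot \gamma$ side. The easy case to dispose of first is $\ell_i = \tau$ together with $\alpha_i \stackrel{\mathcal{B}}{\equiv} X_i$; here $\mathtt{dcmp}_{\mathcal{B}}(\alpha_i) = X_j \cdot \gamma$ and the suffix identity is immediate. Otherwise the branching bisimulation clause produces a genuine match of the form $X_j \cdot \gamma \Longrightarrow \delta'' \stackrel{\ell_i}{\longrightarrow}_{\mathrm{dec}} \delta$ with $X_i \stackrel{\mathcal{B}}{\equiv} \delta''$ and $\alpha_i \stackrel{\mathcal{B}}{\equiv} \delta$, in which every $\tau$-step of the prefix $X_j \cdot \gamma \Longrightarrow \delta''$ is state-preserving w.r.t.~$\stackrel{\mathcal{B}}{\equiv}$ and (by the decreasing hypothesis) itself decreasing.

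The hard part will be ruling out this silent prefix, i.e.~proving $\delta'' = X_j \cdot \gamma$. I would argue by contradiction: if the first step is $X_j \cdot \gamma \stackrel{\tau}{\longrightarrow}_{\mathrm{dec}} \eta$, then since $X_j \cdot \gamma \in \mathbf{P}^{*}$ Lemma~\ref{lem:tau_prime_string} delivers $\eta \not\stackrel{\mathcal{B}}{\equiv} X_j \cdot \gamma$, which directly contradicts state-preservation. Hence no $\tau$-step can occur and one has the direct match $X_j \cdot \gamma \stackrel{\ell_i}{\longrightarrow}_{\mathrm{dec}} \delta$.

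To finish, I would use $\mathtt{norm}(X_j) \geq 1$ (by total normedness), which forces the structural operational semantics to read this transition off the leading $X_j$; that is, $\delta = \gamma' \cdot \gamma$ with $X_j \stackrel{\ell_i}{\longrightarrow}_{\mathrm{dec}} \gamma'$. Decomposing both sides of $\alpha_i \stackrel{\mathcal{B}}{\equiv} \gamma' \cdot \gamma$ via Lemma~\ref{lem:dcmp} and using $\gamma \in \mathbf{P}^{*}$ gives
\[
\mathtt{dcmp}_{\mathcal{B}}(\alpha_i) \;=\; \mathtt{dcmp}_{\mathcal{B}}(\gamma') \cdot \gamma,
\]
which exhibits $\gamma$ as the unique suffix of $\mathtt{dcmp}_{\mathcal{B}}(\alpha_i)$ with norm $\mathtt{norm}(X_i) - \mathtt{norm}(X_j)$, completing the proof.
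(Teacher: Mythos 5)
Your proof is correct and follows essentially the same route as the paper's: match the fixed decreasing transition $X_i \stackrel{\ell_i}{\longrightarrow}_{\mathrm{dec}} \alpha_i$ against $X_j \cdot \gamma$, localize the response to the head $X_j$ (using $\mathtt{norm}(X_j) \geq 1$), and read the suffix off the unique decomposition of $\alpha_i$. The only divergence is that you additionally rule out the silent prefix of the match via Lemma~\ref{lem:tau_prime_string}; the paper instead keeps the possible prefix, writing the response as $X_j \Longrightarrow_{\mathrm{dec}} \stackrel{\ell_i}{\longrightarrow}_{\mathrm{dec}} \beta$ and noting that $\alpha_i \stackrel{\mathcal{B}}{\equiv} \beta \cdot \gamma$ already yields the suffix identity, so your extra step is valid but not needed.
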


\begin{proof}
From $\mathtt{lpf}_{\mathcal{B}}(X_i) = X_j$, we have $X_i \stackrel{\mathcal{B}}{\equiv} X_j \cdot \alpha$ for some $\alpha$ satisfying $\mathtt{norm}(\alpha) = \mathtt{norm}(X_i)-\mathtt{norm}(X_j)$. Knowing $\stackrel{\mathcal{B}}{\equiv}$ is a decreasing branching bisimulation, we consider the transition $X_i \stackrel{\ell_i}{\longrightarrow}_{\mathrm{dec}} \alpha_i$. There are two cases:
\begin{itemize}
\item
$\ell_i = \tau$ and $\alpha_i \stackrel{\mathcal{B}}{\equiv} X_j \cdot\alpha$. In this case, let $\beta = X_j$ and we have $\alpha_i  \stackrel{\mathcal{B}}{\equiv} \beta \cdot \alpha$.

\item
$\ell_i \neq \tau$ or $\alpha_i \not\stackrel{\mathcal{B}}{\equiv} X_j \cdot\alpha$. In this case, we have $X_j \Longrightarrow_{\mathrm{dec}}  \stackrel{\ell_i}{\longrightarrow}_{\mathrm{dec}} \beta$ such that $\alpha_i \stackrel{\mathcal{B}}{\equiv} \beta\cdot\alpha$.
\end{itemize}
In either case, we have $\alpha_i \stackrel{\mathcal{B}}{\equiv} \beta\cdot\alpha$ for some $\beta$. According to the fact that $\stackrel{\mathcal{B}}{\equiv}$ is decompositional, we get $\mathtt{dcmp}_{\mathcal{B}}(\alpha_i) =  \mathtt{dcmp}_{\mathcal{B}}(\beta) \cdot \mathtt{dcmp}_{\mathcal{B}}(\alpha)$, and consequently $\mathtt{dcmp}_{\mathcal{B}}(\alpha) = \mathtt{sffx}(\mathtt{norm}(\alpha); \mathtt{dcmp}_{\mathcal{B}} (\alpha_i)) = \mathtt{sffx}(\mathtt{norm}(X_i)-\mathtt{norm}(X_j); \mathtt{dcmp}_{\mathcal{B}} (\alpha_i))$, hence $\alpha \stackrel{\mathcal{B}}{\equiv} \mathtt{sffx}(\mathtt{norm}(X_i)-\mathtt{norm}(X_j); \mathtt{dcmp}_{\mathcal{B}} (\alpha_i))$. Recall that $X_i \stackrel{\mathcal{B}}{\equiv} X_j \cdot \alpha$, we get $X_i \stackrel{\mathcal{B}}{\equiv} X_j \cdot\mathtt{sffx}(\mathtt{norm}(X_i)-\mathtt{norm}(X_j); \mathtt{dcmp}_{\mathcal{B}} (\alpha_i))$. \qed
\end{proof}

Assume that  ${\stackrel{\mathcal{B}'}{\equiv}} \subseteq  {\stackrel{\mathcal{B}}{\equiv}}$.
Comparing $\mathtt{lpf}_{\mathcal{B}'}(X_i)$ with $\mathtt{lpf}_{\mathcal{B}}(X_i)$, there are two possibilities: $\mathtt{lpf}_{\mathcal{B}'}(X_i) = \mathtt{lpf}_{\mathcal{B}}(X_i)$ or $\mathtt{lpf}_{\mathcal{B}'}(X_i) \neq \mathtt{lpf}_{\mathcal{B}}(X_i)$. If $\mathtt{lpf}_{\mathcal{B}'}(X_i) \neq \mathtt{lpf}_{\mathcal{B}}(X_i)$, the following property confirms that $\mathtt{lpf}_{\mathcal{B}'}(X_i)$ must be a new prime.
\begin{proposition}\label{prop:lpf2}
Assume that  ${\stackrel{\mathcal{B}'}{\equiv}} \subseteq  {\stackrel{\mathcal{B}}{\equiv}}$.  Let $X_{j'} = \mathtt{lpf}_{\mathcal{B}'}(X_i)$ and  $X_j = \mathtt{lpf}_{\mathcal{B}}(X_i)$. If $j' \neq j$, then $j' > j$ and $X_{j'} \in \mathbf{P}' \setminus \mathbf{P}$.
\end{proposition}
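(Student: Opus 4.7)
The plan is to read off the two decompositions and exploit the finer relation. Write $X_j\gamma = \mathtt{dcmp}_{\mathcal{B}}(X_i)$ and $X_{j'}\gamma' = \mathtt{dcmp}_{\mathcal{B}'}(X_i)$. From $X_i \stackrel{\mathcal{B}'}{\equiv} X_{j'}\gamma'$ and ${\stackrel{\mathcal{B}'}{\equiv}} \subseteq {\stackrel{\mathcal{B}}{\equiv}}$ we get $X_{j'}\gamma' \stackrel{\mathcal{B}}{\equiv} X_j\gamma$, and since $X_j\gamma \in \mathbf{P}^{*}$ is already in $\mathcal{B}$-normal form, Lemma~\ref{lem:dcmp} delivers the master identity
\[
\mathtt{dcmp}_{\mathcal{B}}(X_{j'}) \cdot \mathtt{dcmp}_{\mathcal{B}}(\gamma') \ = \ X_j\gamma.
\]

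For the claim $X_{j'} \in \mathbf{P}' \setminus \mathbf{P}$, I would argue by contradiction: if $X_{j'} \in \mathbf{P}$ then $\mathtt{dcmp}_{\mathcal{B}}(X_{j'}) = X_{j'}$, so comparing the leading symbols in the master identity forces $X_{j'} = X_j$, contradicting $j \ne j'$. Hence $X_{j'} \notin \mathbf{P}$; and $X_{j'} \in \mathbf{P}'$ holds automatically since $X_{j'} = \mathtt{lpf}_{\mathcal{B}'}(X_i)$.

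For the claim $j' > j$, I would unfold the definition of ``composite modulo $\stackrel{\mathcal{B}}{\equiv}$'': as $X_{j'} \notin \mathbf{P}$, there exists $\alpha \in \mathbf{C}_{j'-1}^{*}$ with $X_{j'} \stackrel{\mathcal{B}}{\equiv} \alpha$, and thus $\mathtt{dcmp}_{\mathcal{B}}(X_{j'}) = \mathtt{dcmp}_{\mathcal{B}}(\alpha)$. The key auxiliary step is to show, by induction on the index $l$, that every composite constant $X_l$ satisfies $\mathtt{dcmp}_{\mathcal{B}}(X_l) \in (\mathbf{P} \cap \mathbf{C}_{l-1})^{*}$: the base case is vacuous because Lemma~\ref{lem:normal_one} forbids $X_1$ from being composite, and the inductive step expands $\alpha$ constant-by-constant, each of index strictly below $l$, and invokes the hypothesis. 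Applying this to $X_{j'}$ and reading the leading symbol of the master identity places $X_j \in \mathbf{C}_{j'-1}$, which is $j < j'$.

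The only delicate ingredient here is the inductive sub-lemma; everything else is bookkeeping through Lemma~\ref{lem:dcmp} and Lemma~\ref{lem:PP}(1). The pitfall I anticipate is the edge case $\mathtt{norm}(X_j) = \mathtt{norm}(X_{j'})$: norm-preservation alone cannot separate $j$ from $j'$, so one really does need the downward-closedness of prime decompositions supplied by the induction.
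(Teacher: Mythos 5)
Your proof is correct. For the claim $X_{j'}\in\mathbf{P}'\setminus\mathbf{P}$ you take essentially the same route as the paper: assume $X_{j'}\in\mathbf{P}$ and derive a contradiction with unique decomposition modulo $\stackrel{\mathcal{B}}{\equiv}$ (you just make the contradiction explicit by passing to $\mathcal{B}$-normal forms via Lemma~\ref{lem:dcmp} and comparing leading symbols, which is the same argument in concrete form). Where you go beyond the paper is the claim $j'>j$: the paper's one-line proof is silent on it, whereas you supply the needed ingredient, namely the inductive sub-lemma that $\mathtt{dcmp}_{\mathcal{B}}(X_l)\in(\mathbf{P}\cap\mathbf{C}_{l-1})^{*}$ for every composite $X_l$; your observation that norm-preservation alone cannot rule out $\mathtt{norm}(X_j)=\mathtt{norm}(X_{j'})$ is accurate, so this downward-closedness (which the algorithm in fact maintains by only ever adding equations $X_i=\delta$ with $\delta\in(\mathbf{P}'\cap\mathbf{C}_{i-1})^{*}$, but which you correctly derive from the definition of composite alone) is genuinely needed to close the index comparison.
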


\begin{proof}
Assume on the contrary that $X_{j'} \in \mathbf{P}$, we have $X_i \stackrel{\mathcal{B}}{\equiv} X_j \cdot \gamma \stackrel{\mathcal{B}}{\equiv} X_{j'} \cdot \gamma'$, which violates unique decomposition property of $\stackrel{\mathcal{B}}{\equiv}$. \qed
\end{proof}

Now we can illustrate the algorithm framework in Fig.~\ref{Efficient_Algorithm}.  The \textbf{repeat} block at line~3 realize the procedure of iteration. At every iteration, $\mathcal{B} = (\mathbf{P}, \mathbf{E})$ is updated to $\mathcal{B}' = (\mathbf{P}', \mathbf{E}')$. During an iteration, every constant $X_i$ which is current composite is treated in the fixed index order via the outer \textbf{for} block at line~5. Note that, when $X_i$ is treated, $\mathtt{dcmp}_{\mathcal{B}'} (\alpha_i)$ can be determined. Then the inner \textbf{for} block at line~9 is used for discovering a new decomposition of $X_i$ for $\mathcal{B}'$ by determining the leftmost prime factor $X_j$ of $X_i$. By Proposition~\ref{prop:lpf2}, $X_j$ can be unchanged (in the case $X_j = \mathtt{lpf}_{(\mathbf{P}, \mathbf{E})}(X_i)$), or be a new prime less than $X_i$ (in the case $X_j  \in   (\mathbf{P}' \setminus \mathbf{P})$ and  $\mathtt{lpfindex}_{\mathcal{B}}(X_i) <  j < i$), or be $X_i$ itself (in the case no $X_j$ is found in the inner \textbf{for} block at line~9).   In the last case, variable $flag$ which is set true at line~7 remains being $\mathbf{true}$ and $X_i$ is added to the set $\mathbf{P}'$ of new primes (line~15).  The operation $\mathtt{lpfindex}_{\mathcal{B}}(X_i)$ returns index $k$ such that $\mathtt{lpf}_{\mathcal{B}}(X_i) = X_k$.
Using Proposition~\ref{prop:lpf1} and Proposition~\ref{prop:lpf2}, the set of candidates can be confined into the form of
$X_j \cdot \mathtt{sffx}(\mathtt{norm}(X_i)-\mathtt{norm}(X_j); \mathtt{dcmp}_{\mathcal{B}'} (\alpha_i))$,
Note that, in the inner \textbf{for} block,  procedure $\mathtt{lpftest}_{\mathcal{B}'}(X_i,X_j)$ is used  to check whether $X_j$ is the leftmost prime factor of $X_i$ modulo $\stackrel{\mathcal{B}'}{\equiv}$. In fact, it tests whether
\begin{equation}\label{eqn:testing_Xi_alpha}
X_i \stackrel{\mathcal{B}'}{\equiv}  X_j \cdot \mathtt{sffx}(\mathtt{norm}(X_i)-\mathtt{norm}(X_j); \mathtt{dcmp}_{\mathcal{B}'} (\alpha_i)).
\end{equation}

In the rest part of this paper,  the right hand side of Equation~(\ref{eqn:testing_Xi_alpha}) is denoted by $\delta$. We remark that $\delta \in \mathbf{P}' \cap \mathbf{C}_{i-1}$.  Our goal is to find an efficient way to check whether $X_i \stackrel{\mathcal{B}'}{\equiv} \delta$.

\begin{remark}
The small number of candidates of $\delta$ relies on Proposition~\ref{prop:lpf1}, which requires that $\stackrel{\mathcal{B}'}{\equiv}$ be a decreasing bisimulation. The definition of decreasing bisimulation will be introduced in Section~\ref{subsec:Expansion_in_general}. According to the refinement operation defined in Section~\ref{sec:refinement_steps},  $\stackrel{\mathcal{B}'}{\equiv}$ is assured to be a decreasing bisimulation.
\end{remark}

\subsection{Efficient Way of Testing $X_i \stackrel{\mathcal{B}'}{\equiv} \delta$}

The algorithm framework described in Fig.~\ref{Efficient_Algorithm} tells us an efficient way for the implementation of partition refinement on the unique decomposition congruences.  Up to now, we have not discuss how the refinement operation is and how shall we realize it efficiently. That is, how $\mathtt{lpftest}_{(\mathbf{P}', \mathbf{E}')}(X_i, X_j)$ at line~10 is implemented. Now we present the details.  That is , we present an efficient way to check whether $X_i \stackrel{\mathcal{B}'}{\equiv} \delta$. In this way, we define $\mathcal{B}'$ from $\mathcal{B}$ via the algorithm.

The whole testing is described in Fig.~\ref{Checking_EXP}. In later sections, we have further discussions on this implementation.  For now, we only remark that, in the situation of realtime $\mathrm{nBPA}$, this implementation coincides with CL algorithm.   The proof of correctness is deferred to Section~\ref{sec:refinement_steps}.

\begin{figure}[tbp]
\begin{center}
\begin{tabular}{|p{12cm}|}\hline \vspace{0ex}
\textbf{\normalsize Checking  $X_i \stackrel{\mathcal{B}'}{\equiv} \delta$:}
\begin{enumerate}
\item
\textbf{test}  $\mathtt{dcmp}_{\mathcal{B}}(X_i) = \mathtt{dcmp}_{\mathcal{B}}(\delta)$. if so, goto step~2; else \textbf{reject} $(X_i, \delta)$. \vspace{1ex}

\item
\textbf{test} for every $X_i  \stackrel{\ell}{\longrightarrow}_{\mathrm{dec}} \alpha$, we have
\begin{enumerate}
\item
    either  $\ell = \tau$ and $\mathtt{dcmp}_{\mathcal{B}'}(\alpha) = \delta$;

\item
     or $\delta  \stackrel{\ell}{\longrightarrow}_{\mathrm{dec}} \beta$ for some $\beta$ and $\mathtt{dcmp}_{\mathcal{B}'} (\alpha) = \mathtt{dcmp}_{\mathcal{B}'} (\beta)$.
\end{enumerate}
If so, goto step~3; else, \textbf{reject} $(X_i, \delta)$. \vspace{1ex}

\item
\textbf{test} for every $X_i  \stackrel{\ell}{\longrightarrow}_{\mathrm{inc}} \alpha$, we have

 $\delta  \stackrel{\ell}{\longrightarrow}_{\mathrm{inc}} \beta$ for some $\beta$ and $\mathtt{dcmp}_{\mathcal{B}} (\alpha) = \mathtt{dcmp}_{\mathcal{B}} (\beta)$.

If so, goto step~4; else, \textbf{reject} $(X_i, \delta)$. \vspace{1ex}

\item
\textbf{test} whether $X_i \stackrel{\tau}{\longrightarrow}_{\mathrm{dec}} \alpha$ for some $\alpha$ such that $\mathtt{dcmp}_{\mathcal{B}'} (\alpha) = \delta $.

If so, goto step~7; else, goto step~5. \vspace{1ex}

\item
\textbf{test}  for every $\delta  \stackrel{\ell}{\longrightarrow}_{\mathrm{dec}} \beta$, we have

    $X_i \stackrel{\ell}{\longrightarrow}_{\mathrm{dec}}  \alpha$ for some $\alpha$ such that $\mathtt{dcmp}_{\mathcal{B}'}(\alpha) = \mathtt{dcmp}_{\mathcal{B}'} (\beta)$.

If so, goto step~6; else, \textbf{reject} $(X_i, \delta)$. \vspace{1ex}

\item
\textbf{test}  for every  $\delta \stackrel{\ell}{\longrightarrow}_{\mathrm{inc}} \beta$, we have

    $X_i \stackrel{\ell}{\longrightarrow}_{\mathrm{inc}}  \alpha$ for some $\alpha$ such that $\mathtt{dcmp}_{\mathcal{B}}(\alpha) = \mathtt{dcmp}_{\mathcal{B}} (\beta)$.

If so, goto step~7; else, \textbf{reject} $(X_i, \delta)$. \vspace{1ex}

\item
\textbf{accept} $(X_i, \delta)$.
\vspace{-1ex}
\end{enumerate} \\
        \hline
\end{tabular}\vspace{-1ex}
\end{center}
\caption{Checking  $X_i \stackrel{\mathcal{B}'}{\equiv} \delta$}\label{Checking_EXP}
\end{figure}

We can state two properties which need to be used to make the whole framework Fig.~\ref{Efficient_Algorithm} work.
\begin{lemma}
In every iteration of Fig.~\ref{Efficient_Algorithm}, we get a decomposition base $\mathcal{B}'$ from $\mathcal{B}$. The following hold:
\begin{enumerate}
\item
${\stackrel{\mathcal{B}'}{\equiv}} \subseteq {\stackrel{\mathcal{B}}{\equiv}}$.

\item
${\stackrel{\mathcal{B}'}{\equiv}}$ is a decreasing bisimulation.
\end{enumerate}
\end{lemma}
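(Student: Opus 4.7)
The plan is to handle the two claims in turn, both by a constant-by-constant analysis of what Fig.~\ref{Checking_EXP} actually enforces each time the algorithm in Fig.~\ref{Efficient_Algorithm} adds an equation to $\mathbf{E}'$, followed by a congruence extension to arbitrary processes.

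For claim~(1), I would first prove the auxiliary statement $X_i \stackrel{\mathcal{B}}{\equiv} \mathtt{dcmp}_{\mathcal{B}'}(X_i)$ by induction on $i$. If $X_i \in \mathbf{P}'$ there is nothing to show. Otherwise, line~11 of Fig.~\ref{Efficient_Algorithm} added $X_i = \delta$ with $\delta = X_j \cdot \mathtt{sffx}(\mathtt{norm}(X_i)-\mathtt{norm}(X_j); \mathtt{dcmp}_{(\mathbf{P}',\mathbf{E}')}(\alpha_i))$ to $\mathbf{E}'$ only because the call $\mathtt{lpftest}_{(\mathbf{P}',\mathbf{E}')}(X_i, X_j)$ accepted, and in particular step~1 of Fig.~\ref{Checking_EXP} succeeded, giving $\mathtt{dcmp}_{\mathcal{B}}(X_i) = \mathtt{dcmp}_{\mathcal{B}}(\delta)$; by Lemma~\ref{lem:dcmp} this reads $X_i \stackrel{\mathcal{B}}{\equiv} \delta$, and since $\delta = \mathtt{dcmp}_{\mathcal{B}'}(X_i)$ by construction, the induction step closes. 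Lifting through the obvious monoid-homomorphism property of $\mathtt{dcmp}_{\mathcal{B}'}$ yields $\alpha \stackrel{\mathcal{B}}{\equiv} \mathtt{dcmp}_{\mathcal{B}'}(\alpha)$ for every $\alpha \in \mathbf{C}^{*}$. Therefore $\alpha \stackrel{\mathcal{B}'}{\equiv} \beta$ forces $\mathtt{dcmp}_{\mathcal{B}'}(\alpha) = \mathtt{dcmp}_{\mathcal{B}'}(\beta)$ via Lemma~\ref{lem:dcmp}, and chaining through the auxiliary statement concludes $\alpha \stackrel{\mathcal{B}}{\equiv} \beta$.

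For claim~(2), I would argue that each pair $(X_i, \mathtt{dcmp}_{\mathcal{B}'}(X_i))$ with $X_i$ a $\mathcal{B}'$-composite satisfies the decreasing bisimulation clauses, and then extend to arbitrary $\stackrel{\mathcal{B}'}{\equiv}$-equivalent pairs by congruence. The subtests of Fig.~\ref{Checking_EXP} are tailored exactly to that purpose: step~2 confronts a decreasing $X_i$-transition either by a state-preserving $\tau$ move (case (a)) or by a matching decreasing $\delta$-move with $\stackrel{\mathcal{B}'}{\equiv}$-equivalent successors (case (b)); step~4 together with step~5 handles the symmetric direction on $\delta$'s decreasing transitions; the auxiliary checks at steps~3 and~6 secure the ``increasing'' compatibility which is needed when the relation is propagated through concatenation. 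Because $\stackrel{\mathcal{B}'}{\equiv}$ is a norm-preserving decompositional congruence and every decreasing transition of $\alpha\gamma$ arises as a decreasing transition of $\alpha$ composed with $\gamma$, these clauses lift from constants to arbitrary concatenations and so to the whole relation $\stackrel{\mathcal{B}'}{\equiv}$.

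The main obstacle I anticipate lies in the congruence lifting of claim~(2) under the branching regime. The state-preserving $\tau$-case handled by step~4 of Fig.~\ref{Checking_EXP} only asserts the existence of some $\tau$-successor $\alpha$ of $X_i$ with $\mathtt{dcmp}_{\mathcal{B}'}(\alpha) = \delta$; reconciling this with the branching clause for $X_i\gamma \stackrel{\tau}{\longrightarrow} \alpha\gamma$ against $\delta\gamma$ requires that the single-step $\tau$-move remain state-preserving after concatenation, and Lemma~\ref{lem:tau_prime_string} combined with the norm-preserving property of $\stackrel{\mathcal{B}'}{\equiv}$ is precisely what rules out spurious change-of-state $\tau$-moves on strings over primes. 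Pinning down this lifting carefully, and matching each line of Fig.~\ref{Checking_EXP} against the (as-yet-unreferenced) Definition~\ref{def:dec_bisimulation_expansion}, is the technical core that Section~\ref{sec:refinement_steps} must supply; here, I would be content to reduce the lemma to those forthcoming ingredients.
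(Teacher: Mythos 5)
Your proposal is correct and follows essentially the same route as the paper, whose own proof of this lemma is only the two sentences ``Item~1 is an inference directly from Fig.~\ref{Checking_EXP}'' and ``Item~2 will be discussed in detail in Section~\ref{sec:refinement_steps}.'' Your induction establishing $X_i \stackrel{\mathcal{B}}{\equiv} \mathtt{dcmp}_{\mathcal{B}'}(X_i)$ from step~1 of Fig.~\ref{Checking_EXP} together with Lemma~\ref{lem:dcmp} is a sound elaboration of the first claim, and your deferral of the second claim to the machinery of Section~\ref{sec:refinement_steps} (Definition~\ref{def:dec_bisimulation_expansion} and the congruence/lifting arguments there) matches exactly what the paper does.
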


\begin{proof}
 Item~1  is an inference directly from Fig.~\ref{Checking_EXP}. Item~2  will be discussed in detail in Section~\ref{sec:refinement_steps}.  \qed
\end{proof}

\subsection{Operations on Long Strings}\label{subsec:longstring}
In the algorithm, we meet quite a few operations on strings whose length is exponential. Thus we need an efficient way to represent and manipulate them.
This sort of improvement actually appears in all the previous work on strong bisimilarity checking on normed $\mathrm{BPA}$. There are many different ways to do so, and nothing special in our situation.  Thus we only sketch the idea and provide some literature.

In the previous work~\cite{DBLP:journals/tcs/HirshfeldJM96,DBLP:conf/mfcs/LasotaR06,DBLP:conf/fsttcs/CzerwinskiL10}, a long string is represented by a {\em straight-line program} (SLP), a
context-free grammar (typically in Chomsky normal form) which generates only one word.
The efficient algorithms rely on an efficient implementation of equality checking on SLP-compressed strings, which is typically implemented (as a special case) by an efficient algorithm of compressed pattern matching such as~\cite{DBLP:conf/cpm/MiyazakiST97,DBLP:conf/dagstuhl/Lifshits06}.  Lohrey~\cite{DBLP:journals/gcc/Lohrey12} gives a nice survey on algorithms on SLP-compressed strings.

One deficiency of the above scheme is that  the procedure for string equality checking is called every time two strings need to compare, and previous computations are completely ignored. In~\cite{DBLP:journals/algorithmica/MehlhornSU97} and its improved version~\cite{DBLP:conf/soda/AlstrupBR00}, a data structure for finite set of strings is maintained, which supports concatenation, splitting, and equality checking operations.  Czerwi\'{n}ski~\cite{CzerwinskiPhD} uses this technique to improve his previous algorithm~\cite{DBLP:conf/fsttcs/CzerwinskiL10}.

\subsection{Analysis of Time Complexity}
Now we give a very brief discussion of the time complexity of the whole algorithm.  Some less important factors are deliberately neglected.  Readers are  referred to~Czerwi\'{n}ski~\cite{CzerwinskiPhD}.

Consider the algorithm described in Fig.~\ref{Efficient_Algorithm}.   The dominating factor is the operation $\mathtt{lpftest}_{(\mathbf{P}', \mathbf{E}')}(X_i, X_j)$ at line~10.  We claim that there are totally $\mathcal{O}(n^2)$ invocations of $\mathtt{lpftest}$.

In the implementation of $\mathtt{lpftest}$, we call the procedures  described in Fig.~\ref{Checking_EXP}.  The procedure treats processes as {\em normed strings}.  Therefore, the time consumed depends on the costs of the operations on normed strings. We suppose that there are three operations of `normed' strings: $\mathtt{Concatenate}(\sigma_1, \sigma_2)$, $\mathtt{Split}(\sigma, h)$, and $\mathtt{Equal}(\sigma_1, \sigma_2)$, which are supposed to spend time $\mathsf{C}(N)$, $\mathsf{S}(N)$, and $\mathsf{E}(N)$, respectively.  Claimed in~\cite{CzerwinskiPhD},   the best implementation  is $\mathsf{C}(N) = \mathcal{O}(N\cdot \mathrm{polylog}N)$, $\mathsf{S}(N) = \mathcal{O}(N\cdot\mathrm{polylog}N)$, and $\mathsf{E}(N) = \mathcal{O}(\mathrm{polylog}N)$.

Consider the procedures in Fig.~\ref{Checking_EXP}.  The most time-consuming part is still the part of matching, which can perform $\mathcal{O}(N^2)$ times of $\mathtt{Equal}$ operations.   This makes the total time of checking branching bisimilarity no difference from checking strong bisimilarity. The overall running time is $\mathcal{O}(N^4\cdot\mathrm{polylog}N)$.

\section{The Refinement Operation}\label{sec:refinement_steps}

Now, we start to discuss the correctness of the algorithm.
In order to prove the correctness, we need to answer two questions:

\begin{enumerate}
\item
What is the refinement operation corresponding to a step of iteration in our algorithm?

\item
How our algorithm can be derived from the refinement operation.
 \end{enumerate}

In this section we answer the first question, and the second question will be answered in Section~\ref{sec:correctness}.

Actually how to define the refinement operation for our algorithm is really not clear at the first glance. Thus we review the refinement operation adopted in CL algorithm in Section~\ref{subsec:Expansion_Pre}. Then in Section~\ref{subsec:Expansion_Pre} we find another way to define and understand the refinement operation in Section~\ref{subsec:Expansion_Another_undestanding}.  Following this understanding, we attempt to define the refinement operation which turns out to be suitable for our algorithm in Section~\ref{subsec:Expansion_in_general}, and then show some basic properties.

\subsection{The Refinement Operation for Realtime $\mathrm{nBPA}$}\label{subsec:Expansion_Pre}
Before going into the tricky part of our definition of the refinement relation, let us review the reason why the algorithm is correct for the realtime  $\mathrm{nBPA}$. This special case is comparatively easy.  For convenience, we describe the procedure of checking $X_i \stackrel{\mathcal{B}'}{\equiv} \delta$ for realtime $\mathrm{nBPA}$ in Fig.~\ref{Checking_EXP_REALTIME}. This is nothing but a special case of~Fig.~\ref{Checking_EXP}, and it is a slightly simplified version of the corresponding procedure in CL algorithm.

\begin{figure}[tbp]
\begin{center}
\begin{tabular}{|p{12cm}|}\hline \vspace{0ex}
\textbf{\normalsize Checking  $X_i \stackrel{\mathcal{B}'}{\equiv} \delta$ in the case of REALTIME systems:}
\begin{enumerate}
\item
\textbf{test}
  $\mathtt{dcmp}_{\mathcal{B}}(X_i) = \mathtt{dcmp}_{\mathcal{B}}(\delta)$.

If so, goto step~2; else \textbf{reject} $(X_i, \delta)$. \vspace{1ex}

\item
\textbf{test} for every $X_i  \stackrel{\ell}{\longrightarrow}_{\mathrm{dec}} \alpha$, we have
\begin{quote}
$\delta  \stackrel{\ell}{\longrightarrow}_{\mathrm{dec}} \beta$ for some $\beta$ and $\mathtt{dcmp}_{\mathcal{B}'} (\alpha) = \mathtt{dcmp}_{\mathcal{B}'} (\beta)$.
\end{quote}
If so, goto step~3; else, \textbf{reject} $(X_i, \delta)$. \vspace{1ex}

\item
\textbf{test} for every $X_i  \stackrel{\ell}{\longrightarrow}_{\mathrm{inc}} \alpha$, we have
\begin{quote}
 $\delta  \stackrel{\ell}{\longrightarrow}_{\mathrm{inc}} \beta$ for some $\beta$ and $\mathtt{dcmp}_{\mathcal{B}} (\alpha) = \mathtt{dcmp}_{\mathcal{B}} (\beta)$.
\end{quote}
If so, goto step~4; else, \textbf{reject} $(X_i, \delta)$. \vspace{1ex}

\item
\textbf{test}  for every $\delta  \stackrel{\ell}{\longrightarrow}_{\mathrm{dec}} \beta$, we have
\begin{quote}
    $X_i \stackrel{\ell}{\longrightarrow}_{\mathrm{dec}}  \alpha$ for some $\alpha$ such that $\mathtt{dcmp}_{\mathcal{B}'}(\alpha) = \mathtt{dcmp}_{\mathcal{B}'} (\beta)$.
\end{quote}
If so, goto step~5; else, \textbf{reject} $(X_i, \delta)$. \vspace{1ex}

\item
\textbf{test}  for every  $\delta \stackrel{\ell}{\longrightarrow}_{\mathrm{inc}} \beta$, we have
\begin{quote}
    $X_i \stackrel{\ell}{\longrightarrow}_{\mathrm{inc}}  \alpha$ for some $\alpha$ such that $\mathtt{dcmp}_{\mathcal{B}}(\alpha) = \mathtt{dcmp}_{\mathcal{B}} (\beta)$.
\end{quote}
If so, goto step~6; else, \textbf{reject} $(X_i, \delta)$. \vspace{1ex}

\item
\textbf{accept} $(X_i, \delta)$.
\vspace{-1ex}
\end{enumerate} \\
        \hline
\end{tabular}\vspace{-1ex}
\end{center}
\caption{Checking  $X_i \stackrel{\mathcal{B}'}{\equiv} \delta$ for Realtime Systems}\label{Checking_EXP_REALTIME}
\end{figure}

At first we review the framework of the correctness proof for CL algorithm.

In the case of bisimilarity for realtime  $\mathrm{nBPA}$, we can define the following well-known {\em expansion} relation directly from the definition of bisimulation.

 \begin{definition}\label{def:sexp}
Let $\mathcal{R}$ be a binary relation on \textbf{realtime} processes.
The {\em expansion} of $\mathcal{R}$,  $\mathsf{Exp}(\mathcal{R})$, contains all pairs
$(\alpha, \beta)$ satisfying the  following conditions:

\begin{enumerate}
\item
Whenever $\alpha \stackrel{a}{\longrightarrow} \alpha'$, then $\beta  \stackrel{a}{\longrightarrow}  \beta'$ and $\alpha'\mathcal{R}\beta'$ for some $\beta'$.
\item
Whenever $\beta \stackrel{a}{\longrightarrow} \beta'$, then $\alpha  \stackrel{a}{\longrightarrow} \alpha'$ and $\alpha'\mathcal{R} \beta'$ for some $\alpha'$.
\end{enumerate}
\end{definition}

For realtime system,  a relation $\mathcal{R}$ is a bisimulation if and only if $\mathcal{R} \subseteq \mathsf{Exp}(\mathcal{R})$. Bisimilarity $\simeq$ is the largest relation $\mathcal{R}$ which  satisfies $\mathcal{R} = \mathsf{Exp}(\mathcal{R})$.

Definition~\ref{def:sexp} is well-behaved in the sense that ${\mathsf{Exp}(\equiv)}  \cap {\equiv} \subsetneq {\equiv}$ if ${\equiv}$ is not a bisimulation, and $\mathsf{Exp}(\equiv)$ is a norm-preserving congruence suppose that $\equiv$ is. However, we cannot simply define the refinement relation $\mathsf{Ref}(\equiv)$ to be ${\mathsf{Exp}(\equiv)} \cap {\equiv}$, because $\mathsf{Exp}(\equiv) \cap {\equiv}$ may not be a decompositional congruence even if $\equiv$ is.  In other words, we cannot always find a $\mathcal{B}'$ such that ${\stackrel{\mathcal{B}'}{\equiv}} = {\mathsf{Exp}(\stackrel{\mathcal{B}}{\equiv})}  \cap {\stackrel{\mathcal{B}}{\equiv}}$.  The way to solve this problem is to find a decompositional congruence ${\stackrel{\mathcal{B}'}{\equiv}} = \mathsf{Ref}(\stackrel{\mathcal{B}}{\equiv})$  which lies between $\simeq$ and ${\mathsf{Exp}(\stackrel{\mathcal{B}}{\equiv})}  \cap {\stackrel{\mathcal{B}}{\equiv}}$. The way suggested in~\cite{DBLP:journals/mscs/HirshfeldJM96} is that  $ \mathsf{Ref}(\equiv)$ be the decreasing bisimilarity wrt.~${\mathsf{Exp}(\equiv)}   \cap {\equiv}$.

\begin{definition}\label{def:relative_realtime}
Let $\mathcal{R}$ be a relation on \textbf{realtime} processes. $\mathcal{R}$ is a {\em decreasing bisimulation}  if the following hold whenever $\alpha \mathcal{R} \beta$:
\begin{enumerate}
\item
Whenever $\alpha \stackrel{\ell}{\longrightarrow}_{\mathrm{dec}} \alpha'$, then
$\beta \stackrel{\ell}{\longrightarrow}_{\mathrm{dec}} \beta'$ such that $\alpha'\mathcal{R} \beta'$.

\item
Whenever $\beta \stackrel{\ell}{\longrightarrow}_{\mathrm{dec}} \beta'$, then
$\alpha \stackrel{\ell}{\longrightarrow}_{\mathrm{dec}} \alpha'$ such that $\alpha'\mathcal{R} \beta'$.
\end{enumerate}

Let $\equiv$ be a norm-preserving congruence.  The {\em decreasing bisimilarity wrt. $\equiv$}, denoted by  $\simeq_{\mathrm{dec}}^{\equiv}$, is the largest decreasing bisimulation contained in $\equiv$.
\end{definition}

We do not justify the rationality of the relation $\simeq_{\mathrm{dec}}^{\equiv}$. The fact is  that $\simeq_{\mathrm{dec}}^{\equiv}$ is a congruence, and moreover, it satisfies the following:
\begin{enumerate}
\item
$\simeq_{\mathrm{dec}}^{\equiv}$ is decompositional if $\equiv$ is right-cancellative.

\item
$\mathsf{Exp}(\equiv)$ and also ${\mathsf{Exp}(\equiv)} \cap {\equiv}$ is right-cancellative if ${\equiv}$ is decompositional.
\end{enumerate}

According to these two facts, $\simeq_{\mathrm{dec}}^{{\mathsf{Exp}(\equiv)} \cap {\equiv}}$ is decompositional whenever ${\equiv}$ is.  From here, we can define $\mathsf{Ref}(\equiv)$ to be $\simeq_{\mathrm{dec}}^{{\mathsf{Exp}(\equiv)} \cap {\equiv}}$.

In order to get a characterization of $\simeq_{\mathrm{dec}}^{{\mathsf{Exp}(\equiv)} \cap {\equiv}}$, we need the following
 expansion relation for decreasing bisimilarity.

\begin{definition}
Let $\mathcal{R}$ be a binary relation on \textbf{realtime} processes. The {\em decreasing expansion} of $\mathcal{R}$,  $\mathsf{Exp}_{\mathrm{dec}}(\mathcal{R})$, contains all pairs
$(\alpha, \beta)$ satisfying the following conditions:
\begin{enumerate}
\item
Whenever $\alpha \stackrel{\ell}{\longrightarrow}_{\mathrm{dec}} \alpha'$, then
  $\beta \stackrel{\ell}{\longrightarrow}_{\mathrm{dec}} \beta'$ and $\alpha'\mathcal{R} \beta'$  for some $\beta'$.

\item
Whenever  $\beta \stackrel{\ell}{\longrightarrow}_{\mathrm{dec}} \beta'$, then
  $\alpha \stackrel{\ell}{\longrightarrow}_{\mathrm{dec}} \alpha'$ and $\alpha'\mathcal{R} \beta'$  for some $\alpha'$.
\end{enumerate}
\end{definition}
Then
we can establish the following important property for realtime $\mathrm{nBPA}$:
$(\alpha, \beta) \in \simeq_{\mathrm{dec}}^{\equiv}$ if and only if
\begin{center}
$\alpha \equiv \beta$ and $(\alpha, \beta) \in \mathsf{Exp}_{\mathrm{dec}}({\simeq_{\mathrm{dec}}^{\equiv}})$.
\end{center}
From this fact, considering that ${\mathsf{Ref}(\equiv)} = {\simeq_{\mathrm{dec}}^{{\mathsf{Exp}(\equiv)}\cap {\equiv}}}$, we have:
$(\alpha, \beta) \in  \mathsf{Ref}(\equiv)$ if and only if
\begin{center}
$\alpha \equiv \beta$ and
$(\alpha,\beta) \in {\mathsf{Exp}(\equiv)}$  and $(\alpha, \beta) \in \mathsf{Exp}_{\mathrm{dec}}(\mathsf{Ref}(\equiv))$.
\end{center}
Now, to prove ${\stackrel{\mathcal{B}'}{\equiv}} = {\mathsf{Ref}(\stackrel{\mathcal{B}}{\equiv})}$, it suffices to prove:
\begin{center}
$\alpha \stackrel{\mathcal{B}'}{\equiv} \beta$ if and only if $\alpha \stackrel{\mathcal{B}}{\equiv} \beta$ and $(\alpha,\beta) \in \mathsf{Exp}(\stackrel{\mathcal{B}}{\equiv})$ and $(\alpha, \beta) \in \mathsf{Exp}_{\mathrm{dec}}(\stackrel{\mathcal{B}'}{\equiv})$.
\end{center}

According to this characterization, apparently we have ${\stackrel{\mathcal{B}'}{\equiv}} \subseteq {\stackrel{\mathcal{B}}{\equiv}}$.

Now it is time to  explain that
 the procedure in Fig.~\ref{Checking_EXP_REALTIME} is actually based on this characterization.
Suppose we want to check whether $X_i \stackrel{\mathcal{B}'}{\equiv} \delta$.  It suffices to check the following three conditions:
\begin{enumerate}
\item
$X_i \stackrel{\mathcal{B}}{\equiv} \delta$.

\item
$(X_i, \delta) \in \mathsf{Exp}_{\mathrm{dec}}(\stackrel{\mathcal{B}'}{\equiv})$.

\item
$(X_i,\delta) \in \mathsf{Exp}(\stackrel{\mathcal{B}}{\equiv})$.
\end{enumerate}

Notice that these three conditions are deliberately arranged in the above order.
Now we study the procedure described in~Fig.~\ref{Checking_EXP_REALTIME}. Step~1 corresponds to Condition~1: checking $X_i \stackrel{\mathcal{B}}{\equiv} \delta$. Step~2 and Step~4 correspond to Condition~2: checking $(X_i, \delta) \in \mathsf{Exp}_{\mathrm{dec}}(\stackrel{\mathcal{B}'}{\equiv})$. Step~3 and Step~5 partly  correspond to Condition~3: checking $(X_i,\delta) \in \mathsf{Exp}(\stackrel{\mathcal{B}}{\equiv})$. In  Step~3 and Step~5, we find that only increasing transitions are treated. This is because the decreasing transitions are already treated in Step~2 and Step~4, in which stricter requirements are tested, considering  ${\stackrel{\mathcal{B}'}{\equiv}} \subseteq {\stackrel{\mathcal{B}}{\equiv}}$.

\subsection{Another Understanding of the Refinement Operation}\label{subsec:Expansion_Another_undestanding}

The characterization of the refinement operation defined in Section~\ref{subsec:Expansion_Pre} is fine. However, currently we do not know how to generalize this characterization to non-realtime systems.  The main problem is that we cannot find a feasible way to define the expansion relation. This is because the technique of dynamic programming is used in the algorithm. This makes the expansion of $\stackrel{\mathcal{B}}{\equiv}$, if there is a way to define, not only depend on $\stackrel{\mathcal{B}}{\equiv}$, but also depend on $\stackrel{\mathcal{B}'}{\equiv}$. This fact makes it very difficult to generalize the correctness proof in the way taken in CL algorithm.  Thus we hope to find another better way to prove the correctness of our algorithm.

Before doing this in non-realtime systems,  the attempt is first made in  realtime systems.  That is, we develop another characterization of the refinement operation for the procedure in Fig.~\ref{Checking_EXP_REALTIME}.

The basic idea is to integrate the three parts into a whole concept, which we called {\em decreasing bisimilarity with expansion}.

To avoid confusion, readers are suggested to forget the terminologies and notations taken in Section~\ref{subsec:Expansion_Pre}, because the forms of the following terminologies and  notations can be close to the ones in Section~\ref{subsec:Expansion_Pre}, but their meanings are different.

 We do not provide proofs for the lemmas  and theorems below, because they are special cases for those in Section~\ref{subsec:Expansion_in_general}.

\begin{definition}\label{def:dec_bisimulation_expansion_realtime}
Let $\equiv$ be a norm-preserving congruence on \textbf{realtime} processes, and
let ${\mathcal{R}} \subseteq {\equiv}$ be a relation on \textbf{realtime} processes.  We say $\mathcal{R}$ is a {\em decreasing bisimulation with expansion of} $\equiv$  if the following conditions hold whenever $\alpha \mathcal{R} \beta$:
\begin{enumerate}
\item
Whenever $\alpha \stackrel{\ell}{\longrightarrow}  \alpha'$,
\begin{enumerate}
\item
if  $\alpha \stackrel{\ell}{\longrightarrow}_{\mathrm{dec}}  \alpha'$,
then
$\beta \stackrel{\ell}{\longrightarrow}_{\mathrm{dec}} \beta'$ for some $\beta'$ such that $\alpha'\mathcal{R} \beta'$;

\item
if  $\alpha \stackrel{\ell}{\longrightarrow}_{\mathrm{inc}}  \alpha'$,
then
$\beta \stackrel{\ell}{\longrightarrow}_{\mathrm{inc}} \beta'$ for some $\beta'$ such that $\alpha' \equiv \beta'$.
\end{enumerate}

\item
Whenever $\beta \stackrel{\ell}{\longrightarrow} \beta'$,
\begin{enumerate}
\item
if $\beta \stackrel{\ell}{\longrightarrow}_{\mathrm{dec}} \beta'$,
then
$\alpha \stackrel{\ell}{\longrightarrow}_{\mathrm{dec}} \alpha'$ for some $\alpha'$ such that $\alpha'\mathcal{R} \beta'$.

\item
if $\beta \stackrel{\ell}{\longrightarrow}_{\mathrm{inc}} \beta'$,
then
$\alpha \stackrel{\ell}{\longrightarrow}_{\mathrm{inc}} \alpha'$ for some $\alpha'$ such that $\alpha'\equiv \beta'$.
\end{enumerate}
\end{enumerate}

The {\em decreasing bisimilarity with expansion of $\equiv$}, denoted by  $\simeq^{\equiv}$, is the largest decreasing bisimulation with expansion of ${\equiv}$.
\end{definition}

The following lemma confirms the validity of Definition~\ref{def:dec_bisimulation_expansion_realtime}.

\begin{lemma}\label{lem:property_decreasing_bisimulation_expansion_realtime}
The following properties hold:
\begin{enumerate}
\item
The identity relation is a decreasing bisimulation with expansion of $\equiv$.

\item
Let $\mathcal{R}$ be a decreasing bisimulation with expansion of $\equiv$. Then, $\mathcal{R}^{-1}$ is also a decreasing  bisimulation with expansion of $\equiv$.

\item
Let $\mathcal{R}_1$ and $\mathcal{R}_2$ be two decreasing  bisimulation with expansion of $\equiv$. Then, $\mathcal{R}_1 \circ \mathcal{R}_2$  is also a decreasing  bisimulation with expansion of $\equiv$.

\item
Let $\{\mathcal{R}_{\lambda}\}_{\lambda \in I}$ be a set of  decreasing  bisimulation with expansion of $\equiv$. Then, $\bigcup_{\lambda \in I} \mathcal{R}_{\lambda}$ is a decreasing  bisimulation with expansion of $\equiv$.
\end{enumerate}
\end{lemma}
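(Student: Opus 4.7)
The plan is to verify each of the four closure properties directly from Definition~\ref{def:dec_bisimulation_expansion_realtime}, exploiting the fact that $\equiv$ is a norm-preserving equivalence (so decreasing/increasing status is stable along $\equiv$-equivalent pairs and $\equiv$ is reflexive, symmetric, and transitive).

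For item~1, I would observe that the identity relation is contained in $\equiv$ by reflexivity, and that any transition $\alpha \stackrel{\ell}{\longrightarrow} \alpha'$ is matched by itself, with the matching preserving the decreasing/increasing classification and with $\alpha' = \alpha'$ lying in the identity and trivially in $\equiv$. For item~2, I would swap the roles of $\alpha$ and $\beta$ in each of the four clauses of the definition and observe that each clause is symmetric: clauses~(1a),(1b) for $\mathcal{R}$ become exactly (2a),(2b) for $\mathcal{R}^{-1}$, and vice versa. Item~4 is equally routine: given $(\alpha,\beta) \in \bigcup_{\lambda} \mathcal{R}_\lambda$, pick some $\lambda_0$ with $(\alpha,\beta) \in \mathcal{R}_{\lambda_0}$; any witness guaranteed by $\mathcal{R}_{\lambda_0}$ lies in $\mathcal{R}_{\lambda_0} \subseteq \bigcup_\lambda \mathcal{R}_\lambda$; and $\bigcup_\lambda \mathcal{R}_\lambda \subseteq \equiv$ since each $\mathcal{R}_\lambda \subseteq \equiv$.

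The most delicate part, which I expect to be the main obstacle, is item~3 (composition). Suppose $\alpha \mathcal{R}_1 \gamma$ and $\gamma \mathcal{R}_2 \beta$. First I would note $\mathcal{R}_1 \circ \mathcal{R}_2 \subseteq {\equiv}\circ{\equiv} = {\equiv}$ by transitivity. Then I would split on the transition type. If $\alpha \stackrel{\ell}{\longrightarrow}_{\mathrm{dec}} \alpha'$, applying clause~(1a) of $\mathcal{R}_1$ gives a decreasing match $\gamma \stackrel{\ell}{\longrightarrow}_{\mathrm{dec}} \gamma'$ with $\alpha' \mathcal{R}_1 \gamma'$; then applying clause~(1a) of $\mathcal{R}_2$ to this decreasing transition from $\gamma$ yields $\beta \stackrel{\ell}{\longrightarrow}_{\mathrm{dec}} \beta'$ with $\gamma' \mathcal{R}_2 \beta'$, so $\alpha' \,(\mathcal{R}_1 \circ \mathcal{R}_2)\, \beta'$. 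If instead $\alpha \stackrel{\ell}{\longrightarrow}_{\mathrm{inc}} \alpha'$, clause~(1b) yields $\gamma \stackrel{\ell}{\longrightarrow}_{\mathrm{inc}} \gamma'$ with $\alpha' \equiv \gamma'$; applying clause~(1b) of $\mathcal{R}_2$ to this increasing transition gives $\beta \stackrel{\ell}{\longrightarrow}_{\mathrm{inc}} \beta'$ with $\gamma' \equiv \beta'$, and transitivity of $\equiv$ yields $\alpha' \equiv \beta'$. The two clauses for transitions from $\beta$ are handled symmetrically.

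The only genuinely subtle point hidden here is that the decreasing/increasing classification of the transition matched by $\gamma$ must be the same as the one matched by $\alpha$, because clause~(1a) of $\mathcal{R}_2$ requires the source-transition from $\gamma$ to be decreasing and clause~(1b) requires it to be increasing. This is built into the definition (the matching transition has the same classification as the original), so the two applications chain cleanly. Once this is checked, the composition argument mirrors standard bisimulation composition proofs, and no additional machinery is required.
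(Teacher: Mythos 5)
Your proof is correct: each of the four closure properties follows by the direct verification you give, and the key chaining point in item~3 (that matches preserve the decreasing/increasing classification, so the two applications of clauses~(1a)/(1b) compose) is exactly the subtlety that needs to be checked. The paper itself omits the proof, deferring to the general (non-realtime) version of the lemma, which is likewise left as a routine verification; your argument is the standard one the author intends.
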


According to Lemma~\ref{lem:property_decreasing_bisimulation_expansion_realtime}, $\simeq^{\equiv}$ is an equivalence relation. According to Definition~\ref{def:dec_bisimulation_expansion_realtime}, any decreasing  bisimulation with expansion of $\equiv$ must be norm-preserving, thus $\simeq^{\equiv}$ is also norm-preserving.  Moreover, we have

\begin{lemma}
$\simeq^{\equiv}$ is a norm-preserving congruence.
\end{lemma}

Now we can define $\mathsf{Ref}(\equiv) = {\simeq^{\equiv}}$. The validity depends on the following two properties.
\begin{lemma}
\begin{enumerate}
\item
$ {\simeq^{\simeq}} = {\simeq}$.

\item
If ${\simeq} \subsetneq {\equiv}$, then $ {\simeq} \subseteq {\simeq^{\equiv}} \subsetneq {\equiv}$.
\end{enumerate}
\end{lemma}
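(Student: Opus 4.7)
The plan is to establish each of the two items by exhibiting an appropriate decreasing bisimulation with expansion, then using maximality (and, in one direction, a union trick) to conclude.

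For item~1, I would first show $\simeq \subseteq \simeq^{\simeq}$ by verifying that $\simeq$ itself is a decreasing bisimulation with expansion of $\simeq$. Given $\alpha\simeq\beta$ and $\alpha \stackrel{a}{\longrightarrow} \alpha'$, bisimilarity yields $\beta\stackrel{a}{\longrightarrow}\beta'$ with $\alpha'\simeq\beta'$; because $\simeq$ is norm-preserving (Lemma~\ref{lem:norm-preserving}), the match is automatically decreasing exactly when the original transition is, so the case split required by Definition~\ref{def:dec_bisimulation_expansion_realtime} is satisfied with $\mathcal{R}={\simeq}$ and ambient $\equiv\,={\simeq}$. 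For the converse $\simeq^{\simeq}\subseteq \simeq$, set $\mathcal{R}:={\simeq^{\simeq}}$ and consider $\mathcal{R}\cup{\simeq}$. For any $(\alpha,\beta)\in\mathcal{R}$, a decreasing transition from $\alpha$ is matched by a transition whose target lies in $\mathcal{R}\subseteq\mathcal{R}\cup{\simeq}$, and an increasing transition is matched by one whose target lies in $\simeq\subseteq\mathcal{R}\cup{\simeq}$; for pairs already in $\simeq$ we use the ordinary bisimulation clauses. Hence $\mathcal{R}\cup{\simeq}$ is a (strong) bisimulation, forcing $\mathcal{R}\subseteq{\simeq}$.

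For item~2, the inclusion $\simeq\subseteq\simeq^{\equiv}$ follows by the same verification as above: $\simeq$ is a decreasing bisimulation with expansion of $\equiv$ because $\simeq\subseteq\equiv$ guarantees the increasing-case targets land in $\equiv$, and because $\simeq$ is norm-preserving the decreasing/increasing dichotomy is preserved by the match. The inclusion $\simeq^{\equiv}\subseteq\equiv$ is built into Definition~\ref{def:dec_bisimulation_expansion_realtime}. Strictness $\simeq^{\equiv}\subsetneq\equiv$ is established by contraposition: assume $\simeq^{\equiv}={\equiv}$. Then $\equiv$ itself would satisfy Definition~\ref{def:dec_bisimulation_expansion_realtime} as its own witness, and reading the clauses off literally shows that every transition from either side of an $\equiv$-related pair is matched by a transition whose target lies in $\equiv$ (decreasing targets in $\equiv$ because $\simeq^{\equiv}={\equiv}$; increasing targets in $\equiv$ directly). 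Hence $\equiv$ would be an ordinary bisimulation, giving $\equiv\subseteq\simeq$ and therefore $\equiv={\simeq}$, contradicting $\simeq\subsetneq\equiv$.

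The only real subtlety is the asymmetric form of Definition~\ref{def:dec_bisimulation_expansion_realtime}: the decreasing clause demands the matched target remain in the candidate $\mathcal{R}$, whereas the increasing clause only demands it lie in the ambient $\equiv$. This is exactly why $\simeq^{\equiv}$ can be strictly smaller than $\equiv$, and it is the reason the argument for $\simeq^{\simeq}\subseteq{\simeq}$ requires the $\mathcal{R}\cup{\simeq}$ trick rather than a direct coinductive check. Once this asymmetry is accounted for, the two items reduce to routine verifications against the definitions and do not require any machinery beyond norm-preservation of $\simeq$ and the maximality of $\simeq^{\equiv}$.
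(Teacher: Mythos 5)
Your overall strategy coincides with the paper's: show ${\simeq}\subseteq{\simeq^{\equiv}}$ by verifying that $\simeq$ itself is a decreasing bisimulation with expansion of $\equiv$, get ${\simeq^{\equiv}}\subseteq{\equiv}$ from the definition, and obtain strictness by observing that a fixed point ${\simeq^{\equiv}}={\equiv}$ would make $\equiv$ a bisimulation and hence force ${\equiv}\subseteq{\simeq}$. Two remarks on the details. First, your $\mathcal{R}\cup{\simeq}$ construction for ${\simeq^{\simeq}}\subseteq{\simeq}$ is an unnecessary detour: Definition~\ref{def:dec_bisimulation_expansion} already requires every decreasing bisimulation with expansion of $\equiv$ to be contained in $\equiv$ (this is Lemma~\ref{lem:decreasing_bisimulation_inequality}), so taking ${\equiv}={\simeq}$ gives the inclusion in one line, which is exactly what the paper does.

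The genuine issue is that you carry out all verifications against Definition~\ref{def:dec_bisimulation_expansion_realtime}, i.e.\ in the realtime setting where every transition is matched by a single transition. The lemma the paper actually proves is the one for Definition~\ref{def:dec_bisimulation_expansion}, where a transition $\alpha\stackrel{\ell}{\longrightarrow}\alpha'$ may be matched by a sequence $\beta\stackrel{\tau}{\longrightarrow}_{\mathrm{dec}}\beta^1\stackrel{\tau}{\longrightarrow}_{\mathrm{dec}}\cdots\stackrel{\tau}{\longrightarrow}_{\mathrm{dec}}\beta^m\stackrel{\ell}{\longrightarrow}\beta'$, and the definition demands that \emph{every intermediate} $\beta^k$ be related to $\alpha$. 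Checking that $\simeq$ satisfies this (the inclusion ${\simeq}\subseteq{\simeq^{\equiv}}$) is where the real work lies: the branching-bisimulation clause only hands you $\alpha\simeq\beta''$ for the final pre-state, and you need the Computation Lemma (Lemma~\ref{computation-lemma}) to conclude $\alpha\simeq\beta^k$ for all the intermediates; you also need that transitions between $\simeq$-related states along the sequence are norm-preserving, hence decreasing. Your proposal never touches this point because in the realtime case $m=0$ always. Similarly, in the strictness argument the fixed point yields only a semi-branching-style match (Definition~\ref{def:semi-beq}), not literally the clauses of Definition~\ref{def:beq}, and one must invoke the fact that the largest semi-branching bisimulation coincides with $\simeq$. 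So the plan is right, but as written it proves only the realtime special case of the lemma.
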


The unique decomposition property of $\simeq^{\equiv}$ can be established in the same way as that of $\simeq$, but relies on the right cancellation property of ${\equiv}$.

\begin{theorem}[Unique Decomposition Property of $\simeq^{\equiv}$]\label{thm:unique-decomposition-relative-realtime}
Let ${\equiv}$ be a norm-preserving congruence which is right-cancellative. Then,
$\simeq^{\equiv}$ is decompositional.
\end{theorem}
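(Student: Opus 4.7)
The plan is to imitate the proof of Theorem~\ref{thm:unique-decomposition} essentially verbatim, with two adjustments: the bisimulation game uses the decreasing clause of Definition~\ref{def:dec_bisimulation_expansion_realtime}, and the right cancellation step invokes right-cancellation of $\equiv$ (which is in the hypothesis) rather than of $\simeq$. First I would establish right cancellation for $\simeq^{\equiv}$: if $\alpha\gamma \simeq^{\equiv} \beta\gamma$ then $\alpha \simeq^{\equiv} \beta$. Following the proof of Lemma~\ref{lem:right-cancellation}, let
\[
\mathcal{R} \;=\; \{(\alpha,\beta) \mid \alpha\gamma \simeq^{\equiv} \beta\gamma \text{ for some } \gamma\},
\]
and verify that $\mathcal{R}$ is a decreasing bisimulation with expansion of $\equiv$. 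The containment $\mathcal{R} \subseteq {\equiv}$ comes from $\simeq^{\equiv} \subseteq {\equiv}$ together with right-cancellation of $\equiv$. The four clauses of Definition~\ref{def:dec_bisimulation_expansion_realtime} reduce to the corresponding clauses for $\simeq^{\equiv}$ applied to $\alpha\gamma \simeq^{\equiv} \beta\gamma$, because in a realtime system any transition of a nonempty prefix $\alpha\gamma$ originates in $\alpha$, and the matching transition of $\beta\gamma$ likewise originates in $\beta$ and leaves the $\gamma$ tail intact; for the increasing clause, right-cancellation of $\equiv$ is what extracts $\alpha' \equiv \beta'$ from $\alpha'\gamma \equiv \beta'\gamma$.

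Next I would prove unique decomposition by assuming, for contradiction, that $X_{i_1}\ldots X_{i_p}$ and $X_{j_1}\ldots X_{j_q}$ are two distinct irreducible decompositions of minimal total norm with $X_{i_1}\ldots X_{i_p} \simeq^{\equiv} X_{j_1}\ldots X_{j_q}$. Norm-preservation of $\simeq^{\equiv}$ rules out the length-mismatched situations: if $p = 1$ and $q \geq 2$ then every $X_{j_k}$ has norm strictly smaller than $\mathtt{norm}(X_{i_1})$, so by the standard ordering each $X_{j_k}$ lies in $\mathbf{C}_{i_1 - 1}$, making $X_{i_1}$ composite modulo $\simeq^{\equiv}$ and contradicting irreducibility; the symmetric case and the case $p = q = 1$ are handled identically (the latter forcing $X_{i_1} = X_{j_1}$).

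For the principal case $p,q \geq 2$, pick a decreasing transition $X_{i_1} \stackrel{a}{\longrightarrow}_{\mathrm{dec}} \gamma$, which exists because $X_{i_1}$ has positive norm and the system is totally normed and realtime. Then
\[
X_{i_1}X_{i_2}\ldots X_{i_p} \stackrel{a}{\longrightarrow}_{\mathrm{dec}} \gamma X_{i_2}\ldots X_{i_p}
\]
is matched by some $X_{j_1}\ldots X_{j_q} \stackrel{a}{\longrightarrow}_{\mathrm{dec}} \delta'$ with $\gamma X_{i_2}\ldots X_{i_p} \simeq^{\equiv} \delta'$. Structural operational semantics together with $X_{j_1} \neq \epsilon$ force $\delta' = \delta X_{j_2}\ldots X_{j_q}$ for some $\delta$ with $X_{j_1} \stackrel{a}{\longrightarrow}_{\mathrm{dec}} \delta$. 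The matched pair has strictly smaller total norm, so the induction hypothesis yields unique prime decompositions of $\gamma X_{i_2}\ldots X_{i_p}$ and $\delta X_{j_2}\ldots X_{j_q}$, whose last primes are $X_{i_p}$ and $X_{j_q}$ respectively. Hence $X_{i_p} = X_{j_q}$. Right cancellation of $\simeq^{\equiv}$ then yields $X_{i_1}\ldots X_{i_{p-1}} \simeq^{\equiv} X_{j_1}\ldots X_{j_{q-1}}$; these two tails are still distinct (otherwise reattaching the common last prime would identify the original decompositions), still irreducible, and of strictly smaller norm, contradicting minimality.

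The main obstacle is the subtlety noted in the remark after Theorem~\ref{thm:unique-decomposition}: the matched transition on the right must not leak into $X_{j_2}\ldots X_{j_q}$. In the realtime setting of the present theorem this is immediate, because a single decreasing visible transition consumes exactly one unit of norm from the nonempty prime $X_{j_1}$, and structural operational semantics then keeps it inside $X_{j_1}$. This is also precisely the point where a subsequent generalization to systems with silent actions would need Lemma~\ref{computation-lemma} and totally normedness to rule out silent drift into later constants.
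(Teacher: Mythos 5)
Your proposal is correct and follows essentially the same route as the paper: the paper proves the (general) version by showing that $\{(\alpha,\beta) : \alpha\gamma \simeq^{\equiv} \beta\gamma \text{ for some } \gamma\}$ is a decreasing bisimulation with expansion of $\equiv$ (using right-cancellativity of $\equiv$ exactly where you do, namely for the containment in $\equiv$ and for the increasing clause), and then repeats the minimal-norm argument of Theorem~\ref{thm:unique-decomposition}. Your treatment of the edge cases $p=1$ or $q=1$ and of why the shortened tails remain distinct is somewhat more explicit than the paper's sketch, but the underlying argument is identical.
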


It is not hard to establish the following characterization theorem of $\simeq^{\equiv}$.

\begin{theorem}\label{lem:char_relative_bisimilarity_realtime}
Let $\alpha$, $\beta$ be realtime nBPA processes.  Then,
$\alpha \simeq^{\equiv} \beta$ if and only if $\alpha \equiv \beta$ and \begin{enumerate}
\item
Whenever $\alpha \stackrel{\ell}{\longrightarrow}  \alpha'$,
\begin{enumerate}
\item
if  $\alpha \stackrel{\ell}{\longrightarrow}_{\mathrm{dec}}  \alpha'$,
then
$\beta \stackrel{\ell}{\longrightarrow}_{\mathrm{dec}} \beta'$ for some $\beta'$ such that $\alpha' \simeq^{\equiv} \beta'$;

\item
if  $\alpha \stackrel{\ell}{\longrightarrow}_{\mathrm{inc}}  \alpha'$,
then
$\beta \stackrel{\ell}{\longrightarrow}_{\mathrm{inc}} \beta'$ for some $\beta'$ such that $\alpha' \equiv \beta'$.
\end{enumerate}

\item
Whenever $\beta \stackrel{\ell}{\longrightarrow} \beta'$,
\begin{enumerate}
\item
if $\beta \stackrel{\ell}{\longrightarrow}_{\mathrm{dec}} \beta'$,
then
$\alpha \stackrel{\ell}{\longrightarrow}_{\mathrm{dec}} \alpha'$ for some $\alpha'$ such that $\alpha' \simeq^{\equiv}\beta'$.

\item
if $\beta \stackrel{\ell}{\longrightarrow}_{\mathrm{inc}} \beta'$,
then
$\alpha \stackrel{\ell}{\longrightarrow}_{\mathrm{inc}} \alpha'$ for some $\alpha'$ such that $\alpha'\equiv \beta'$.
\end{enumerate}
\end{enumerate}
\end{theorem}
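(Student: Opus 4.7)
The plan is to prove the two directions separately, treating this as a standard "fixed-point characterization of a largest relation" lemma.

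For the forward direction, suppose $\alpha \simeq^{\equiv} \beta$. Since $\simeq^{\equiv}$ is itself a decreasing bisimulation with expansion of $\equiv$ (it is in fact the largest one, by Lemma~\ref{lem:property_decreasing_bisimulation_expansion_realtime}), the inclusion $\simeq^{\equiv} \subseteq \equiv$ built into Definition~\ref{def:dec_bisimulation_expansion_realtime} immediately gives $\alpha \equiv \beta$, and then the four clauses of Definition~\ref{def:dec_bisimulation_expansion_realtime} applied to the pair $(\alpha,\beta)$ are exactly the four clauses listed in the statement, with $\simeq^{\equiv}$ playing the role of the continuation relation in the decreasing cases.

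For the backward direction, I would use the standard "add one pair to the largest relation" trick. Assume $\alpha \equiv \beta$ and that $(\alpha,\beta)$ satisfies all four clauses in the statement. Define
\[
\mathcal{R} \; = \; {\simeq^{\equiv}} \, \cup \, \{(\alpha,\beta)\}.
\]
Then $\mathcal{R} \subseteq \equiv$, because both $\simeq^{\equiv} \subseteq \equiv$ and $(\alpha,\beta) \in \equiv$. It then suffices to check that $\mathcal{R}$ is a decreasing bisimulation with expansion of $\equiv$, since by maximality we would conclude $\mathcal{R} \subseteq \simeq^{\equiv}$, hence $(\alpha,\beta) \in \simeq^{\equiv}$. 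Verification splits into two cases: for pairs already in $\simeq^{\equiv}$ the four clauses hold by hypothesis on $\simeq^{\equiv}$, and the continuations produced lie in $\simeq^{\equiv} \subseteq \mathcal{R}$ (in the decreasing cases) or simply in $\equiv$ (in the increasing cases), so they are still witnesses for $\mathcal{R}$. For the additional pair $(\alpha,\beta)$, the four clauses in the theorem statement directly supply the required witnesses: in the decreasing cases, the continuation lies in $\simeq^{\equiv}\subseteq \mathcal{R}$; in the increasing cases, the continuation is only required to be in $\equiv$, which is exactly what the statement provides.

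There is no real obstacle here; the only point worth checking carefully is that in the decreasing clauses of the characterization, the continuation relation needed is $\simeq^{\equiv}$ itself (and not something like $\mathcal{R}$), so that pasting the pair $(\alpha,\beta)$ onto $\simeq^{\equiv}$ does not demand any new witnesses involving the fresh pair. That matches the formulation of the theorem exactly, which is why the construction $\mathcal{R} = \simeq^{\equiv} \cup \{(\alpha,\beta)\}$ closes up in a single step rather than requiring an inductive enlargement. The symmetry issue (closing under $\mathcal{R}^{-1}$) is handled automatically because both clauses~1 and~2 of Definition~\ref{def:dec_bisimulation_expansion_realtime} are imposed, and the same pair of clauses appears in the characterization, so no separate symmetrization step is needed.
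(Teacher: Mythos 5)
Your proof is correct and is essentially the argument the paper intends: the paper omits a separate proof of this realtime theorem because it is the special case of Theorem~\ref{thm:char_relative_bisimilarity}, whose proof (Lemmas~\ref{lem:decreasing_branching_bisimulation_oneside_contain} and~\ref{lem:decreasing_branching_bisimulation_twoside_contain}) is the same coinductive fixed-point argument — show the set of pairs satisfying the stated conditions is itself a decreasing bisimulation with expansion of $\equiv$, hence contained in the largest one. Your only cosmetic difference is adjoining a single pair $(\alpha,\beta)$ to $\simeq^{\equiv}$ rather than the whole expansion set at once, and you correctly identify the key point that makes either version close in one step in the realtime case: the decreasing clauses demand continuations in $\simeq^{\equiv}$ itself, so no inductive enlargement (which the paper does need in the general case to handle chains of decreasing $\tau$-steps) is required.
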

When $\mathsf{Ref}(\equiv)$ is defined as $\simeq^{\equiv}$, namely ${\stackrel{\mathcal{B}'}{\equiv}} = {\simeq^{\stackrel{\mathcal{B}}{\equiv}}}$, using Theorem~\ref{lem:char_relative_bisimilarity_realtime}, we can get exactly the procedure of checking $X_i \stackrel{\mathcal{B}'}{\equiv}  \delta$ in Fig.~\ref{Checking_EXP_REALTIME}.

It should be stressed that $\simeq^{\equiv}$ defined in Definition~\ref{def:dec_bisimulation_expansion_realtime} is exact the same as  according to ${\simeq_{\mathrm{dec}}^{{\mathsf{Exp}(\equiv)}\cap {\equiv}}}$ in Section~\ref{subsec:Expansion_Pre}.  They are two different understandings of the same refinement operation.

\subsection{The Refinement Operation for Non-realtime Systems}\label{subsec:Expansion_in_general}

We spend a lot of space to discuss the correctness of the algorithm for realtime processes. The reason is that we want to generalize the way to show the correctness of our algorithm of checking branching bisimilarity for totally normed BPA. It turns out that the classical proof for CL algorithm cannot be generalized directly.  So we find another characterization of the refinement operation in Section~\ref{subsec:Expansion_Another_undestanding}.  It turns out that this one, as expected, can be used to show the correctness of our algorithm described in Fig.~\ref{Checking_EXP}.  In this section  we discuss the refinement operation in detail.

We start from the notion of {\em decreasing bisimilarity with expansion}.

\begin{definition}\label{def:dec_bisimulation_expansion}
Let $\equiv$ be a norm-preserving congruence on processes, and
let ${\mathcal{R}} \subseteq {\equiv}$ be a relation on processes.  We say $\mathcal{R}$ is a {\em decreasing bisimulation with expansion of} $\equiv$  if the following conditions hold whenever $\alpha \mathcal{R} \beta$:

\begin{enumerate}
\item
Whenever $\alpha \stackrel{\ell}{\longrightarrow} \alpha'$, then
\begin{enumerate}
\item
either $\ell = \tau$ and $\beta  \stackrel{\tau}{\longrightarrow}_{\mathrm{dec}} \beta^1 \stackrel{\tau}{\longrightarrow}_{\mathrm{dec}} \ldots \stackrel{\tau}{\longrightarrow}_{\mathrm{dec}} \beta^i \stackrel{\tau}{\longrightarrow}_{\mathrm{dec}} \ldots \stackrel{\tau}{\longrightarrow}_{\mathrm{dec}} \beta^m$ for some $m \geq 0$ and $\beta^1,\ldots,\beta^m$ such that $\alpha'\mathcal{R} \beta^m$ and $\alpha \mathcal{R} \beta^k$ for every $1 \leq i \leq m$;

\item
or $\beta  \stackrel{\tau}{\longrightarrow}_{\mathrm{dec}} \beta^1 \stackrel{\tau}{\longrightarrow}_{\mathrm{dec}} \ldots \stackrel{\tau}{\longrightarrow}_{\mathrm{dec}} \beta^i \stackrel{\tau}{\longrightarrow}_{\mathrm{dec}} \ldots \stackrel{\tau}{\longrightarrow}_{\mathrm{dec}} \beta^m \stackrel{\ell}{\longrightarrow}_{\mathrm{dec}} \beta'$ for some $m \geq 0$ and $\beta^1,\ldots,\beta^m$ and $\beta'$ such that $\alpha'\mathcal{R} \beta'$ and $\alpha \mathcal{R} \beta^k$ for every $1 \leq i \leq m$.

\item
or $\beta  \stackrel{\tau}{\longrightarrow}_{\mathrm{dec}} \beta^1 \stackrel{\tau}{\longrightarrow}_{\mathrm{dec}} \ldots \stackrel{\tau}{\longrightarrow}_{\mathrm{dec}} \beta^i \stackrel{\tau}{\longrightarrow}_{\mathrm{dec}} \ldots \stackrel{\tau}{\longrightarrow}_{\mathrm{dec}} \beta^m \stackrel{\ell}{\longrightarrow}_{\mathrm{inc}} \beta'$ for some $m \geq 0$ and $\beta^1,\ldots,\beta^m$ and $\beta'$ such that $\alpha' \equiv \beta'$ and $\alpha \mathcal{R} \beta^k$ for every $1 \leq i \leq m$.
\end{enumerate}

\item
Whenever $\beta \stackrel{\ell}{\longrightarrow}  \beta'$, then
\begin{enumerate}
\item
either $\ell=\tau$ and    $\alpha  \stackrel{\tau}{\longrightarrow}_{\mathrm{dec}} \alpha^1 \stackrel{\tau}{\longrightarrow}_{\mathrm{dec}} \ldots  \stackrel{\tau}{\longrightarrow}_{\mathrm{dec}} \alpha^i \stackrel{\tau}{\longrightarrow}_{\mathrm{dec}} \ldots \stackrel{\tau}{\longrightarrow}_{\mathrm{dec}}  \alpha^m$ for some $m \geq 0$ and $\alpha^1,\ldots,\alpha^m$ such that $\alpha^m\mathcal{R} \beta'$ and $\alpha^k \mathcal{R} \beta$ for every $1 \leq i \leq m$;

\item
or $\alpha  \stackrel{\tau}{\longrightarrow}_{\mathrm{dec}} \alpha^1 \stackrel{\tau}{\longrightarrow}_{\mathrm{dec}} \ldots  \stackrel{\tau}{\longrightarrow}_{\mathrm{dec}} \alpha^i \stackrel{\tau}{\longrightarrow}_{\mathrm{dec}} \ldots \stackrel{\tau}{\longrightarrow}_{\mathrm{dec}}  \alpha^m \stackrel{\ell}{\longrightarrow}_{\mathrm{dec}} \alpha'$ for some $m \geq 0$ and $\alpha^1,\ldots,\alpha^m$ and $\alpha'$ such that $\alpha'\mathcal{R} \beta'$ and $\alpha^k \mathcal{R} \beta$ for every $1 \leq i \leq m$.

\item
or $\alpha  \stackrel{\tau}{\longrightarrow}_{\mathrm{dec}} \alpha^1 \stackrel{\tau}{\longrightarrow}_{\mathrm{dec}} \ldots  \stackrel{\tau}{\longrightarrow}_{\mathrm{dec}} \alpha^i \stackrel{\tau}{\longrightarrow}_{\mathrm{dec}} \ldots \stackrel{\tau}{\longrightarrow}_{\mathrm{dec}}  \alpha^m \stackrel{\ell}{\longrightarrow}_{\mathrm{inc}} \alpha'$ for some $m \geq 0$ and $\alpha^1,\ldots,\alpha^m$ and $\alpha'$ such that $\alpha'\equiv \beta'$ and $\alpha^k \mathcal{R} \beta$ for every $1 \leq i \leq m$.
\end{enumerate}
\end{enumerate}

The {\em decreasing bisimilarity with expansion of $\equiv$}, denoted by  $\simeq^{\equiv}$, is the largest decreasing bisimulation with expansion of ${\equiv}$.

If a relation ${\mathcal{R}} \subseteq {\equiv}$ satisfies the above conditions except for 1(c) and 2(c) whenever $\alpha \,\mathcal{R}\,\beta$, then we call $\mathcal{R}$ a {\em decreasing bisimulation}.
\end{definition}

Some explanation should be made on Definition~\ref{def:dec_bisimulation_expansion}.

Firstly, assume that $\alpha  \stackrel{\ell}{\longrightarrow} \alpha'$ for instance.
We know the corresponding transition is increasing or decreasing.  If the transition is increasing, the only possibility is to take the matched transitions as the item~1(c).  If the transition is decreasing, there are two subcases. The item~1(a) corresponds to the situation of $\ell = \tau$ and this silent transition can be vacantly matched.  The item~1(b) corresponds to the situation that either $\ell$ is not silent, or the silent transition must be explicitly matched.  Whenever $\alpha  \stackrel{\tau}{\longrightarrow}_{\mathrm{dec}} \alpha'$, we cannot tell which one of item~1(a) or item~1(b) should be chosen. So we must test the condition~1(a), and if 1(a) does not hold then we test condition~1(b).

Secondly, when a transition $\alpha  \stackrel{\ell}{\longrightarrow} \alpha'$ is matched by $\beta$, Definition~\ref{def:dec_bisimulation_expansion} takes a different style from Definition~\ref{def:beq}, the common definition of branching bisimulation.  Consider the condition~1(b) for example. In this case we require that the matching sequence of $\beta$ to be
\[
\beta  \stackrel{\tau}{\longrightarrow}_{\mathrm{dec}} \beta^1 \stackrel{\tau}{\longrightarrow}_{\mathrm{dec}} \ldots \stackrel{\tau}{\longrightarrow}_{\mathrm{dec}} \beta^i \stackrel{\tau}{\longrightarrow}_{\mathrm{dec}} \ldots \stackrel{\tau}{\longrightarrow}_{\mathrm{dec}} \beta^m \stackrel{\ell}{\longrightarrow}_{\mathrm{dec}} \beta'
\]
such that  $\alpha'\mathcal{R} \beta'$ and $\alpha \mathcal{R} \beta^i$ for every $1 \leq i \leq m$.  That is , every intermediate $\beta^i$ must be related to $\alpha$.  In Definition~\ref{def:beq}, however, we take the simplified matching sequence of $\beta$:
\[
\beta  \stackrel{\tau}{\Longrightarrow}_{\mathrm{dec}} \beta'' \stackrel{\ell}{\longrightarrow}_{\mathrm{dec}} \beta'
\]
such that  $\alpha'\mathcal{R} \beta'$ and $\alpha \mathcal{R} \beta''$.  The reason is explained as follows.  In the normal definition of branching bisimulation, although we do not require $\alpha \mathcal{R} \beta^i$ for every intermediate $\beta^i$, the largest bisimulation, $\simeq$, satisfy the Computation Lemma (Lemma~\ref{computation-lemma}). Thus if $\mathcal{R}$ is replaced by $\simeq$, namely if $\alpha \simeq \beta$ and $\alpha  \stackrel{\ell}{\longrightarrow} \alpha'$ is matched by
\[
\beta  \stackrel{\tau}{\longrightarrow}_{\mathrm{dec}} \beta^1 \stackrel{\tau}{\longrightarrow}_{\mathrm{dec}} \ldots \stackrel{\tau}{\longrightarrow}_{\mathrm{dec}} \beta^i \stackrel{\tau}{\longrightarrow}_{\mathrm{dec}} \ldots \stackrel{\tau}{\longrightarrow}_{\mathrm{dec}} \beta^m \stackrel{\ell}{\longrightarrow}_{\mathrm{dec}} \beta'
\]
such that  $\alpha' \simeq \beta'$ and $\alpha \simeq  \beta^m$, then we immediately have $\alpha \simeq  \beta^i$ for every $1 \leq i \leq m$.  But at present we cannot establish Computation Lemma for $\simeq^{\equiv}$, since this property depends on another equivalence $\equiv$, and in Definition~\ref{def:dec_bisimulation_expansion} we do not impose any restrictions on $\equiv$. Thus Computation Lemma could not be established if normal style matchings are taken in Definition~\ref{def:dec_bisimulation_expansion}.  Thus one way of defining decreasing bisimulation with expansion of $\equiv$ is to strengthen the relevant requirements.  We take this style not because we need the Computation Lemma, but because we need the conditions appearing in Definition~\ref{def:dec_bisimulation_expansion} to be close to the conditions checked by the algorithm.

Thirdly, the `semi-branching' style (see Definition~\ref{def:semi-beq}) is taken in the case of vacant matching. This is not necessary but is helpful to show the transitivity of $\simeq^{\equiv}$.

The following lemma confirms that the relation $\simeq^{\equiv}$ is well-defined.

\begin{lemma}\label{lem:property_decreasing_bisimulation_expansion}
The following properties hold:
\begin{enumerate}
\item
The identity relation is a decreasing bisimulation with expansion of $\equiv$.

\item
Let $\mathcal{R}$ be a decreasing bisimulation with expansion of $\equiv$. Then, $\mathcal{R}^{-1}$ is also a decreasing  bisimulation with expansion of $\equiv$.

\item
Let $\mathcal{R}_1$ and $\mathcal{R}_2$ be two decreasing  bisimulation with expansion of $\equiv$. Then, $\mathcal{R}_1 \circ \mathcal{R}_2$  is also a decreasing  bisimulation with expansion of $\equiv$.

\item
Let $\{\mathcal{R}_{\lambda}\}_{\lambda \in I}$ be a set of  decreasing  bisimulation with expansion of $\equiv$. Then, $\bigcup_{\lambda \in I} \mathcal{R}_{\lambda}$ is a decreasing  bisimulation with expansion of $\equiv$.
\end{enumerate}
\end{lemma}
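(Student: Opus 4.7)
Plan. Items~1, 2, and 4 are routine verifications, so I would dispose of them briskly and concentrate on item~3 (closure under composition), which I expect to be the only delicate point.

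For item~1, note that the identity relation is contained in $\equiv$ by reflexivity. Given $\alpha \stackrel{\ell}{\longrightarrow} \alpha'$ with $\alpha\,\mathrm{Id}\,\alpha$, I would take $m=0$ and match with $\alpha \stackrel{\ell}{\longrightarrow}_{\mathrm{dec}} \alpha'$ via clause 1(b) if the transition is decreasing, or via clause 1(c) if it is increasing. Item~2 is immediate: the three cases in clause~1 (resp.~2) for $\mathcal{R}$ are precisely those of clause~2 (resp.~1) for $\mathcal{R}^{-1}$, and $\equiv$ is symmetric. Item~4 is also immediate: if $\alpha\,(\bigcup_\lambda \mathcal{R}_\lambda)\,\beta$ via some particular $\mathcal{R}_{\lambda_0}$ and $\alpha \stackrel{\ell}{\longrightarrow}\alpha'$, then the matching sequence provided by $\mathcal{R}_{\lambda_0}$ witnesses the required clause for the union (each intermediate already lies in $\mathcal{R}_{\lambda_0} \subseteq \bigcup_\lambda \mathcal{R}_\lambda$).

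Item~3 is the main work. First observe that $\mathcal{R}_1 \circ \mathcal{R}_2 \subseteq \equiv$ because both factors are, and $\equiv$ is transitive. Now fix $\alpha\,(\mathcal{R}_1\circ\mathcal{R}_2)\,\beta$ witnessed by $\gamma$ with $\alpha\,\mathcal{R}_1\,\gamma\,\mathcal{R}_2\,\beta$ and suppose $\alpha \stackrel{\ell}{\longrightarrow} \alpha'$. I would first apply the definition of $\mathcal{R}_1$ to obtain a matching sequence $\gamma \stackrel{\tau}{\longrightarrow}_{\mathrm{dec}} \gamma^1 \stackrel{\tau}{\longrightarrow}_{\mathrm{dec}} \cdots \stackrel{\tau}{\longrightarrow}_{\mathrm{dec}} \gamma^m \stackrel{\ast}{\longrightarrow} \gamma^{\mathrm{end}}$, where the last step is either absent (case 1(a)), decreasing (case 1(b)), or increasing (case 1(c)), and every intermediate $\gamma^i$ satisfies $\alpha\,\mathcal{R}_1\,\gamma^i$. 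I would then push this sequence through $\mathcal{R}_2$ one step at a time: starting from $\gamma\,\mathcal{R}_2\,\beta$, apply the definition of $\mathcal{R}_2$ to $\gamma \stackrel{\tau}{\longrightarrow}_{\mathrm{dec}} \gamma^1$ to obtain a path from $\beta$ to some $\beta^{\langle 1\rangle}$ with $\gamma^1\,\mathcal{R}_2\,\beta^{\langle 1\rangle}$; then iterate with $\gamma^1\,\mathcal{R}_2\,\beta^{\langle 1\rangle}$ on the next transition, and so on. Concatenating these subpaths produces a single path from $\beta$, and the closing transition (matching the final $\gamma^m \to \gamma^{\mathrm{end}}$) is handled by a direct case split: whichever of clauses 1(a), 1(b), 1(c) of $\mathcal{R}_2$ applies determines the clause satisfied by $\mathcal{R}_1\circ\mathcal{R}_2$. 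For every intermediate $\beta^j$ produced we have $\alpha\,\mathcal{R}_1\,\gamma^i$ and $\gamma^i\,\mathcal{R}_2\,\beta^j$ for the appropriate $i$, so $\alpha\,(\mathcal{R}_1\circ\mathcal{R}_2)\,\beta^j$, which is exactly what the definition requires.

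The subtle point that makes this iteration go through — and the place I expect to spend most care — is ruling out clause 1(c) of $\mathcal{R}_2$ in the middle of the sequence. When we push a decreasing silent step $\gamma^{i-1} \stackrel{\tau}{\longrightarrow}_{\mathrm{dec}} \gamma^i$ through $\mathcal{R}_2$, the $\mathcal{R}_2$-match provided by clause 1(c) would end with $\beta^{\mathrm{end}} \stackrel{\tau}{\longrightarrow}_{\mathrm{inc}} \beta'$ where $\beta' \equiv \gamma^i$. But $\mathtt{norm}(\beta^{\mathrm{end}}) = \mathtt{norm}(\gamma^{i-1}) = \mathtt{norm}(\gamma^i) = \mathtt{norm}(\beta')$ by norm-preservation of $\equiv$ together with the fact that the intermediate decreasing $\tau$-steps from $\beta$ preserve norm; this contradicts the definition of an increasing $\tau$-transition. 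Hence only clauses 1(a) and 1(b) of $\mathcal{R}_2$ can apply to any intermediate decreasing $\tau$-step, so each subpath ends with an $\mathcal{R}_2$-related pair and the induction can continue. This norm-based exclusion of clause 1(c) in interior positions is the technical heart of the proof; once it is in place, the case analysis for the last $\mathcal{R}_1$-step slots straightforwardly into the three clauses of the definition for $\mathcal{R}_1 \circ \mathcal{R}_2$, and the symmetric direction (transitions from $\beta$) is handled by the mirrored argument using $\mathcal{R}_2$ first and then $\mathcal{R}_1$.
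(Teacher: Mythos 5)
The paper states this lemma without proof, so there is nothing to compare your argument against; what follows is an assessment of the argument on its own terms. Items~1, 2 and~4 are handled correctly, and the overall strategy for item~3 --- relay the matching path through the intermediate process $\gamma$ step by step, and use norm-preservation of $\equiv$ to exclude the clauses that would break the relay --- is the right one and works.

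There is, however, one case you gloss over at the closing transition, and it is exactly dual to the one you single out as the technical heart. Suppose the $\mathcal{R}_1$-match of $\alpha\stackrel{\ell}{\longrightarrow}\alpha'$ closes via clause~1(c), i.e.\ $\gamma^m\stackrel{\ell}{\longrightarrow}_{\mathrm{inc}}\gamma'$ with only $\alpha'\equiv\gamma'$ available (not $\alpha'\,\mathcal{R}_1\,\gamma'$). If $\mathcal{R}_2$ were allowed to answer this increasing challenge via its clause~1(a) or~1(b), the resulting path from $\beta$ would end in a decreasing step (or vacantly) at some $\beta'$ with $\gamma'\,\mathcal{R}_2\,\beta'$, and the composition's clause~1(a)/1(b) would then demand $\alpha'\,(\mathcal{R}_1\circ\mathcal{R}_2)\,\beta'$ --- which you cannot supply, since you only have $\alpha'\equiv\gamma'\equiv\beta'$. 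So your claim that ``whichever clause of $\mathcal{R}_2$ applies determines the clause satisfied by $\mathcal{R}_1\circ\mathcal{R}_2$'' needs the combinations (1(c) for $\mathcal{R}_1$, 1(a) or 1(b) for $\mathcal{R}_2$) to be shown impossible. They are, by the same norm computation you already use for the interior steps: $\gamma'$ has norm strictly greater than (for $\ell=\tau$) or at least equal to (for visible $\ell$) that of $\gamma^m$, whereas any 1(a)/1(b)-style answer from $\beta^{\mathrm{end}}$ (with $\mathtt{norm}(\beta^{\mathrm{end}})=\mathtt{norm}(\gamma^m)$) lands on a process whose norm is too small to be $\equiv$-related to $\gamma'$. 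With that exclusion added, every feasible combination of closing clauses does slot into the corresponding clause for $\mathcal{R}_1\circ\mathcal{R}_2$, and the proof is complete; also record explicitly that in clause~1(a) the final state must be related to \emph{both} $\alpha$ and $\alpha'$, which your relay does provide (via $\gamma^m$ and $\gamma'$ respectively) but which is easy to lose track of.
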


According to Lemma~\ref{lem:property_decreasing_bisimulation_expansion}, $\simeq^{\equiv}$ is an equivalence relation.

Since $\simeq^{\equiv}$ is a decreasing  bisimulation with expansion of $\equiv$ according to Definition~\ref{def:dec_bisimulation_expansion}, we have
\begin{lemma}\label{lem:decreasing_bisimulation_inequality}
${\simeq^{\equiv}} \subseteq {\equiv}$.
\end{lemma}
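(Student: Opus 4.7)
My plan is that this lemma is essentially immediate from the way $\simeq^{\equiv}$ is defined, and so the proof should amount to a careful unpacking of Definition~\ref{def:dec_bisimulation_expansion} together with one item of Lemma~\ref{lem:property_decreasing_bisimulation_expansion}.

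First, I would recall that Definition~\ref{def:dec_bisimulation_expansion} builds the containment ${\mathcal{R}} \subseteq {\equiv}$ into the very definition of ``decreasing bisimulation with expansion of $\equiv$''. In other words, no relation can qualify as a decreasing bisimulation with expansion of $\equiv$ unless it is a subset of $\equiv$. Thus if $\{\mathcal{R}_{\lambda}\}_{\lambda \in I}$ enumerates all decreasing bisimulations with expansion of $\equiv$, then $\mathcal{R}_{\lambda} \subseteq {\equiv}$ for each $\lambda \in I$.

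Next, I would invoke item~4 of Lemma~\ref{lem:property_decreasing_bisimulation_expansion}, which tells us that the union $\bigcup_{\lambda \in I} \mathcal{R}_{\lambda}$ is itself a decreasing bisimulation with expansion of $\equiv$. Since this union contains every such relation, it is the largest one, hence equals $\simeq^{\equiv}$ by definition. Taking unions preserves containment in $\equiv$, so ${\simeq^{\equiv}} = \bigcup_{\lambda \in I} \mathcal{R}_{\lambda} \subseteq {\equiv}$, which is what we wanted.

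There is no real obstacle here; the statement is entirely definitional. The only thing to be careful about is to make sure we use the union-based description of the largest element (justified by Lemma~\ref{lem:property_decreasing_bisimulation_expansion}(4)) rather than reasoning about ``the largest'' in some vaguer sense, since the existence of a largest decreasing bisimulation with expansion of $\equiv$ needs that closure-under-union property to be well-founded. Once that is observed, the conclusion is a one-line consequence.
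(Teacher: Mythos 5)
Your proposal is correct and matches the paper's argument: the paper likewise observes that containment in $\equiv$ is built into Definition~\ref{def:dec_bisimulation_expansion}, so the largest decreasing bisimulation with expansion of $\equiv$ (whose existence rests on the closure properties of Lemma~\ref{lem:property_decreasing_bisimulation_expansion}) is automatically a subset of $\equiv$. Your extra remark about justifying ``the largest'' via the union is a fair point of care but does not change the substance.
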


According to Definition~\ref{def:dec_bisimulation_expansion}, any decreasing  bisimulation with expansion of $\equiv$ must be norm-preserving, thus $\simeq^{\equiv}$ is also norm-preserving.  Moreover, $\simeq^{\equiv}$ is a congruence.

\begin{lemma}
$\simeq^{\equiv}$ is a norm-preserving congruence.
\end{lemma}

\begin{proof}
We only show that $\simeq^{\equiv}$ is a congruence.
Let
\[
\mathcal{S} = \{(\alpha_1 \cdot \alpha_2,  \beta_1\cdot \beta_2) \;|\;  \alpha_1 \simeq^{\equiv} \beta_1  \mbox{ and } \alpha_2 \simeq^{\equiv} \beta_2\}  \cup {\simeq^{\equiv}}.
\]
We show $\mathcal{S}$ is a decreasing  bisimulation with expansion of $\equiv$. This is done by
checking the conditions in Definition~\ref{def:dec_bisimulation_expansion}
for every $(\alpha_1 \cdot \alpha_2,  \beta_1\cdot \beta_2) \in \mathcal{S}$.

If $\alpha_1  = \epsilon = \beta_1$.  This is a trivial case.

If $\alpha_1 \neq \epsilon$ and $\beta_1 \neq \epsilon$. This proof is done by case studies.
We study only two cases. Other cases are similar.

\begin{itemize}
 \item
 Suppose there is a transition $\alpha_1\alpha_2 \stackrel{\ell}{\longrightarrow}_{\mathrm{dec}} \alpha_1'\alpha_2$, we shall find the matching from $\beta_1\beta_2$. Remember  $\alpha_1 \simeq^{\equiv} \beta_1$, thus every transition $\alpha_1 \stackrel{\ell}{\longrightarrow}_{\mathrm{dec}} \alpha_1'$ has a matching from $\beta_1$. Say, we have the matching:
\[
\beta_1  \stackrel{\tau}{\longrightarrow}_{\mathrm{dec}} \beta_1^1 \stackrel{\tau}{\longrightarrow}_{\mathrm{dec}} \ldots \stackrel{\tau}{\longrightarrow}_{\mathrm{dec}} \beta_1^i \stackrel{\tau}{\longrightarrow}_{\mathrm{dec}} \ldots \stackrel{\tau}{\longrightarrow}_{\mathrm{dec}} \beta_1^m \stackrel{\ell}{\longrightarrow}_{\mathrm{dec}} \beta_1'
\]
such that  $\alpha_1' \simeq^{\equiv} \beta_1'$ and $\alpha_1 \simeq^{\equiv} \beta_1^i$ for every $1 \leq i \leq m$.  Then we have
\[
\beta_1\beta_2  \stackrel{\tau}{\longrightarrow}_{\mathrm{dec}} \beta_1^1\beta_2 \stackrel{\tau}{\longrightarrow}_{\mathrm{dec}} \ldots \stackrel{\tau}{\longrightarrow}_{\mathrm{dec}} \beta_1^i\beta_2 \stackrel{\tau}{\longrightarrow}_{\mathrm{dec}} \ldots \stackrel{\tau}{\longrightarrow}_{\mathrm{dec}} \beta_1^m\beta_2 \stackrel{\ell}{\longrightarrow}_{\mathrm{dec}} \beta_1'\beta_2.
\]
According to the definition of $ \mathcal{S}$ and the fact $\alpha_2 \simeq^{\equiv} \beta_2$,  we have $(\alpha_1'\alpha_2,  \beta_1'\beta_2) \in \mathcal{S}$, and $(\alpha_1\alpha_2, \beta_1^i\beta_2) \in  \mathcal{S}$ for every $1 \leq i \leq m$.

\item
 Suppose there is a transition $\alpha_1\alpha_2 \stackrel{\ell}{\longrightarrow}_{\mathrm{inc}} \alpha_1'\alpha_2$, we shall find the matching from $\beta_1\beta_2$. Remember  $\alpha_1 \simeq^{\equiv} \beta_1$, thus every transition $\alpha_1 \stackrel{\ell}{\longrightarrow}_{\mathrm{inc}} \alpha_1'$ has a matching from $\beta_1$. Say, we have the matching:
\[
\beta_1  \stackrel{\tau}{\longrightarrow}_{\mathrm{dec}} \beta_1^1 \stackrel{\tau}{\longrightarrow}_{\mathrm{dec}} \ldots \stackrel{\tau}{\longrightarrow}_{\mathrm{dec}} \beta_1^i \stackrel{\tau}{\longrightarrow}_{\mathrm{dec}} \ldots \stackrel{\tau}{\longrightarrow}_{\mathrm{dec}} \beta_1^m \stackrel{\ell}{\longrightarrow}_{\mathrm{inc}} \beta_1'
\]
such that  $\alpha_1'  \equiv \beta_1'$ and $\alpha_1 \simeq^{\equiv} \beta_1^i$ for every $1 \leq i \leq m$.  Then we have
\[
\beta_1\beta_2  \stackrel{\tau}{\longrightarrow}_{\mathrm{dec}} \beta_1^1\beta_2 \stackrel{\tau}{\longrightarrow}_{\mathrm{dec}} \ldots \stackrel{\tau}{\longrightarrow}_{\mathrm{dec}} \beta_1^i\beta_2 \stackrel{\tau}{\longrightarrow}_{\mathrm{dec}} \ldots \stackrel{\tau}{\longrightarrow}_{\mathrm{dec}} \beta_1^m\beta_2 \stackrel{\ell}{\longrightarrow}_{\mathrm{inc}} \beta_1'\beta_2.
\]
According to the definition of $ \mathcal{S}$ and the fact $\alpha_2 \simeq^{\equiv} \beta_2$,  we have  $(\alpha_1\alpha_2, \beta_1^i\beta_2) \in  \mathcal{S}$ for every $1 \leq i \leq m$.   Knowing ${\simeq^{\equiv}} \subseteq {\equiv}$, we have $\alpha_2 \equiv \beta_2$, and by congruence of ${\equiv}$ we have $\alpha_1'\alpha_2 \equiv  \beta_1'\beta_2$. \qed

\end{itemize}
\end{proof}

Now we can define the refinement operation $\mathsf{Ref}(\equiv)$ as ${\simeq^{\equiv}}$. The validity of this definition depends on the following lemma.
\begin{lemma}
The following two properties hold.
\begin{enumerate}
\item
$ {\simeq^{\simeq}} = {\simeq}$.

\item
If ${\simeq} \subsetneq {\equiv}$, then $ {\simeq} \subseteq {\simeq^{\equiv}} \subsetneq {\equiv}$.
\end{enumerate}
\end{lemma}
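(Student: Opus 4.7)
For item~1, the inclusion $\simeq^{\simeq}\subseteq{\simeq}$ is immediate from Lemma~\ref{lem:decreasing_bisimulation_inequality} instantiated at $\equiv\,{=}\,{\simeq}$. For the reverse direction, the plan is to check that $\simeq$ itself satisfies Definition~\ref{def:dec_bisimulation_expansion} with both $\mathcal{R}$ and $\equiv$ set to $\simeq$. Given $\alpha\simeq\beta$ and a transition $\alpha\stackrel{\ell}{\longrightarrow}\alpha'$, invoke the semi-branching characterisation (Definition~\ref{def:semi-beq}) to obtain either a vacant match $\beta\Longrightarrow\beta^{m}$ with $\alpha\simeq\beta^{m}$ and $\alpha'\simeq\beta^{m}$ (when $\ell=\tau$ and $\alpha'\simeq\alpha$), or a full match $\beta\Longrightarrow\beta^{m}\stackrel{\ell}{\longrightarrow}\beta'$ with $\alpha\simeq\beta^{m}$ and $\alpha'\simeq\beta'$. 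The extra ingredient demanded by Definition~\ref{def:dec_bisimulation_expansion}, namely that every intermediate $\beta^{k}$ on the silent chain be related to $\alpha$, is read off from the Computation Lemma (Lemma~\ref{computation-lemma}) applied to $\beta\simeq\beta^{m}$. Norm-preservation of $\simeq$ then forces every intermediate $\stackrel{\tau}{\longrightarrow}$ to be decreasing, and makes the type (decreasing vs.\ increasing) of $\beta^{m}\stackrel{\ell}{\longrightarrow}\beta'$ agree with that of $\alpha\stackrel{\ell}{\longrightarrow}\alpha'$. A trichotomy on the original transition---state-preserving $\tau$, other decreasing, or increasing---routes the matching into clauses~1(a), 1(b), 1(c) respectively; the moves of $\beta$ are handled by the symmetric clauses.

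For item~2, the inclusion $\simeq\subseteq{\simeq^{\equiv}}$ is proved by exactly the same argument: the only place $\equiv$ appears in the definition is clause~1(c), and the required $\alpha'\equiv\beta'$ follows from $\alpha'\simeq\beta'$ combined with the hypothesis $\simeq\subseteq{\equiv}$. The inclusion $\simeq^{\equiv}\subseteq{\equiv}$ is Lemma~\ref{lem:decreasing_bisimulation_inequality}. For strictness, I will argue by contradiction: assume $\simeq^{\equiv}={\equiv}$, so that $\equiv$ itself is a decreasing bisimulation with expansion of $\equiv$. Substituting $\mathcal{R}=\equiv$ in Definition~\ref{def:dec_bisimulation_expansion} collapses the $\equiv$-relation in clause~1(c) to the $\mathcal{R}$-relation, so each of the three clauses now supplies a matching of precisely the shape required by Definition~\ref{def:semi-beq} (take $\beta'=\beta^{m}$ in~1(a); take $\beta''=\beta^{m}$ in~1(b) and~1(c)). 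Consequently $\equiv$ is a semi-branching bisimulation, whence $\equiv\subseteq{\simeq}$, contradicting $\simeq\subsetneq{\equiv}$.

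The chief obstacle is the first step: verifying that $\simeq$ meets Definition~\ref{def:dec_bisimulation_expansion} in its full strength. The definition insists that every intermediate state on the silent chain be related to $\alpha$ and that every intermediate silent edge be decreasing---requirements visibly stronger than the ordinary branching-bisimulation matching. Both requirements are recovered by combining the Computation Lemma with norm-preservation of $\simeq$, but only after lining up the trichotomy state-preserving-$\tau$ / decreasing / increasing with the three clauses 1(a)--1(c) of the definition. Once that alignment is in place, items~1 and~2 are almost immediate, and the rest of the proof is the short contradiction argument outlined above.
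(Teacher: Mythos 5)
Your proposal is correct and follows essentially the same route as the paper: one inclusion from Lemma~\ref{lem:decreasing_bisimulation_inequality}, the inclusion ${\simeq}\subseteq{\simeq^{\equiv}}$ by verifying Definition~\ref{def:dec_bisimulation_expansion} for $\mathcal{R}={\simeq}$ via the Computation Lemma (where the paper merely calls this ``routine work'', you supply the norm-preservation details), and strictness by observing that ${\simeq^{\equiv}}={\equiv}$ would make $\equiv$ a (semi-)branching bisimulation. Your use of semi-branching bisimulation in the last step, versus the paper's appeal to Definition~\ref{def:beq}, is an immaterial variation since their largest instances coincide with $\simeq$.
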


\begin{proof}
\begin{enumerate}
\item
At first, we show that ${\simeq} \subseteq {\simeq^{\equiv}}$ for every ${\equiv} \supseteq {\simeq}$. As a special case, we have ${\simeq} \subset {\simeq^{\equiv}}$. Assume that $\alpha \simeq \beta$, then we can check that conditions in Definition~\ref{def:dec_bisimulation_expansion}, taking $\mathcal{R} = {\simeq}$. This is a routine work, by applying the Computation Lemma (Lemma~\ref{computation-lemma}).
To see  why ${\simeq^{\simeq}} \subseteq {\simeq}$, we notice ${\simeq^{\equiv}} \subseteq {\equiv}$ (Lemma~\ref{lem:decreasing_bisimulation_inequality}), and take ${\equiv}$ to be $\simeq$.

\item
By the proof of the first item and Lemma~\ref{lem:decreasing_bisimulation_inequality}, we already have $ {\simeq} \subseteq {\simeq^{\equiv}} \subseteq {\equiv}$ whenever ${\simeq} \subseteq {\equiv}$. Now we assume further that ${\simeq} \subseteq {\simeq^{\equiv}} = {\equiv}$, we will show that ${\simeq} = {\simeq^{\equiv}} = {\equiv}$. It suffices to show $ {\simeq^{\equiv}} = {\equiv}$ is a branching bisimulation (Definition~\ref{def:beq}).  Because ${\simeq^{\equiv}}$ is a decreasing bisimulation with expansion of $\equiv$, it satisfies the conditions in Definition~\ref{def:dec_bisimulation_expansion}. By taking $\mathcal{R}$ to be both ${\simeq^{\equiv}}$ and $\equiv$, we see that bisimulation property in Definition~\ref{def:beq} can be inferred.  \qed
\end{enumerate}
\end{proof}

The unique decomposition property of $\simeq^{\equiv}$ can be established in the same way as that of $\simeq$, but relies on the right cancellation property of ${\equiv}$.

\begin{theorem}[Unique Decomposition Property of $\simeq^{\equiv}$]\label{thm:unique-decomposition-relative}
Let ${\equiv}$ be a norm-preserving congruence which is right-cancellative. Then,
$\simeq^{\equiv}$ is decompositional.
\end{theorem}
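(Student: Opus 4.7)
The plan is to mirror, as closely as possible, the argument used for Theorem~\ref{thm:unique-decomposition}, but now in the setting of $\simeq^{\equiv}$ rather than $\simeq$. The proof of that theorem rests on just two ingredients: (i)~a right-cancellation property (Lemma~\ref{lem:right-cancellation}) and (ii)~an induction on norm that bisimulates one step of a norm-decreasing computation. Ingredient (ii) carries over mechanically once we have case~1(b) of Definition~\ref{def:dec_bisimulation_expansion} at our disposal; so the bulk of the work is establishing ingredient (i) for $\simeq^{\equiv}$, which is where the right-cancellativity hypothesis on $\equiv$ gets used.

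First I would prove the auxiliary right-cancellation statement: if $\alpha\gamma \simeq^{\equiv} \beta\gamma$ then $\alpha \simeq^{\equiv} \beta$. The candidate witness is
\[
\mathcal{R} \;=\; \{(\alpha,\beta) \,:\, \alpha\gamma \simeq^{\equiv} \beta\gamma \text{ for some } \gamma\}.
\]
The inclusion $\mathcal{R} \subseteq {\equiv}$ is immediate from Lemma~\ref{lem:decreasing_bisimulation_inequality} together with the assumed right-cancellativity of $\equiv$. To verify the clauses of Definition~\ref{def:dec_bisimulation_expansion}, I take any transition $\alpha \stackrel{\ell}{\longrightarrow}_{\mathrm{dec}} \alpha'$ or $\alpha \stackrel{\ell}{\longrightarrow}_{\mathrm{inc}} \alpha'$, lift it to $\alpha\gamma \stackrel{\ell}{\longrightarrow} \alpha'\gamma$, and take the matching sequence from $\beta\gamma$ supplied by $\simeq^{\equiv}$. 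The crucial step is to show that this matching sequence leaves the suffix $\gamma$ untouched, so that it has the form $\beta\gamma \stackrel{\tau}{\longrightarrow}_{\mathrm{dec}} \beta^{1}\gamma \stackrel{\tau}{\longrightarrow}_{\mathrm{dec}} \cdots \stackrel{\tau}{\longrightarrow}_{\mathrm{dec}} \beta^{m}\gamma \stackrel{\ell}{\longrightarrow} \beta'\gamma$, from which $\alpha \mathcal{R} \beta^{k}$ and $\alpha' \mathcal{R} \beta'$ (or $\alpha' \equiv \beta'$ in the increasing case, where right-cancellativity of $\equiv$ is invoked again) fall out. The reason $\gamma$ is untouched is the same norm bookkeeping that makes Lemma~\ref{lem:right-cancellation} work in the totally normed setting: a decreasing transition from $\beta\gamma$ that consumed any symbol of $\gamma$ would drop the norm of the suffix strictly below $\mathtt{norm}(\gamma)$, but the target of the whole matching sequence must be $\simeq^{\equiv}$-related (hence equinorm on the suffix after another appeal to right-cancellativity) to $\alpha'\gamma$.

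With right cancellation of $\simeq^{\equiv}$ in hand, the main induction is a near-verbatim copy of the proof of Theorem~\ref{thm:unique-decomposition}. Suppose, for contradiction, that $X_{i_{1}}\cdots X_{i_{p}} \simeq^{\equiv} X_{j_{1}}\cdots X_{j_{q}}$ are two distinct $\simeq^{\equiv}$-irreducible decompositions whose total norm is minimal among such counterexamples. Choose a decreasing computation $X_{i_{1}} \Longrightarrow_{\mathrm{dec}} \stackrel{a}{\longrightarrow}_{\mathrm{dec}} \gamma$ with $a \neq \tau$ (such an $a$ exists by total normedness), lifted to $X_{i_{1}}\cdots X_{i_{p}} \Longrightarrow_{\mathrm{dec}} \stackrel{a}{\longrightarrow}_{\mathrm{dec}} \gamma X_{i_{2}}\cdots X_{i_{p}}$. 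By case~1(b) of Definition~\ref{def:dec_bisimulation_expansion} applied to each step of this computation, and using that the matching transitions must be decreasing to meet the norm bookkeeping, we obtain $X_{j_{1}}\cdots X_{j_{q}} \Longrightarrow_{\mathrm{dec}} \stackrel{a}{\longrightarrow}_{\mathrm{dec}} \delta X_{j_{2}}\cdots X_{j_{q}}$ with $\gamma X_{i_{2}}\cdots X_{i_{p}} \simeq^{\equiv} \delta X_{j_{2}}\cdots X_{j_{q}}$. The targets have strictly smaller total norm, so the induction hypothesis forces the last prime factors to agree: $X_{i_{p}} \simeq^{\equiv} X_{j_{q}}$. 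Right cancellation of $\simeq^{\equiv}$ then yields $X_{i_{1}}\cdots X_{i_{p-1}} \simeq^{\equiv} X_{j_{1}}\cdots X_{j_{q-1}}$, contradicting the minimal-norm choice of the counterexample.

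The main obstacle is the first stage: showing that the lifted matching sequence from $\beta\gamma$ really does preserve the suffix $\gamma$. In the realtime proof of Lemma~\ref{lem:right-cancellation} this is essentially automatic and dismissed in one line, but here the intermediate-state obligations $\alpha \mathcal{R} \beta^{k}$ built into Definition~\ref{def:dec_bisimulation_expansion} (which are strictly stronger than the usual semi-branching obligations) mean we must check at every intermediate $\beta^{k}$ that the decreasing $\tau$-steps only act on the $\beta$-prefix. As sketched above, this reduces to a norm comparison enabled by total normedness and by the right-cancellativity of $\equiv$, and once it is pinned down the rest of the argument slots in cleanly.
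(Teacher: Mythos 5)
Your proposal matches the paper's proof: the paper likewise reduces the theorem to showing that $\{(\alpha,\beta): \alpha\gamma \simeq^{\equiv} \beta\gamma \mbox{ for some }\gamma\}$ is a decreasing bisimulation with expansion of $\equiv$ (using the right-cancellativity of $\equiv$), and then repeats the minimal-norm induction of Theorem~\ref{thm:unique-decomposition}. Your write-up simply fills in the suffix-preservation details that the paper leaves implicit, so it is correct and essentially the same argument.
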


\begin{proof}
It suffices to show that to show  that
$\{(\alpha, \beta): \alpha\gamma \simeq^{\equiv} \beta\gamma \mbox{ for some }\gamma\}$
is a decreasing branching bisimulation wrt.~$\equiv$.
In the proof the right cancellativity of ${\equiv}$ is used.  Then the proof goes in the same way as in Theorem~\ref{thm:unique-decomposition}.  \qed
\end{proof}

According to Theorem~\ref{thm:unique-decomposition-relative}, $\simeq^{\equiv}$ is decompositional whenever $\equiv$ is.  This is the key property to define refinement operation. Now, our refinement operation $\mathsf{Ref}(\equiv)$ can be defined as $\simeq^{\equiv}$.

\section{The Correctness of the Algorithm}\label{sec:correctness}

In this section we will show that the $\mathcal{B}'$ constructed from $\mathcal{B}$ during an iteration is exactly the decomposition base of $\mathsf{Ref}(\stackrel{\mathcal{B}}{\equiv}) = {\simeq^{\stackrel{\mathcal{B}}{\equiv}}}$ defined in Section~\ref{subsec:Expansion_in_general}.

\subsection{The Characterization of ${\simeq^{\equiv}}$}\label{subsec:Expansion_characterization}

Remember in Section~\ref{subsec:Expansion_Another_undestanding} we have remarked that the procedure in Fig.~\ref{Checking_EXP_REALTIME} is correct for realtime systems. At that time the proof is straightforward, because the procedure checks exactly the conditions in the characterization theorem (Theorem~\ref{lem:char_relative_bisimilarity_realtime}), which are exactly the conditions in Definition~\ref{def:dec_bisimulation_expansion_realtime}.  However, this is not the case now, and there are a number of subtleties.

In the following, we will develop some terminologies, which make us easier to formulate our results. First we need an adequate notion of `expansion' relation which is suitable for Definition~\ref{def:dec_bisimulation_expansion} and close to the testing procedure.  We call this notion {\em compound expansion}.

\begin{definition}\label{def:compound_expansion}
Let $\equiv$ be a norm-preserving congruence on processes, and
let ${\mathcal{R}} \subseteq {\equiv}$ be a relation on processes. The {\em compound expansion} wrt.~$\mathcal{R}$ and $\equiv$, denoted by $\mathsf{ComExp}_{\equiv}(\mathcal{R})$, contains all pairs
$(\alpha, \beta)$ which satisfy $\alpha \equiv \beta$ and  the following conditions:
\begin{enumerate}
\item
Whenever $\alpha \stackrel{\ell}{\longrightarrow} \alpha'$, then either
 \begin{enumerate}
 \item
  $\ell = \tau$ and $\alpha' \mathcal{R} \beta$; or

  \item
  $\beta \stackrel{\ell}{\longrightarrow}_{\mathrm{dec}} \beta'$ and $\alpha'\mathcal{R} \beta'$  for some $\beta'$; or

 \item
  $\beta \stackrel{\ell}{\longrightarrow}_{\mathrm{inc}} \beta'$ and $\alpha' \equiv \beta'$  for some $\beta'$; or

  \item
  $\beta \stackrel{\tau}{\longrightarrow}_{\mathrm{dec}} \beta''$ and $\alpha \mathcal{R} \beta''$  for some $\beta''$.
\end{enumerate}

\item
Whenever  $\beta \stackrel{\ell}{\longrightarrow} \beta'$, then either
 \begin{enumerate}
 \item
    $\ell=\tau$ and $\alpha \mathcal{R} \beta'$; or

 \item
  $\alpha \stackrel{\ell}{\longrightarrow}_{\mathrm{dec}} \alpha'$ and $\alpha'\mathcal{R} \beta'$  for some $\alpha'$; or

 \item
  $\alpha \stackrel{\ell}{\longrightarrow}_{\mathrm{inc}} \alpha'$ and $\alpha' \equiv \beta'$  for some $\alpha'$; or

  \item
  $\alpha \stackrel{\tau}{\longrightarrow}_{\mathrm{dec}} \alpha''$ and $\alpha'' \mathcal{R} \beta$  for some $\alpha''$.
   \end{enumerate}
\end{enumerate}
\end{definition}
The correctness of Definition~\ref{def:compound_expansion} is confirmed by the following lemmas.
\begin{lemma}\label{lem:decreasing_branching_bisimulation_oneside_contain}
If $\mathcal{R}$ is a decreasing bisimulation with expansion of $\equiv$ (see Definition~\ref{def:dec_bisimulation_expansion}),
then $\mathcal{R} \subseteq \mathsf{ComExp}_{\equiv}(\mathcal{R})$. In particular,
 ${\simeq^{\equiv}}  \subseteq \mathsf{ComExp}_{\equiv}(\simeq^{\equiv})$.
\end{lemma}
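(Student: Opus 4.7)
The plan is a direct case analysis: take $(\alpha, \beta) \in \mathcal{R}$ and a transition $\alpha \stackrel{\ell}{\longrightarrow} \alpha'$, then apply the clauses of Definition~\ref{def:dec_bisimulation_expansion} and show that in each case one of the four alternatives of Definition~\ref{def:compound_expansion} is satisfied. The symmetric conditions for transitions $\beta \stackrel{\ell}{\longrightarrow} \beta'$ are handled in exactly the same manner. The fact that $\alpha \equiv \beta$ (required by compound expansion) follows immediately from $\mathcal{R} \subseteq {\equiv}$.

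First I would fix $(\alpha, \beta) \in \mathcal{R}$ and $\alpha \stackrel{\ell}{\longrightarrow} \alpha'$, and split on which clause 1(a), 1(b), 1(c) of Definition~\ref{def:dec_bisimulation_expansion} applies. In each clause we get a (possibly empty) silent chain $\beta \stackrel{\tau}{\longrightarrow}_{\mathrm{dec}} \beta^1 \stackrel{\tau}{\longrightarrow}_{\mathrm{dec}} \cdots \stackrel{\tau}{\longrightarrow}_{\mathrm{dec}} \beta^m$ with $\alpha \mathcal{R} \beta^k$ for each $k$, possibly followed by an additional $\ell$-transition. The key dichotomy is on whether this silent chain is empty ($m=0$) or not ($m \geq 1$).

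If $m \geq 1$ then in all three clauses we can set $\beta'' := \beta^1$; since $\beta \stackrel{\tau}{\longrightarrow}_{\mathrm{dec}} \beta''$ and $\alpha \mathcal{R} \beta''$, clause 1(d) of Definition~\ref{def:compound_expansion} is satisfied. If $m = 0$, then the three subcases map one-to-one onto the first three subcases of Definition~\ref{def:compound_expansion}: clause 1(a) of Definition~\ref{def:dec_bisimulation_expansion} (where $\ell = \tau$ and $\alpha' \mathcal{R} \beta$) matches clause 1(a) of Definition~\ref{def:compound_expansion}; clause 1(b) (a single decreasing $\ell$-move $\beta \stackrel{\ell}{\longrightarrow}_{\mathrm{dec}} \beta'$ with $\alpha' \mathcal{R} \beta'$) matches clause 1(b); and clause 1(c) (a single increasing $\ell$-move with $\alpha' \equiv \beta'$) matches clause 1(c). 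The symmetric analysis for $\beta \stackrel{\ell}{\longrightarrow} \beta'$ is identical, using clauses 2(a)--(d) in both definitions. Finally, the statement about $\simeq^{\equiv}$ is the special case obtained from the fact that $\simeq^{\equiv}$ is itself a decreasing bisimulation with expansion of $\equiv$.

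There is no substantive obstacle here; the lemma is essentially a bookkeeping translation between the two formats, whose only subtlety is noticing that a non-empty silent chain is absorbed by the new clause (d) in compound expansion. This clause was added to Definition~\ref{def:compound_expansion} precisely so that the long chain required by Definition~\ref{def:dec_bisimulation_expansion} can be replayed one step at a time (each step witnessed by its first silent transition); in effect the linear matching sequences of the bisimulation definition are repackaged into a local, one-step form that is directly testable by the algorithm of Figure~\ref{Checking_EXP}.
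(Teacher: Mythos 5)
Your proposal is correct and takes essentially the same route as the paper, which simply says to compare the two definitions and observe that the conditions of Definition~\ref{def:dec_bisimulation_expansion} imply those of Definition~\ref{def:compound_expansion}; your case split on $m=0$ versus $m\geq 1$ (with the non-empty chain absorbed by clause (d) via its first silent step) is exactly the bookkeeping the paper leaves implicit.
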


\begin{proof}
This fact is an inference of  Definition~\ref{def:compound_expansion} and Definition~\ref{def:dec_bisimulation_expansion}.  Compare the conditions in these two definitions. When $\mathcal{R}$ is a decreasing bisimulation with expansion of $\equiv$  and $\alpha \mathcal{R} \beta$, $(\alpha,\beta)$ satisfies the conditions in Definition~\ref{def:dec_bisimulation_expansion}. Then we can find that $(\alpha,\beta)$ also satisfies the conditions in Definition~\ref{def:compound_expansion}. \qed
\end{proof}

\begin{lemma}\label{lem:decreasing_branching_bisimulation_twoside_contain}
$\mathsf{ComExp}_{\equiv}({\simeq^{\equiv}})$ is a decreasing bisimulation with expansion of $\equiv$. In particular, $  \mathsf{ComExp}_{\equiv}(\simeq^{\equiv}) \subseteq {\simeq^{\equiv}}$.
\end{lemma}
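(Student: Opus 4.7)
The plan is to verify that $\mathcal{R} := \mathsf{ComExp}_{\equiv}(\simeq^{\equiv})$ satisfies the two conditions of Definition~\ref{def:dec_bisimulation_expansion}, after which the ``In particular'' clause follows because $\simeq^{\equiv}$ is defined to be the \emph{largest} decreasing bisimulation with expansion of $\equiv$. Fix any $(\alpha,\beta)\in\mathcal{R}$ and any transition $\alpha\stackrel{\ell}{\longrightarrow}\alpha'$; the dual clause for $\beta$-transitions is treated symmetrically. Since $\alpha\,\mathcal{R}\,\beta$, Definition~\ref{def:compound_expansion}(1) yields exactly one of four sub-cases on the $\beta$-side.

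Sub-cases (a), (b), (c) are disposed of at once: they match conditions 1(a), 1(b), 1(c) of Definition~\ref{def:dec_bisimulation_expansion} with the empty silent prefix $m=0$. The only adjustment needed is to promote the $\simeq^{\equiv}$-links provided by the compound expansion to $\mathcal{R}$-links, which is permitted by Lemma~\ref{lem:decreasing_branching_bisimulation_oneside_contain} (i.e.\ ${\simeq^{\equiv}}\subseteq\mathcal{R}$). The real work lies in the stuttering sub-case (d), where $\beta\stackrel{\tau}{\longrightarrow}_{\mathrm{dec}}\beta^{1}$ and $\alpha\simeq^{\equiv}\beta^{1}$, so that the pending transition $\alpha\stackrel{\ell}{\longrightarrow}\alpha'$ is not yet matched. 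Applying Lemma~\ref{lem:decreasing_branching_bisimulation_oneside_contain} again gives $\alpha\,\mathcal{R}\,\beta^{1}$, so Definition~\ref{def:compound_expansion}(1) reapplies to the pair $(\alpha,\beta^{1})$ with the same pending $\alpha\stackrel{\ell}{\longrightarrow}\alpha'$. We iterate: whenever the iteration returns one of cases (a)-(c), the accumulated silent prefix $\beta\stackrel{\tau}{\longrightarrow}_{\mathrm{dec}}\beta^{1}\stackrel{\tau}{\longrightarrow}_{\mathrm{dec}}\cdots\stackrel{\tau}{\longrightarrow}_{\mathrm{dec}}\beta^{m}$, optionally followed by a final $\ell$-step, realises the matching chain demanded by 1(a), 1(b), or 1(c) (with $m$ equal to the number of stuttering steps taken); whenever it returns case (d), the chain is extended by one more $\tau$-step and the process continues. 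Every intermediate $\beta^{j}$ satisfies $\alpha\simeq^{\equiv}\beta^{j}$, hence $\alpha\,\mathcal{R}\,\beta^{j}$, giving the intermediate-state constraint of Definition~\ref{def:dec_bisimulation_expansion}.

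The main obstacle is to show that this stuttering recursion terminates, since otherwise the ``chain'' produced would be infinite and fail the finitary matching format of Definition~\ref{def:dec_bisimulation_expansion}. Termination comes from the structure of standard tnBPA. A transition $\beta^{j}\stackrel{\tau}{\longrightarrow}_{\mathrm{dec}}\beta^{j+1}$ fires at the leftmost constant of $\beta^{j}$: if $\beta^{j}=X_{i}\gamma$, then $X_{i}\stackrel{\tau}{\longrightarrow}_{\mathrm{dec}}\delta$ and $\beta^{j+1}=\delta\gamma$. By Lemma~\ref{lem:decreasing_transition} we have $\delta\in\mathbf{C}_{i-1}^{*}$, and because $\stackrel{\tau}{\longrightarrow}_{\mathrm{dec}}$ preserves norm while $\mathtt{norm}(X_{i})\geq 1$, Lemma~\ref{lem:normal_one} plus the tnBPA prohibition on $X\stackrel{\tau}{\longrightarrow}\epsilon$ force $\delta\neq\epsilon$. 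Consequently the leftmost constant index of $\beta^{j+1}$ lies in $\{1,\dots,i-1\}$, strictly smaller than that of $\beta^{j}$. Hence the chain has length at most $n$, and the recursion must exit through case (a), (b), or (c) within that many steps. This completes the verification that $\mathcal{R}$ is a decreasing bisimulation with expansion of $\equiv$; the ``In particular'' inclusion $\mathsf{ComExp}_{\equiv}(\simeq^{\equiv})\subseteq{\simeq^{\equiv}}$ is then immediate from maximality.
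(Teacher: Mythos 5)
Your proposal is correct and follows essentially the same route as the paper: dispatch sub-cases (a)--(c) of the compound expansion as the $m=0$ instances of Definition~\ref{def:dec_bisimulation_expansion} (promoting $\simeq^{\equiv}$-links to $\mathsf{ComExp}_{\equiv}(\simeq^{\equiv})$-links via Lemma~\ref{lem:decreasing_branching_bisimulation_oneside_contain}), iterate on the stuttering sub-case (d) to build the silent prefix, and conclude the inclusion from maximality of $\simeq^{\equiv}$. The one place you go beyond the paper is the termination of the stuttering recursion, which the paper merely asserts ("this case can not happen forever") and which you justify correctly from the standard-form ordering: each decreasing $\tau$-step strictly lowers the index of the leftmost constant by Lemma~\ref{lem:decreasing_transition}, so the chain has length at most $n$.
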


\begin{proof}
At first, remember that  $\mathsf{ComExp}_{\equiv}(\mathcal{R}) \subseteq {\equiv}$ according to Definition~\ref{def:compound_expansion}. This is the prerequisite of $\mathsf{ComExp}_{\equiv}(\mathcal{R})$ being a decreasing bisimulation with expansion of $\equiv$.   This fact will be implicitly used in the remaining proof.

Let $(\alpha, \beta) \in \mathsf{ComExp}_{\equiv}({\simeq^{\equiv}})$ and $\alpha \stackrel{\ell}{\longrightarrow} \alpha'$.    According to the definition of $\mathsf{ComExp}_{\equiv}$ (Definition~\ref{def:compound_expansion}), there are four cases:
 \begin{enumerate}
 \item
  $\ell = \tau$ and $\alpha' \simeq^{\equiv} \beta$.  In this case,  we have $(\alpha', \beta) \in \mathsf{ComExp}_{\equiv}({\simeq^{\equiv}})$ according to Lemma~\ref{lem:decreasing_branching_bisimulation_oneside_contain}. Thus condition~1(a) of Definition~\ref{def:dec_bisimulation_expansion} holds (with $m=0$).

  \item
  $\beta \stackrel{\ell}{\longrightarrow}_{\mathrm{dec}} \beta'$ and $\alpha' \simeq^{\equiv} \beta'$  for some $\beta'$. In this case, we have $(\alpha', \beta') \in \mathsf{ComExp}_{\equiv}({\simeq^{\equiv}})$  according to  Lemma~\ref{lem:decreasing_branching_bisimulation_oneside_contain}. This is the special case of the condition~1(b) of Definition~\ref{def:dec_bisimulation_expansion} in which $m = 0$. Thus condition~1(b) of Definition~\ref{def:dec_bisimulation_expansion} holds.

  \item
  $\beta \stackrel{\ell}{\longrightarrow}_{\mathrm{inc}} \beta'$ and $\alpha' \equiv \beta'$  for some $\beta'$. This is the special case of the condition~1(c) of Definition~\ref{def:dec_bisimulation_expansion} in which $m = 0$. Thus condition~1(c) of Definition~\ref{def:dec_bisimulation_expansion} holds.

  \item
  $\beta \stackrel{\tau}{\longrightarrow}_{\mathrm{dec}} \beta''$ and $\alpha \simeq^{\equiv} \beta''$  for some $\beta''$. In this case, we have $(\alpha, \beta'') \in \mathsf{ComExp}_{\equiv}({\simeq^{\equiv}})$  according to  Lemma~\ref{lem:decreasing_branching_bisimulation_oneside_contain}.  We can now use induction hypothesis on the pair $(\alpha, \beta'')$.  Note that this case can not happen forever.  Finally, case~1 or case~2 or case~3 must happen.
   \begin{itemize}
    \item
   If case~1 happens finally, then we have $\beta  \stackrel{\tau}{\longrightarrow}_{\mathrm{dec}} \beta_1 \stackrel{\tau}{\longrightarrow}_{\mathrm{dec}} \ldots \stackrel{\tau}{\longrightarrow}_{\mathrm{dec}} \beta_i \stackrel{\tau}{\longrightarrow}_{\mathrm{dec}} \ldots \stackrel{\tau}{\longrightarrow}_{\mathrm{dec}} \beta_m$ for some $m > 0$ and $\beta_1,\ldots,\beta_m$  such that $\alpha' \simeq^{\equiv} \beta_m$ and $(\alpha, \beta_k)\in \mathsf{ComExp}_{\equiv}({\simeq^{\equiv}})$ for every $1 \leq i \leq m$.  Now we also have $(\alpha', \beta_m) \in \mathsf{ComExp}_{\equiv}({\simeq^{\equiv}})$ according to  Lemma~\ref{lem:decreasing_branching_bisimulation_oneside_contain}.
   Consequently condition~1(a) of Definition~\ref{def:dec_bisimulation_expansion} in which $m > 0$ holds.

  \item
   If case~2 happens finally, then we get $\beta  \stackrel{\tau}{\longrightarrow}_{\mathrm{dec}} \beta_1 \stackrel{\tau}{\longrightarrow}_{\mathrm{dec}} \ldots \stackrel{\tau}{\longrightarrow}_{\mathrm{dec}} \beta_i \stackrel{\tau}{\longrightarrow}_{\mathrm{dec}} \ldots \stackrel{\tau}{\longrightarrow}_{\mathrm{dec}} \beta_m \stackrel{\ell}{\longrightarrow}_{\mathrm{dec}} \beta'$ for some $m > 0$ and $\beta_1,\ldots,\beta_m$ and $\beta'$ such that $(\alpha' , \beta') \mathsf{ComExp}_{\equiv}({\simeq^{\equiv}})$ and $(\alpha, \beta_k) \in \mathsf{ComExp}_{\equiv}({\simeq^{\equiv}})$ for every $1 \leq i \leq m$, according to  Lemma~\ref{lem:decreasing_branching_bisimulation_oneside_contain}.
   Consequently condition~1(b) of Definition~\ref{def:dec_bisimulation_expansion} in which $m > 0$ holds.

 \item
    If case~3 happens finally, then we get $\beta  \stackrel{\tau}{\longrightarrow}_{\mathrm{dec}} \beta_1 \stackrel{\tau}{\longrightarrow}_{\mathrm{dec}} \ldots \stackrel{\tau}{\longrightarrow}_{\mathrm{dec}} \beta_i \stackrel{\tau}{\longrightarrow}_{\mathrm{dec}} \ldots \stackrel{\tau}{\longrightarrow}_{\mathrm{dec}} \beta_m \stackrel{\ell}{\longrightarrow}_{\mathrm{inc}} \beta'$ for some $m > 0$ and $\beta_1,\ldots,\beta_m$ and $\beta'$ such that $\alpha' \equiv \beta'$ and $(\alpha, \beta_k) \in \mathsf{ComExp}_{\equiv}({\simeq^{\equiv}})$ for every $1 \leq i \leq m$, according to  Lemma~\ref{lem:decreasing_branching_bisimulation_oneside_contain}.
   Consequently condition~1(c) of Definition~\ref{def:dec_bisimulation_expansion} in which $m > 0$ holds. \qed
  \end{itemize}
\end{enumerate}
\end{proof}
From Lemma~\ref{lem:decreasing_branching_bisimulation_oneside_contain} and  Lemma~\ref{lem:decreasing_branching_bisimulation_twoside_contain}, we conclude the following important characterization of $\simeq^{\equiv}$.

\begin{theorem}\label{thm:char_relative_bisimilarity}
$\alpha \simeq^{\equiv} \beta$ if and only if  $(\alpha, \beta) \in \mathsf{ComExp}_{\equiv}({\simeq^{\equiv}})$.
\end{theorem}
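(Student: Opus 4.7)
The plan is to derive the theorem as an immediate pointwise repackaging of Lemma~\ref{lem:decreasing_branching_bisimulation_oneside_contain} together with Lemma~\ref{lem:decreasing_branching_bisimulation_twoside_contain}. Together these two inclusions say that $\simeq^{\equiv}$ is a fixed point of the operator $\mathsf{ComExp}_{\equiv}(-)$, i.e.\ ${\simeq^{\equiv}} = \mathsf{ComExp}_{\equiv}(\simeq^{\equiv})$, which is exactly the biconditional claimed in the theorem.

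For the forward direction, I would argue as follows. Suppose $\alpha \simeq^{\equiv} \beta$. By definition, $\simeq^{\equiv}$ is the largest decreasing bisimulation with expansion of $\equiv$, and in particular $\simeq^{\equiv}$ is itself such a relation. Applying Lemma~\ref{lem:decreasing_branching_bisimulation_oneside_contain} with the choice $\mathcal{R} = {\simeq^{\equiv}}$ yields ${\simeq^{\equiv}} \subseteq \mathsf{ComExp}_{\equiv}(\simeq^{\equiv})$, so the pair $(\alpha,\beta)$ lies in $\mathsf{ComExp}_{\equiv}(\simeq^{\equiv})$ as required.

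For the backward direction, suppose $(\alpha,\beta) \in \mathsf{ComExp}_{\equiv}(\simeq^{\equiv})$. By Lemma~\ref{lem:decreasing_branching_bisimulation_twoside_contain}, the relation $\mathsf{ComExp}_{\equiv}(\simeq^{\equiv})$ is itself a decreasing bisimulation with expansion of $\equiv$. Since $\simeq^{\equiv}$ is by construction the largest such relation, we obtain the reverse inclusion $\mathsf{ComExp}_{\equiv}(\simeq^{\equiv}) \subseteq {\simeq^{\equiv}}$, and so $\alpha \simeq^{\equiv} \beta$ follows.

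At this stage there is no genuine obstacle: the conceptual work has already been absorbed into the two lemmas. The subtle point — the induction on iterated decreasing silent transitions that appears in case~4 of the proof of Lemma~\ref{lem:decreasing_branching_bisimulation_twoside_contain}, which is exactly what lets one collapse a $\mathsf{ComExp}_{\equiv}$ step into a condition in Definition~\ref{def:dec_bisimulation_expansion} with arbitrary $m$ — has already been handled there. What the theorem adds is the convenient reformulation: to verify $\alpha \simeq^{\equiv} \beta$ one only needs to check the local single-step conditions of Definition~\ref{def:compound_expansion}, which is precisely the shape of the test executed in Fig.~\ref{Checking_EXP}.
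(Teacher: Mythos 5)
Your proof is correct and follows exactly the paper's route: the paper likewise obtains the theorem directly from the two displayed inclusions, namely ${\simeq^{\equiv}} \subseteq \mathsf{ComExp}_{\equiv}(\simeq^{\equiv})$ from Lemma~\ref{lem:decreasing_branching_bisimulation_oneside_contain} and $\mathsf{ComExp}_{\equiv}(\simeq^{\equiv}) \subseteq {\simeq^{\equiv}}$ from Lemma~\ref{lem:decreasing_branching_bisimulation_twoside_contain}. Your observation that all the real work (in particular the induction on iterated decreasing silent transitions) lives in the second lemma is accurate.
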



\begin{remark}
The inverse of Lemma~\ref{lem:decreasing_branching_bisimulation_oneside_contain} also holds. That is,
If a relation $\mathcal{R}$ satisfies $\mathcal{R} \subseteq \mathsf{ComExp}_{\equiv}(\mathcal{R})$, then $\mathcal{R}$ is a decreasing bisimulation with expansion of $\equiv$.    According to this fact and Theorem~\ref{thm:char_relative_bisimilarity}, the congruence ${\simeq^{\equiv}}$ is the greatest fixpoint of
$\mathsf{ComExp}_{\equiv}$. Thus the congruence ${\simeq^{\equiv}}$ can be completely characterized via the operation $\mathsf{ComExp}_{\equiv}$.

Readers may have noticed that the conditions in Definition~\ref{def:compound_expansion} are quite different from the conditions in Definition~\ref{def:dec_bisimulation_expansion}. In Definition~\ref{def:dec_bisimulation_expansion}
we lay stress on  getting a congruence relation from a congruence relation.    On the other hand,  in Definition~\ref{def:compound_expansion},  the purpose is to give a characterization which makes the conditions easy to check in the algorithm. We do not need $\mathsf{ComExp}_{\equiv}(\mathcal{R})$ to satisfy a lot of  favourite properties. The difference between these two definitions must be highlighted, because it does not happen in the case of realtime $\mathrm{nBPA}$, and the existence of silent actions do make things difficult. However, according to Theorem~\ref{thm:char_relative_bisimilarity}, $\mathsf{ComExp}_{\equiv}({\simeq^{\equiv}})$ is definitely a favourite congruence.
\end{remark}

\subsection{The Correctness of the Algorithm }\label{subsec:Expansion_correctness}

Theorem~\ref{thm:char_relative_bisimilarity} gives us a potential way to get an implementation of the refinement operation.   That is, it provides a potential way to implement $\mathtt{lpftest}_{(\mathbf{P}', \mathbf{E}')}(X_i, X_j)$ at line~10.

In the following discussion, for convenience we presuppose that $\stackrel{\mathcal{B}'}{\equiv}$ is equal to ${\simeq^{\stackrel{\mathcal{B}}{\equiv}}}$.  We will develop more properties of $\stackrel{\mathcal{B}'}{\equiv}$.

According to Theorem~\ref{thm:char_relative_bisimilarity}, checking  $X_i \stackrel{\mathcal{B}'}{\equiv} \delta$ is equivalent to checking $(X_i, \delta) \in \mathsf{ComExp}_{\stackrel{\mathcal{B}}{\equiv}}(\stackrel{\mathcal{B}'}{\equiv})$.
Note at first that $ \mathsf{ComExp}_{\stackrel{\mathcal{B}}{\equiv}}(\stackrel{\mathcal{B}'}{\equiv})$ concerns relation $\mathcal{B}'$ itself, and $\mathcal{B}'$ is not completely known at the moment. Fortunately, we have the following two critical observations.
\begin{description}
\item[Observation~1.]
At the moment of testing on the pair $(X_i,\delta)$,  we have already known the base $\mathcal{B}$ and a profile of  $\mathcal{B}'$ whose constances with indexes less than $i$. Thus we can suppose that $\mathtt{dcmp}_{\mathcal{B}} (X_i)$ is known for every $i$ such that $1 \leq i \leq n$, and $\mathtt{dcmp}_{\mathcal{B}'} (X_j)$ is known for every $j$ such that $1 \leq j < i$.  Therefore we are capable to answer  whether $\alpha \stackrel{\mathcal{B}}{\equiv} \beta$ for any $\alpha,\beta \in \mathbf{C}^{*}$, and whether $\alpha \stackrel{\mathcal{B}'}{\equiv} \beta$ for any $\alpha,\beta \in \mathbf{C}_{i-1}^{*}$.

\item[Observation~2.]
 Whenever decreasing transitions are concerned, say $X_i \stackrel{\ell}{\longrightarrow}_{\mathrm{dec}} \beta$,  according to Lemma~\ref{lem:decreasing_transition}, we have $\beta \in \mathbf{C}_{i-1}^{*}$.
\end{description}
With these two observations, we can develop the efficient procedure for checking $X_i \stackrel{\mathcal{B}'}{\equiv} \delta$ for realtime system (remember Theorem~\ref{lem:char_relative_bisimilarity_realtime}).

But at present, the situation is more complicated. In the presence of silent actions,
the above two observations cannot directly lead to the efficient procedure.   The consecutive silent actions do cause inconvenience. Investigate the following scenario. Assume we want to show $X_i \stackrel{\mathcal{B}'}{\equiv} \delta$ and let $X_i \stackrel{\ell}{\longrightarrow} \alpha$ be a transition which is required to be matched by $\delta \stackrel{\tau}{\longrightarrow}_{\mathrm{dec}} \beta'' \stackrel{\tau}{\longrightarrow}_{\mathrm{dec}} \ldots \stackrel{\ell}{\longrightarrow} \beta$ with $X_i \stackrel{\mathcal{B}'}{\equiv} \beta''$. In this situation we still do not know  whether $X_i \stackrel{\mathcal{B}'}{\equiv} \beta''$ because $\mathtt{dcmp}_{\mathcal{B}'} (X_i)$ still needs computing.

To handle this difficulty, we need some other techniques. Before doing this,  we notice the following critical observation:

\begin{description}
\item[Observation~3.]
According to the fact of $\delta \in \mathbf{P}'^{*}$ and Lemma~\ref{lem:tau_prime_string}, $\delta$ has no transition of the form $\delta \stackrel{\tau}{\longrightarrow} \beta'$ which satisfies $\delta \stackrel{\mathcal{B}'}{\equiv} \beta'$.
\end{description}

Whenever $X_i \stackrel{\mathcal{B}'}{\equiv} \delta$,
the above critical observation gives rise to the following lemma.
\begin{lemma}\label{lem:critical}
Assume ${\stackrel{\mathcal{B}'}{\equiv}} = {\simeq^{\stackrel{\mathcal{B}}{\equiv}}}$.  When $X_i \stackrel{\mathcal{B}'}{\equiv} \delta$ and  $\delta \stackrel{\tau}{\longrightarrow}_{\mathrm{def}} \beta$, then  we do \textbf{not} have $X_i \stackrel{\mathcal{B}'}{\equiv} \beta$.
\end{lemma}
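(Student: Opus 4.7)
The plan is to derive a contradiction from the joint assumptions $X_i \stackrel{\mathcal{B}'}{\equiv} \delta$ and $X_i \stackrel{\mathcal{B}'}{\equiv} \beta$ by exploiting that the right-hand side $\delta$ is, by construction, a string of primes with respect to $\mathcal{B}'$, and invoking Lemma~\ref{lem:tau_prime_string}.

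First I would recall the form of $\delta$ appearing in the testing step: in the algorithm one sets $\delta = X_j \cdot \mathtt{sffx}(\mathtt{norm}(X_i)-\mathtt{norm}(X_j); \mathtt{dcmp}_{\mathcal{B}'}(\alpha_i))$, and as remarked after Equation~(\ref{eqn:testing_Xi_alpha}), $\delta \in (\mathbf{P}' \cap \mathbf{C}_{i-1})^{*}$. Thus $\delta \in \mathbf{P}'^{*}$, which is the hypothesis needed to invoke Lemma~\ref{lem:tau_prime_string}. Applying that lemma to the decreasing transition $\delta \stackrel{\tau}{\longrightarrow}_{\mathrm{dec}} \beta$ yields $\delta \not\stackrel{\mathcal{B}'}{\equiv} \beta$.

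Next I would argue by contradiction: suppose $X_i \stackrel{\mathcal{B}'}{\equiv} \beta$. Since we also have $X_i \stackrel{\mathcal{B}'}{\equiv} \delta$, and $\stackrel{\mathcal{B}'}{\equiv} = {\simeq^{\stackrel{\mathcal{B}}{\equiv}}}$ is an equivalence (indeed a norm-preserving congruence, as established in Section~\ref{subsec:Expansion_in_general}), transitivity and symmetry give $\delta \stackrel{\mathcal{B}'}{\equiv} \beta$. This directly contradicts the conclusion of the previous paragraph, so $X_i \not\stackrel{\mathcal{B}'}{\equiv} \beta$, as required.

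There is essentially no obstacle here beyond verifying that $\delta \in \mathbf{P}'^{*}$ at the moment the test is performed; this is a consequence of the construction of $\delta$ in the algorithm (Section~\ref{subsec:candidates}) together with the fact that, by the time $X_i$ is processed, every $X_k$ with $k < i$ either lies in $\mathbf{P}'$ or has already been assigned an equation $X_k = \alpha_k$ in $\mathbf{E}'$ with $\alpha_k \in \mathbf{P}'^{*}$, so that $\mathtt{dcmp}_{\mathcal{B}'}(\alpha_i) \in \mathbf{P}'^{*}$ and hence every suffix of it is in $\mathbf{P}'^{*}$. The substantive content of the lemma is thus entirely packaged inside Lemma~\ref{lem:tau_prime_string}; the role of the present lemma is simply to lift that observation from $\delta$ to any process $X_i$ equivalent to $\delta$ modulo $\stackrel{\mathcal{B}'}{\equiv}$, which is immediate once one has transitivity of $\stackrel{\mathcal{B}'}{\equiv}$.
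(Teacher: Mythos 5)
Your proof is correct and follows exactly the route the paper intends: the paper derives this lemma from its ``Observation~3'' (that $\delta \in \mathbf{P}'^{*}$ together with Lemma~\ref{lem:tau_prime_string} rules out $\delta \stackrel{\mathcal{B}'}{\equiv} \beta$ for any decreasing silent transition of $\delta$), and then concludes by transitivity of $\stackrel{\mathcal{B}'}{\equiv}$, just as you do. Your additional check that $\delta$ really lies in $\mathbf{P}'^{*}$ at the moment of testing is a welcome piece of diligence that the paper only asserts in passing.
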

According to Lemma~\ref{lem:critical}, we can draw the following two assertions.
First, when transition $\delta \stackrel{\tau}{\longrightarrow}_{\mathrm{def}} \beta$ is matched by $X_i$, the vacantly matching cannot happen.
Second, when transition $X_i \stackrel{\ell}{\longrightarrow} \alpha$ is matched by $\delta$, the `state-preserving' silent transitions cannot occurred.

Within these two assertions, Theorem~\ref{thm:char_relative_bisimilarity} can be written as follows.

\begin{theorem}\label{thm:exp_relative_membership_simplified}
Let $\mathcal{B} = (\mathbf{P}, \mathbf{E})$ and $\mathcal{B}' = (\mathbf{P}', \mathbf{E}')$ be two decomposition bases  which validate  ${\stackrel{\mathcal{B}'}{\equiv}} = {\simeq^{\stackrel{\mathcal{B}}{\equiv}}}$.    Assume
$\delta \in {\mathbf{P}'}_{i-1}^{*}$, then  $X_i \stackrel{\mathcal{B}'}{\equiv} \delta$ if and only if $X_i \stackrel{\mathcal{B}}{\equiv} \delta$ and the following conditions are satisfied:
\begin{enumerate}
\item
Whenever $X_i \stackrel{\ell}{\longrightarrow} \alpha$, then either
\begin{enumerate}
\item
$\ell = \tau$ and  $\alpha  \stackrel{\mathcal{B}'}{\equiv} \delta$; or

\item
$\delta \stackrel{\ell}{\longrightarrow}_{\mathrm{dec}}   \beta$ and $\alpha \stackrel{\mathcal{B}'}{\equiv} \beta$ for some $\beta$; or

\item
$\delta \stackrel{\ell}{\longrightarrow}_{\mathrm{inc}}   \beta$ and $\alpha \stackrel{\mathcal{B}}{\equiv} \beta$ for some $\beta$.
\end{enumerate}

\item
Either $X_i \stackrel{\tau}{\longrightarrow} \alpha$ and $\alpha \stackrel{\mathcal{B}'}{\equiv} \delta$ for some $\alpha$;

or, whenever $\delta \stackrel{\ell}{\longrightarrow} \beta$, either
\begin{enumerate}
\item
 $X_i \stackrel{\ell}{\longrightarrow}_{\mathrm{dec}}  \alpha$ and $\alpha \stackrel{\mathcal{B}'}{\equiv} \beta$ for some $\alpha$, or

\item
$X_i \stackrel{\ell}{\longrightarrow}_{\mathrm{inc}}  \alpha$ and $\alpha \stackrel{\mathcal{B}}{\equiv} \beta$ for some $\alpha$.
\end{enumerate}
\end{enumerate}
\end{theorem}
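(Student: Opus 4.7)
The plan is to derive Theorem~\ref{thm:exp_relative_membership_simplified} from the compound-expansion characterization (Theorem~\ref{thm:char_relative_bisimilarity}), whose conditions already eliminate the iterated $\tau$-chains of Definition~\ref{def:dec_bisimulation_expansion}. What remains is to collapse the four-case list of Definition~\ref{def:compound_expansion} into the three-case list for condition~1 and the disjunctive form for condition~2 of the theorem. The key structural fact enabling this reduction is the hypothesis $\delta \in \mathbf{P}'^{*}_{i-1}$: by Lemma~\ref{lem:tau_prime_string} no $\tau$-decreasing transition of $\delta$ can be state-preserving modulo $\stackrel{\mathcal{B}'}{\equiv}$.

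For the forward direction, assume $X_i \stackrel{\mathcal{B}'}{\equiv} \delta$, i.e.\ $X_i \simeq^{\stackrel{\mathcal{B}}{\equiv}} \delta$. Then $X_i \stackrel{\mathcal{B}}{\equiv} \delta$ by Lemma~\ref{lem:decreasing_bisimulation_inequality}, and Theorem~\ref{thm:char_relative_bisimilarity} gives $(X_i,\delta) \in \mathsf{ComExp}_{\stackrel{\mathcal{B}}{\equiv}}(\stackrel{\mathcal{B}'}{\equiv})$. For condition~1, inspect the four subcases of Definition~\ref{def:compound_expansion}(1): subcase~(d) would supply $\delta \stackrel{\tau}{\longrightarrow}_{\mathrm{dec}} \beta''$ with $X_i \stackrel{\mathcal{B}'}{\equiv} \beta''$, hence $\delta \stackrel{\mathcal{B}'}{\equiv} \beta''$ by transitivity, contradicting Lemma~\ref{lem:tau_prime_string}. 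The remaining subcases (a)--(c) match the theorem's clauses verbatim. For condition~2, first test whether some transition $\delta \stackrel{\ell}{\longrightarrow} \beta$ is matched via subcase~(d); if so, the accompanying witness $\alpha$ with $X_i \stackrel{\tau}{\longrightarrow}_{\mathrm{dec}} \alpha$ and $\alpha \stackrel{\mathcal{B}'}{\equiv} \delta$ provides the first disjunct. Otherwise every transition of $\delta$ is matched by (a), (b), or (c); subcase~(a) can be ruled out because $\ell=\tau$ with $X_i \stackrel{\mathcal{B}'}{\equiv} \beta$ forces $\mathtt{norm}(\beta)=\mathtt{norm}(\delta)$ by norm-preservation, so $\delta \stackrel{\tau}{\longrightarrow}_{\mathrm{dec}} \beta$, and then $\delta \stackrel{\mathcal{B}'}{\equiv} \beta$ contradicts Lemma~\ref{lem:tau_prime_string} once more. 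Only (b) and (c) survive, which is the second disjunct.

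The backward direction is a direct reversal. Assume $X_i \stackrel{\mathcal{B}}{\equiv} \delta$ together with the two simplified conditions; I verify $(X_i,\delta) \in \mathsf{ComExp}_{\stackrel{\mathcal{B}}{\equiv}}(\stackrel{\mathcal{B}'}{\equiv})$ and appeal to Theorem~\ref{thm:char_relative_bisimilarity}. The three clauses of condition~1 embed directly into subcases (a)--(c) of Definition~\ref{def:compound_expansion}(1). For condition~2, if the first disjunct holds the common witness $\alpha$ satisfies subcase~(d) for every $\delta$-transition simultaneously; if the second disjunct holds, every $\delta$-transition is matched by subcase~(b) or~(c).

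The main obstacle is justifying the elimination of subcase~1(d) and subcase~2(a) of Definition~\ref{def:compound_expansion}. Both hinge crucially on $\delta \in \mathbf{P}'^{*}_{i-1}$ via Lemma~\ref{lem:tau_prime_string}; for composite $\delta$ the argument would break down, which is precisely why the algorithm only ever tests candidates already in prime decomposition. A minor subtlety is that the $\tau$-transition in the first disjunct of condition~2 is not explicitly declared decreasing in the statement, but norm-preservation of $\stackrel{\mathcal{B}'}{\equiv}$ forces $\mathtt{norm}(\alpha) = \mathtt{norm}(\delta) = \mathtt{norm}(X_i)$, so the transition is automatically decreasing, matching exactly subcase~(d) of Definition~\ref{def:compound_expansion}(2).
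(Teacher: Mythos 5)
Your proof is correct and follows essentially the same route as the paper: apply the compound-expansion characterization (Theorem~\ref{thm:char_relative_bisimilarity}) and then use the primality of $\delta$ (Lemma~\ref{lem:tau_prime_string}, packaged in the paper as Lemma~\ref{lem:critical}) to eliminate subcases 1(d) and 2(a) of Definition~\ref{def:compound_expansion}. If anything, you are more careful than the paper's terse argument, e.g.\ in observing that norm-preservation forces the relevant $\tau$-transitions to be decreasing so that Lemma~\ref{lem:tau_prime_string} actually applies.
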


\begin{proof}
Remember that Theorem~\ref{thm:char_relative_bisimilarity} confirms that $X_i \stackrel{\mathcal{B}'}{\equiv} \delta$ if and only if  $(X_i, \delta) \in \mathsf{ComExp}_{\equiv}(\stackrel{\mathcal{B}'}{\equiv})$.  According to Definition~\ref{def:compound_expansion},
$X_i \stackrel{\mathcal{B}'}{\equiv} \delta$ if and only if $X_i \stackrel{\mathcal{B}}{\equiv} \delta$ and:
\begin{enumerate}
\item
Whenever $X_i \stackrel{\ell}{\longrightarrow} \alpha$, then either
 \begin{enumerate}
 \item
  $\ell = \tau$ and $\alpha \stackrel{\mathcal{B}'}{\equiv} \delta$; or

  \item
  $\delta \stackrel{\ell}{\longrightarrow}_{\mathrm{dec}} \beta$ and $\alpha \stackrel{\mathcal{B}'}{\equiv} \beta$  for some $\beta$; or

 \item
  $\delta \stackrel{\ell}{\longrightarrow}_{\mathrm{inc}} \beta$ and $\alpha' \stackrel{\mathcal{B}}{\equiv} \beta'$  for some $\beta$; or

  \item
  $\delta \stackrel{\tau}{\longrightarrow}_{\mathrm{dec}} \beta'$ and $X_i \stackrel{\mathcal{B}'}{\equiv} \beta'$  for some $\beta'$.
\end{enumerate}

\item
Whenever  $\delta \stackrel{\ell}{\longrightarrow} \beta$, then either
 \begin{enumerate}
 \item
    $\ell=\tau$ and $X_i \stackrel{\mathcal{B}'}{\equiv} \beta$; or

 \item
  $X_i \stackrel{\ell}{\longrightarrow}_{\mathrm{dec}} \alpha$ and $\alpha \stackrel{\mathcal{B}'}{\equiv} \beta$  for some $\alpha$; or

 \item
  $X_i \stackrel{\ell}{\longrightarrow}_{\mathrm{inc}} \alpha$ and $\alpha \stackrel{\mathcal{B}}{\equiv} \beta$  for some $\alpha$; or

  \item
  $X_i \stackrel{\tau}{\longrightarrow}_{\mathrm{dec}} \alpha'$ and $\alpha' \stackrel{\mathcal{B}'}{\equiv} \beta$  for some $\alpha'$.
   \end{enumerate}
\end{enumerate}
Now making use of Lemma~\ref{lem:critical}, we can draw the conclusion that the case~1(d) and case~2(a) cannot happen! Now the conditions above become the conditions in Theorem~\ref{thm:exp_relative_membership_simplified}. \qed
\end{proof}

Comparing with Theorem~\ref{thm:char_relative_bisimilarity}, Theorem~\ref{thm:exp_relative_membership_simplified} has a great advantage.  When we need to determine whether $X_i  \stackrel{\mathcal{B}'}{\equiv} \delta$ or not, according to Theorem~\ref{thm:exp_relative_membership_simplified}, we only require to checking several conditions which depends only on $\mathcal{B}$ and the profile of $\mathcal{B}'$ in which only constants with index less than $i$ are involved. Thus we can use this fact to construct $\mathcal{B}'$ in the `bottom-up' way, which is exactly the procedure described in Fig.~\ref{Checking_EXP}. The proof of correctness of the algorithm is now finished.

\section{Remark}\label{sec:remark}

\subsection{Other Bisimilarities On Totally Normed $\mathrm{BPA}$}
Comparing with branching bisimilarity, other bisimilarities tend to be more flexible so that they are currently known to be NP-hard on $\mathrm{tnBPA}$.  On the occasion of weak bisimilarity, there are two different problems deserving to consideration. First, it is no longer decompositional, as is shown in Example~\ref{example:weak_bisimilarity_decom}.  Second, it is capable to encode NP-complete problem due to its more flexible matching style.

There is a variant of weak bisimilarity called delay bisimilarity, which is still decompositional on $\mathrm{tnBPA}$. Using unique decomposition property, we can confirm that delay bisimilarity is in PSPACE. The way is barely to guess a decomposition base $\mathcal{B} = (\mathbf{P}, \mathbf{E})$ and check that $\stackrel{\mathcal{B}}{\equiv}$ a delay bisimulation.  Still, the bisimulation property needs to carefully defined.  Anyway, it is technically much easier than checking branching bisimilarity.

Finally we conjecture that deciding bisimilarities other than branching bisimilarity on tnBPA is PSPACE complete.

\subsection{On Branching Bisimilarity Checking}

In the situation that silent transitions are treated unobservable, branching bisimilarity arouses interest of researchers. In most of the cases, previous decidability and complexity results for weak bisimilarity still hold for branching bisimilarity. There are two remarkable exceptions. The decidability of branching bisimilarity is established by Czerwi\'{n}ski, Hofman and Lasota~\cite{DBLP:conf/concur/CzerwinskiHL11} on normed BPP, and by Fu~\cite{DBLP:conf/icalp/Fu13} on  normed BPA.  In these two cases, decidability of weak bisimilarity is unknown.  Recently, we have proven that branching (and weak) bisimilarity is undecidable on every model above BPA and BPP in the PRS hierarchy even in the normed case~\cite{DBLP:conf/icalp/YinFHHT14}. It is believed that branching bisimilarity is easier to decide than weak bisimilarity. Currently, there is no real instance to support this belief. This paper provides an interesting instance. We expect that more instances will be discovered in the future.

\vspace*{4.5mm}\noindent {\bf Acknowledgement}.  The author would like to thank S{\l}awomir Lasota for letting me know the work of Czerwi\'{n}ski~\cite{CzerwinskiPhD}, the current fastest algorithm for checking strong bisimilarity on normed $\mathrm{BPA}$;
to thank the members of BASICS for their helpful discussions on related topics.


\bibliographystyle{plain}
\bibliography{arxiv}

\begin{thebibliography}{10}

\bibitem{DBLP:conf/soda/AlstrupBR00}
Stephen Alstrup, Gerth~St{\o}lting Brodal, and Theis Rauhe.
\newblock Pattern matching in dynamic texts.
\newblock In {\em SODA}, pages 819--828, 2000.

\bibitem{Baeten:1991:PA:103272}
J.~C.~M. Baeten and W.~P. Weijland.
\newblock {\em Process Algebra}.
\newblock Cambridge University Press, New York, NY, USA, 1990.

\bibitem{BaetenBergstraKlop1987}
Jos C.~M. Baeten, Jan~A. Bergstra, and Jan~Willem Klop.
\newblock Decidability of bisimulation equivalence for processes generating
  context-free languages.
\newblock In {\em PARLE (2)}, pages 94--111, 1987.

\bibitem{DBLP:journals/jacm/BaetenBK93}
Jos C.~M. Baeten, Jan~A. Bergstra, and Jan~Willem Klop.
\newblock Decidability of bisimulation equivalence for processes generating
  context-free languages.
\newblock {\em J. ACM}, 40(3):653--682, 1993.

\bibitem{DBLP:journals/ipl/Basten96}
Twan Basten.
\newblock Branching bisimilarity is an equivalence indeed!
\newblock {\em Inf. Process. Lett.}, 58(3):141--147, 1996.

\bibitem{Burkart00verificationon}
Olaf Burkart, Didier Caucal, Faron Moller, and Bernhard Steffen.
\newblock Verification on infinite structures, 2000.

\bibitem{CzerwinskiPhD}
Wojciech Czerwinski.
\newblock {\em Partially-commutative context-free graphs}.
\newblock PhD thesis, University of Warsaw, 2012.

\bibitem{DBLP:conf/concur/CzerwinskiHL11}
Wojciech Czerwinski, Piotr Hofman, and Slawomir Lasota.
\newblock Decidability of branching bisimulation on normed commutative
  context-free processes.
\newblock In {\em CONCUR}, pages 528--542, 2011.

\bibitem{DBLP:journals/corr/CzerwinskiJ14}
Wojciech Czerwinski and Petr Jancar.
\newblock Branching bisimilarity of normed {BPA} processes is in {NEXPTIME}.
\newblock {\em CoRR}, abs/1407.0645, 2014.

\bibitem{DBLP:conf/fsttcs/CzerwinskiL10}
Wojciech Czerwinski and Slawomir Lasota.
\newblock Fast equivalence-checking for normed context-free processes.
\newblock In {\em FSTTCS}, pages 260--271, 2010.

\bibitem{DBLP:conf/icalp/Fu13}
Yuxi Fu.
\newblock Checking equality and regularity for normed {BPA} with silent moves.
\newblock In {\em Automata, Languages, and Programming - 40th International
  Colloquium, {ICALP} 2013, Riga, Latvia, July 8-12, 2013, Proceedings, Part
  {II}}, pages 238--249, 2013.

\bibitem{DBLP:journals/entcs/Hirshfeld96}
Yoram Hirshfeld.
\newblock Bisimulation trees and the decidability of weak bisimulations.
\newblock {\em Electr. Notes Theor. Comput. Sci.}, 5:2--13, 1996.

\bibitem{DBLP:journals/tcs/HirshfeldJM96}
Yoram Hirshfeld, Mark Jerrum, and Faron Moller.
\newblock A polynomial algorithm for deciding bisimilarity of normed
  context-free processes.
\newblock {\em Theor. Comput. Sci.}, 158(1{\&}2):143--159, 1996.

\bibitem{DBLP:journals/mscs/HirshfeldJM96}
Yoram Hirshfeld, Mark Jerrum, and Faron Moller.
\newblock A polynomial-time algorithm for deciding bisimulation equivalence of
  normed basic parallel processes.
\newblock {\em Mathematical Structures in Computer Science}, 6(3):251--259,
  1996.

\bibitem{Hopcroft:1990:IAT:574901}
John~E. Hopcroft and Jeffrey~D. Ullman.
\newblock {\em Introduction To Automata Theory, Languages, And Computation}.
\newblock Addison-Wesley Longman Publishing Co., Inc., Boston, MA, USA, 1st
  edition, 1990.

\bibitem{DBLP:conf/cav/Huttel91}
Hans H{\"u}ttel.
\newblock Silence is golden: Branching bisimilarity is decidable for
  context-free processes.
\newblock In {\em CAV}, pages 2--12, 1991.

\bibitem{DBLP:conf/lics/HuttelS91}
Hans H{\"u}ttel and Colin Stirling.
\newblock Actions speak louder than words: Proving bisimilarity for
  context-free processes.
\newblock In {\em LICS}, pages 376--386, 1991.

\bibitem{DBLP:journals/tcs/HuynhT94}
Dung~T. Huynh and Lu~Tian.
\newblock Deciding bisimilarity of normed context-free processes is in
  $\sigma^p_2$.
\newblock {\em Theor. Comput. Sci.}, 123(2):183--197, 1994.

\bibitem{DBLP:conf/concur/JancarM99}
Petr Jan\v{c}ar and Faron Moller.
\newblock Techniques for decidability and undecidability of bisimilarity.
\newblock In {\em CONCUR}, pages 30--45, 1999.

\bibitem{Kucera2006}
Anton\'{\i}n Ku\v{c}era and Petr Jan\v{c}ar.
\newblock {Equivalence-checking on infinite-state systems: Techniques and
  results}.
\newblock {\em Theory and Practice of Logic Programming}, 6(201), 2006.

\bibitem{DBLP:conf/mfcs/LasotaR06}
Slawomir Lasota and Wojciech Rytter.
\newblock Faster algorithm for bisimulation equivalence of normed context-free
  processes.
\newblock In {\em MFCS}, pages 646--657, 2006.

\bibitem{DBLP:conf/dagstuhl/Lifshits06}
Yury Lifshits.
\newblock Solving classical string problems an compressed texts.
\newblock In {\em Combinatorial and Algorithmic Foundations of Pattern and
  Association Discovery}, 2006.

\bibitem{DBLP:journals/gcc/Lohrey12}
Markus Lohrey.
\newblock Algorithmics on slp-compressed strings: A survey.
\newblock {\em Groups Complexity Cryptology}, 4(2):241--299, 2012.

\bibitem{Mayr2005}
Richard Mayr.
\newblock {Weak bisimilarity and regularity of context-free processes is
  EXPTIME-hard}.
\newblock {\em Theoretical Computer Science}, 330(3):553--575, February 2005.

\bibitem{DBLP:journals/algorithmica/MehlhornSU97}
Kurt Mehlhorn, R.~Sundar, and Christian Uhrig.
\newblock Maintaining dynamic sequences under equality tests in polylogarithmic
  time.
\newblock {\em Algorithmica}, 17(2):183--198, 1997.

\bibitem{Milner1989}
Robin Milner.
\newblock {\em Communication and concurrency}.
\newblock PHI Series in computer science. Prentice Hall, 1989.

\bibitem{DBLP:conf/cpm/MiyazakiST97}
Masamichi Miyazaki, Ayumi Shinohara, and Masayuki Takeda.
\newblock An improved pattern matching algorithm for strings in terms of
  straight-line programs.
\newblock In {\em CPM}, pages 1--11, 1997.

\bibitem{DBLP:journals/iandc/MollerSS04}
Faron Moller, Scott~A. Smolka, and Jir\'{\i} Srba.
\newblock On the computational complexity of bisimulation, redux.
\newblock {\em Inf. Comput.}, 194(2):129--143, 2004.

\bibitem{Srba2004}
Ji\v{r}\'{\i} Srba.
\newblock {Roadmap of infinite results}.
\newblock {\em Current Trends In Theoretical Computer Science}, 2(201), 2004.

\bibitem{DBLP:journals/entcs/Stribrna98}
Jitka St\v{r}\'{\i}brn{\'a}.
\newblock Hardness results for weak bisimilarity of simple process algebras.
\newblock {\em Electr. Notes Theor. Comput. Sci.}, 18:179--190, 1998.

\bibitem{GlabbeekW96}
Rob~J. van Glabbeek and W.~P. Weijland.
\newblock Branching time and abstraction in bisimulation semantics.
\newblock {\em J. ACM}, 43(3):555--600, 1996.

\bibitem{DBLP:conf/icalp/YinFHHT14}
Qiang Yin, Yuxi Fu, Chaodong He, Mingzhang Huang, and Xiuting Tao.
\newblock Branching bisimilarity checking for prs.
\newblock In {\em ICALP (2)}, pages 363--374, 2014.

\end{thebibliography}


\newpage

\end{document}